\newcommand{\F}{\mathcal{I}}
\newcommand{\s}{\mathfrak{H}}
\newcommand{\colspace}{\mathrm{column}\mbox{-}\mathrm{space}}
\newcommand{\rowspace}{\mathrm{row}\mbox{-}\mathrm{space}}
\newcommand{\rk}{\mathrm{rank}}
\newcommand{\train}{\mathrm{train}}
\newcommand{\test}{\mathrm{test}}
\numberwithin{equation}{section}
\theoremstyle{plain}
\newtheorem{theorem}{Theorem}[section]
\newtheorem{corollary}[theorem]{Corollary}
\newtheorem{proposition}[theorem]{Proposition}
\newcommand{\R}{\mathbb{R}}
\newcommand{\A}{\mathcal{A}}
\newcommand{\Sp}{\mathbb{S}}
\DeclareMathOperator*{\argmin}{\arg\!\min}
\newcommand{\M}{\mathcal{M}}
\newcommand{\B}{\mathcal{B}}
\title{Interpreting Latent Variables in Factor Models via Convex Optimization}
\author{Armeen Taeb $^\dag$ and Venkat Chandrasekaran $^\dag, ^\ddag$
\thanks{Email: ataeb@caltech.edu, venkatc@caltech.edu} \vspace{.25in} \\ $^\dag$ Department of Electrical Engineering \\  $^\ddag$ Department of Computing and Mathematical Sciences \\ California Institute of Technology \\ Pasadena, Ca 91125}
\begin{document}
\maketitle

\begin{abstract}
Latent or unobserved phenomena pose a significant difficulty
in data analysis as they induce complicated and confounding
dependencies among a collection of observed variables.  Factor
analysis is a prominent multivariate statistical modeling approach
that addresses this challenge by identifying the effects of (a small
number of) latent variables on a set of observed variables.  However,
the latent variables in a factor model are purely mathematical objects
that are derived from the observed phenomena, and they do not have any
interpretation associated to them.  A natural approach for attributing
semantic information to the latent variables in a factor model is to
obtain measurements of some additional plausibly useful covariates
that may be related to the original set of observed variables, and to
associate these auxiliary covariates to the latent variables.  In this
paper, we describe a systematic approach for identifying such
associations.  Our method is based on solving computationally
tractable convex optimization problems, and it can be viewed as a
generalization of the minimum-trace factor analysis procedure for
fitting factor models via convex optimization.  We analyze the
theoretical consistency of our approach in a high-dimensional setting
as well as its utility in practice via experimental demonstrations
with real data. 
\end{abstract}

\section{Introduction}
A central goal in data analysis is to identify concisely described models that characterize the statistical dependencies among a collection of variables. Such concisely parametrized models avoid problems associated with overfitting, and they are often useful in providing meaningful interpretations of the relationships inherent in the underlying variables. Latent or unobserved phenomena complicate the task of determining concisely parametrized models as they induce confounding dependencies among the observed variables that are not easily or succinctly described. Consequently, significant efforts over many decades have been directed towards the problem of accounting for the effects of latent phenomena in statistical modeling. A common shortcoming of approaches to latent-variable modeling is that the latent variables are typically mathematical constructs that are derived from the originally observed data, and these variables do not directly have semantic information linked to them. Discovering interpretable meaning underlying latent variables would clearly impact a range of contemporary problem domains throughout science and technology. For example, in data-driven approaches to scientific discovery, the association of semantics to latent variables would lead to the identification of new phenomena that are relevant to a scientific process, or would guide data-gathering exercises by providing choices of variables for which to obtain new measurements.

In this paper, we focus for the sake of concreteness on the challenge of interpreting the latent variables in a factor model \cite{Spearman}. \emph{Factor analysis} is perhaps the most widely used latent-variable modeling technique in practice. The objective with this method is to fit observations of a collection of random variables $y \in \R^p$ to the following linear model:
\begin{equation}
y = \B \zeta + \epsilon,
\label{eqn:factormodel}
\end{equation}
where $\B \in \R^{p \times k}, k \ll p$. The random vectors $\zeta \in \R^k, \epsilon \in \R^p$ are independent of each other, and they are normally distributed as\footnote{The mean vector does not play a significant role in our development, and therefore we consider zero-mean random variables throughout this paper.} $\zeta \sim \mathcal{N}(0, \Sigma_\zeta), \epsilon \sim \mathcal{N}(0,\Sigma_\epsilon)$, with $\Sigma_\zeta \succ 0, \Sigma_\epsilon \succ 0$ and $\Sigma_\epsilon$ being diagonal. Here the random vector $\zeta$ represents a small number of unobserved, latent variables that impact all the observed variables $y$, and the matrix $\B$ specifies the effect that the latent variables have on the observed variables. However, the latent variables $\zeta$ themselves do not have any interpretable meaning, and they are essentially a mathematical abstraction employed to fit a concisely parameterized model to the conditional distribution of $y | \zeta$ (which represents the remaining uncertainty in $y$ after accounting for the effects of the latent variables $\zeta$) -- this conditional distribution is succinctly described as it is specified by a model consisting of independent variables (as the covariance of the Gaussian random vector $\epsilon$ is diagonal).

A natural approach to attributing semantic information to the latent variables $\zeta$ in a factor model is to obtain measurements of some additional plausibly useful covariates $x \in \R^q$ (the choice of these variables is domain-specific), and to link these to the variables $\zeta$.  However, defining and specifying such a link in a precise manner is challenging.  Indeed, a fundamental difficulty that arises in establishing this association is that the variables $\zeta$ in the factor model \eqref{eqn:factormodel} are not identifiable. In particular, for any non-singular matrix $\mathcal{W} \in \R^{k \times k}$, we have that $\B \zeta = (\B \mathcal{W}^{-1}) (\mathcal{W} \zeta)$. In this paper, we describe a systematic and computationally tractable methodology based on convex optimization that integrates factor analysis and the task of interpreting the latent variables. Our convex relaxation approach generalizes the \emph{minimum-trace factor analysis} technique, which has received much attention in the mathematical programming community over the years \cite{Ledermann,Shapiro1,S2,S3,Saunderson}.

\subsection{A Composite Factor Model}
\label{secion:composite}
We begin by making the observation that the column space of $\B$ -- which specifies the $k$-dimensional component of $y$ that is influenced by the latent variables $\zeta$ -- is invariant under transformations of the form $\B \rightarrow \B \mathcal{W}^{-1}$ for non-singular matrices $\mathcal{W} \in \R^{k \times k}$. Consequently, we approach the problem of associating the covariates $x$ to the latent variables $\zeta$ by linking the effects of $x$ on $y$ to the column space of $\B$. Conceptually, we seek a decomposition of the column space of $\B$ into transverse subspaces $\s_x, \s_u \subset \R^p, ~ \s_x \cap \s_u = \{0\}$ so that $\colspace(\B) \approx \s_x \oplus \s_u$ -- the subspace $\s_x$ specifies those components of $y$ that are influenced by the latent variables $\zeta$ and are also affected by the covariates $x$, and the subspace $\s_u$ represents any unobserved residual effects on $y$ due to $\zeta$ that are not captured by $x$. To identify such a decomposition of the column space of $\B$, our objective is to split the term $\B \zeta$ in the factor model~\eqref{eqn:factormodel} as
\begin{equation}
\B \zeta \approx \A x + \B_u \zeta_u,
\label{eqn:mid}
\end{equation}
where the column space of $\A \in \R^{p \times q}$ is the subspace $\s_x$ and the column space of $\B_u \in \R^{p \times \dim(\s_u)}$ is the subspace $\s_u$, i.e., $\dim(\colspace(\A)) \allowbreak + \dim(\colspace(\B_u)) \allowbreak = \dim(\colspace(\B))$ and $\colspace(\A) \allowbreak \cap \colspace(\B_u) \allowbreak = \{0\}$. Since the number of latent variables $\zeta$ in the factor model~\eqref{eqn:factormodel} is typically much smaller than $p$, the dimension of the column space of $\A$ is also much smaller than $p$; as a result, if the dimension $q$ of the additional covariates $x$ is large, the matrix $\A$ has small rank. Hence, the matrix $\A$ plays two important roles: its column space (in $\R^p$) identifies those components of the subspace $\B$ that are influenced by the covariates $x$, and its rowspace (in $\R^q$) specifies those components of (a potentially large number of) the covariates $x$ that influence $y$. Thus, \emph{the projection of the covariates $x$ onto the rowspace of $\A$ represents the interpretable component of the latent variables $\zeta$.} The term $\B_u \zeta_u$ in~\eqref{eqn:mid} represents, in some sense, the effects of those phenomena that continue to remain unobserved despite the incorporation of the covariates $x$.

Motivated by this discussion, we fit observations of $(y,x) \in \R^p \times\R^q$ to the following \emph{composite factor model} that incorporates the effects of the covariates $x$ as well as of additional unobserved latent phenomena on $y$:
\begin{equation}
y = \A x + \B_u \zeta_u + \bar{\epsilon}
\label{eqn:composite}
\end{equation}
where $\A \in \R^{p \times q}$ with $\mathrm{rank}(\A) \ll \min\{p,q\}$, $\B_u \in \R^{p \times k_u}$ with $k_u \ll p$, and the variables $\zeta_u,\bar{\epsilon}$ are independent of each other (and of $x$) and normally distributed as $\zeta_u \sim \mathcal{N}(0,\Sigma_{\zeta_u}), \bar{\epsilon} \sim \mathcal{N}(0,\Sigma_{\bar{\epsilon}})$, with $\Sigma_{\zeta_u} \succ 0, \Sigma_{\bar{\epsilon}} \succ 0$ and $\Sigma_{\bar{\epsilon}}$ being a diagonal matrix. The matrix $\A$ may also be viewed as the map specifying the best linear estimate of $y$ based on $x$. In other words, the goal is to identify a low-rank matrix $\A$ such that the conditional distribution of $y | x$ (and equivalently of $y | \A x$) is specified by a standard factor model of the form~\eqref{eqn:factormodel}.

\subsection{Composite Factor Modeling via Convex Optimization}
\label{section:compositeF}

Next we describe techniques to fit observations of $y \in \R^p$ to the model \eqref{eqn:composite}. This method is a key subroutine in our algorithmic approach for associating semantics to the latent variables in a factor model (see Section 1.3 for a high-level discussion of our approach and Section~\ref{section:experiments} for a more detailed experimental demonstration). Fitting observations of $(y,x) \in \R^p \times \R^q$ to the composite factor model \eqref{eqn:composite} is accomplished by identifying a Gaussian model over $(y,x)$ with the covariance matrix of the model satisfying certain algebraic properties.  For background on multivariate Gaussian statistical models, we refer the reader to \cite{Kay}.


The covariance matrix of $y$ in the factor model is decomposable as the sum of a low-rank matrix $\B \Sigma_\zeta \B'$ (corresponding to the $k \ll p$ latent variables $\zeta$) and a diagonal matrix $\Sigma_\epsilon$. Based on this algebraic structure, a natural approach to factor modeling is to find the smallest rank (positive semidefinite) matrix such that the difference between this matrix and the empirical covariance of the observations of $y$ is close to being a diagonal matrix (according to some measure of closeness, such as in the Frobenius norm). This problem is computationally intractable to solve in general due to the rank minimization objective \cite{Nataranjan}. As a result, a common heuristic is to replace the matrix rank by the trace functional, which results in the minimum trace factor analysis problem \cite{Ledermann,Shapiro1,S2,S3}; this problem is convex and it can be solved efficiently. The use of the trace of a positive semidefinite matrix as a surrogate for the matrix rank goes back many decades, and this topic has received much renewed interest over the past several years \cite{Meshabi,Fazel,Recht,Candes}.

In attempting to generalize the minimum-trace factor analysis approach to the composite factor model, one encounters a difficulty that arises due to the parametrization of the underlying Gaussian model in terms of covariance matrices. Specifically, with the additional covariates $x \in \R^q$ in the composite model \eqref{eqn:composite}, our objective is to identify a Gaussian model over $(y,x) \in \R^p \times \R^q$ with the joint covariance $\Sigma = \begin{pmatrix} \Sigma_y ~ \Sigma_{yx} \\ \Sigma'_{yx} ~ \Sigma_x \end{pmatrix} \in \mathbb{S}^{p + q}$ satisfying certain structural properties. One of these properties is that the conditional distribution of $y | x$ is specified by a factor model, which implies that the conditional covariance of $y | x$ must be decomposable as the sum of a low-rank matrix and a diagonal matrix. However, this conditional covariance is given by the Schur complement $\Sigma_y - \Sigma_{yx} \Sigma_x^{-1} \Sigma'_{yx}$, and specifying a constraint on the conditional covariance matrix in terms of the joint covariance matrix $\Sigma$ presents an obstacle to obtaining computationally tractable optimization formulations.

A more convenient approach to parameterizing conditional distributions in Gaussian models is to consider models specified in terms of inverse covariance matrices, which are also called \emph{precision matrices}. Specifically, the algebraic properties that we desire in the joint covariance matrix $\Sigma$ of $(y,x)$ in a composite factor model can also be stated in terms of the joint precision matrix $\Theta = \Sigma^{-1}$ via conditions on the submatrices of $\Theta = \begin{pmatrix} \Theta_y ~ \Theta_{yx} \\ \Theta'_{yx} ~ \Theta_x \end{pmatrix}$. First, the precision matrix of the conditional distribution of $y | x$ is specified by the submatrix $\Theta_y$; as the covariance matrix of the conditional distribution of $y | x$ is the sum of a diagonal matrix and a low-rank matrix, the Woodbury matrix identity implies that the submatrix $\Theta_y$ is the difference of a diagonal matrix and a low-rank matrix. Second, the rank of the submatrix $\Theta_{yx} \in \R^{p \times q}$ is equal to the rank of $\A \in \R^{p \times q}$ in non-degenerate models (i.e., if $\Sigma \succ 0$) because the relation between $\A$ and $\Theta$ is given by $\A = -[\Theta_y]^{-1} \Theta_{yx}$. Based on this algebraic structure desired in $\Theta$, we propose the following natural convex relaxation for fitting a collection of observations $\mathcal{D}^{+}_{n} = \{(y^{(i)},x^{(i)})\} _{i = 1}^{n}\subset \R^{p + q}$ to the composite model~\eqref{eqn:composite}:
\begin{eqnarray}
(\hat{\Theta}, \hat{D}_y, \hat{L}_y) = \arg\min_{\substack{\Theta \in \Sp^{p+q}, ~\Theta \succ 0 \\ D_y,L_y \in \Sp^p}} & -\ell(\Theta; \mathcal{D}_{+}^{n}) + \lambda_n [\gamma\|\Theta_{yx}\|_{\star} + \mathrm{trace}(L_y)] \nonumber \\ \mathrm{s.t.} & \Theta_{y} = D_y - L_y, ~ L_y \succeq 0, D_y ~\mathrm{is~diagonal} &
\label{eqn:main}
\end{eqnarray}
The term $\ell(\Theta; \mathcal{D}_{+}^{n})$ is the Gaussian log-likelihood function that enforces fidelity to the data, and it is given as follows (up to some additive and multiplicative terms):
\begin{equation}
\ell(\Theta;\mathcal{D}_{+}^{n}) = \log\det(\Theta) - \mathrm{trace}\left[\Theta \cdot \tfrac{1}{n}\sum_{i=1}^{n} \begin{pmatrix} y^{(i)} \\ x^{(i)} \end{pmatrix} \begin{pmatrix} y^{(i)} \\ x^{(i)} \end{pmatrix}' \right].
\end{equation}
This function is concave as a function of the joint precision matrix\footnote{An additional virtue of parameterizing our problem in terms of precision matrices rather than in terms of covariance matrices is that the log-likelihood function in Gaussian models is not concave over the cone of positive semidefinite matrices when viewed as a function of the covariance matrix.} $\Theta$. The matrices $D_y, L_y$ represent the diagonal and low-rank components of $\Theta_y$. As with the idea behind minimum-trace factor analysis, the role of the trace norm penalty on $L_y$ is to induce low-rank structure in this matrix. Based on a more recent line of work originating with the thesis of Fazel \cite{Fazel,Recht,Candes}, the nuclear norm penalty $\|\Theta_{yx}\|_\star$ on the submatrix $\Theta_{yx}$ (which is in general a non-square matrix) is useful for promoting low-rank structure in that submatrix of $\Theta$. The parameter $\gamma$ provides a tradeoff between the observed/interpretable and the unobserved parts of the composite factor model \eqref{eqn:composite}, and the parameter $\lambda_n$ provides a tradeoff between the fidelity of the model to the data and the overall complexity of the model (the total number of observed and unobserved components in the composite model \eqref{eqn:composite}). In summary, for $\lambda_n, \gamma \geq 0$ the regularized maximum-likelihood problem \eqref{eqn:main} is a convex program. From the optimal solution $(\hat{\Theta}, \hat{D}_y, \hat{L}_y)$ of \eqref{eqn:main}, we can obtain estimates for the parameters of the composite factor model \eqref{eqn:composite} as follows:
\begin{equation}
\begin{aligned}
\hat{\A} &= -[\hat{\Theta}_y]^{-1} \hat{\Theta}_{yx} \\ \hat{\B}_u &= \mathrm{any~squareroot~of~}(\hat{D}_y - \hat{L}_y)^{-1} - \hat{D}_y^{-1}~\mathrm{such~that}~ \hat{\B}_u \in \R^{p \times \rk(\hat{L}_y)},
\end{aligned}
\end{equation}
with the covariance of $\zeta_u$ being the identity matrix of appropriate dimensions and the covariance of $\bar{\epsilon}$ being $\hat{D}_y^{-1}$. 
The convex program \eqref{eqn:main} is log-determinant semidefinite programs that can be solved efficiently using existing numerical solvers such as the LogDetPPA package \cite{Toh}.

\subsection{{{}Algorithmic Approach for Interpreting Latent Variables in a Factor Model}}
\label{section:algorithm}

Our discussion has led us to a natural (meta-) procedure for interpreting latent variables in a factor model. Suppose that we are given a factor model underlying $y \in \mathbb{R}^p$. The analyst proceeds by obtaining simultaneous measurements of the variables $y$ as well as some additional covariates $x \in \R^q$ of plausibly relevant phenomena. Based on these joint observations, we identify a suitable composite factor model \eqref{eqn:composite} via the convex program \eqref{eqn:main}.  In particular, we sweep over the parameters $\lambda_n,\gamma$ in \eqref{eqn:main} to identify composite models that achieve a suitable decomposition -- in terms of effects attributable to the additional covariates $x$ and of effects corresponding to remaining unobserved phenomena -- of the effects of the latent variables in the factor model given as input.\\

To make this approach more formal, consider a composite factor model~\eqref{eqn:composite} $y = \mathcal{A}{x} + \mathcal{B}_u\zeta_{u} + \epsilon$ underlying a pair of random vectors $(y,x) \in \R^{p} \times \R^q$, with $\rk(\A) = k_x$, $\B_u \in \R^{p \times k_u}$, and $\colspace({\A}) \cap \colspace({\B_u}) = \{0\}$. As described in Section~\ref{section:compositeF}, the algebraic aspects of the underlying composite factor model translate to algebraic properties of submatrices of $\Theta \in \mathbb{S}^{p+q}$. In particular, the submatrix $\Theta_{yx}$ has rank equal to $k_x$ and the submatrix $\Theta_{y}$ is decomposable as $D_y - L_y$ with $D_y$ being diagonal and $L_y \succeq 0$ having rank equal to $k_u$. Finally, the transversality of $\colspace({\A})$ and $\colspace({\B_u})$ translates to the fact that $\colspace(\Theta_{yx}) \cap \colspace(L_y) = \{0\}$ have a transverse intersection. One can simply check that the factor model underlying the random vector $y \in \R^p$ that is induced upon marginalization of $x$ is specified by the precision matrix of $y$ given by $\tilde{\Theta}_y = D_y - [L_y + \Theta_{yx}(\Theta_x)^{-1}\Theta_{xy}]$. Here, the matrix $L_y + \Theta_{yx}(\Theta_x)^{-1}\Theta_{xy}$ is a rank $k_x + k_u$ matrix that captures the effect of latent variables in the factor model. This effect is decomposed into $\Theta_{yx}(\Theta_x)^{-1}\Theta_{xy}$ --  a rank $k_x$ matrix representing the component of this effect attributed to $x$, and $L_y$ -- a matrix of rank $k_u$ representing the effect attributed to residual latent variables. \\

These observations motivate the following algorithmic procedure. Suppose we are given a factor model that specifies the precision matrix of $y$ as the difference $\hat{\tilde{D}}_{y} - \hat{\tilde{L}}_{y}$, where $\hat{\tilde{D}}_{y}$ is diagonal and $\hat{\tilde{L}}_{y}$ is low rank. Then the composite factor model of $(y,x)$ with estimates $(\hat{\Theta}, \hat{D}_y, \hat{L}_y)$ offers an interpretation of the latent variables of the given factor model if $(i)~ \mathrm{rank}(\hat{\tilde{L}}_y) = \mathrm{rank}(\hat{L}_y + \hat{\Theta}_{yx}\hat{\Theta}_x^{-1}\hat{\Theta}_{xy})$, $(ii)~\colspace(\hat{\Theta}_{yx}) \cap \colspace(\hat{L}_{y}) = \{0\}$, and \\$(iii) \max\{\|\hat{\tilde{D}}_y - \hat{D}_y \|_2 /\|\hat{\tilde{D}}_y\|_2, \|\hat{\tilde{L}}_y - [\hat{L}_y+\hat{\Theta}_{yx}\hat{\Theta}_x^{-1}\hat{\Theta}_{xy}]\|_2 / \|\hat{\tilde{L}}_y\|_2\}$ is small. The full algorithmic procedure for attributing meaning to latent variables of a factor model is outlined below:
\FloatBarrier
\begin{algorithm}
\caption{Interpreting Latent Variables in a Factor Model}
\begin{algorithmic}[1]
\STATE {\bf Input}: A collection of observations $\mathcal{D}^{+}_{n} = \{(y^{(i)}, x^{(i))}\}_{i=1}^{n} \subset \R^p \times \R^q$ of the variables $y$ and of some auxiliary covariates $x$; Factor model with parameters $(\hat{\tilde{D}}_y, \hat{\tilde{L}}_y)$. \\
\vspace{.04in}
\STATE {\bf Composite Factor Modeling}: For each $d = 1, \dots, q$, sweep over parameters $(\lambda_n, \gamma)$ in the convex program \eqref{eqn:main} (with $\mathcal{D}_n^+$ as input) to identify composite models with estimates $(\hat{\Theta},\hat{D}_y,\hat{L}_y)$ that satisfy the following three properties: $(i)~\text{rank}(\hat{\Theta}_{yx}) = d$, $(ii)$~$\text{rank}(\hat{\tilde{L}}_y) = \text{rank}(\hat{L}_y) + \text{rank}(\hat{\Theta}_{yx})$, and $(iii)~\text{rank}(\hat{\tilde{L}}_y) = \text{rank}(\hat{L}_y + \text{rank}(\hat{\Theta}_{yx}\hat{\Theta}_{x}^{-1}\hat{\Theta}_{xy}))$. \\
\vspace{.04in}
\STATE{\bf Identifying Subspace}: For each $d = 1, \dots, q$ and among the candidate composite models (from the previous step), choose the composite factor model that minimizes the quantity $\max\{\|\hat{\tilde{D}}_y - \hat{D}_y \|_2 /\|\hat{\tilde{D}}_y\|_2, \|\hat{\tilde{L}}_y - [\hat{L}_y+\hat{\Theta}_{yx}\hat{\Theta}_x^{-1}\hat{\Theta}_{xy}]\|_2 / \|\hat{\tilde{L}}_y\|_2\}$.\\
\vspace{.04in}
\STATE{\bf Output}: For each $d = 1,\dots q$, the $d$-dimensional projection of $x$ into the row-space of $\hat{\Theta}_{yx}$ represents the interpretable component of the latent variables in the factor model.
\end{algorithmic}
\end{algorithm}
\FloatBarrier

 The effectiveness of Algorithm 1 is dependent on the size of the quantity $\max\{\|\hat{\tilde{D}}_y - \hat{D}_y \|_2 /\|\hat{\tilde{D}}_y\|_2, \|\hat{\tilde{L}}_y - \hat{L}_y-\hat{\Theta}_{yx}\hat{\Theta}_x^{-1}\hat{\Theta}_{xy}]\|_2 / \|\hat{\tilde{L}}_y\|_2\}$. The smaller this quantity, the better the composite factor model fits to the given factor model. Finally, recall from Section~\ref{secion:composite} that the projection of covariates $x$ onto to the row-space of $\A$ (from the composite model \eqref{eqn:composite}) represents the interpretable component of the latent variables of the factor model. Because of the relation $\A = -[\Theta_y]^{-1} \Theta_{yx}$, this interpretable component is obtained by projecting the covariates $x$ onto the row-space of $\Theta_{yx}$. This observation explains the final step of Algorithm 1.  

The input to Algorithm 1 is a factor model underlying a collection of variables $y \in \R^p$, and the algorithm proceeds to obtain semantic interpretation of the latent variables of the factor model. However, in many situations, a factor model underlying $y \in \R^p$ may not be available in advance, and must be learned in a data-driven fashion based on observations of $y \in \R^p$. In our experiments (see Section~\ref{section:experiments}), we learn a factor model using a specialization of the convex program \eqref{eqn:main}. It is reasonable to ask whether one might directly fit to a composite model to the covariates and responses jointly without reference to the underlying factor model based on the responses. However, in our experience with applications, it is often the case that observations of the responses $y$ are much more plentiful than of joint observations of responses $y$ and covariates $x$. As an example, consider a setting in which the responses are a collection of financial asset prices (such as stock return values); observations of these variables are available at a very fine time-resolution on the order of seconds. On the other hand, some potentially useful covariates such as GDP, government expenditures, federal debt, and consumer rate are available at a much coarser scale (usually on the order of months or quarters). As another example, consider a setting in which the responses are reservoir volumes of California; observations of these variables are available at a daily scale. On the other hand, reasonable covariates that one may wish to associate to the latent variables underlying California reservoir volumes such as agricultural production, crop yield rate, average income, and population growth rate are available at a much coarser time scale (e.g. monthly). In such settings, the analyst can utilize the more abundant set of observations of the responses $y$ to learn an accurate factor model first. Subsequently, one can employ our approach to associate semantics to the latent variables in this factor model based on the potentially limited number of observations of the responses $y$ and the covariates $x$. 

\subsection{{Our Results}}
\label{section:results}

In Section~\ref{section:theorem} we carry out a theoretical analysis to investigate whether the framework outlined in Algorithm 1 can succeed. We discuss a model problem setup, which serves as the basis for the main theoretical result in Section~\ref{section:theorem}. Suppose we have Gaussian random vectors $(y,x) \in \R^p \times \R^q$ that are related to each other via a composite factor model~\eqref{eqn:composite}.  Note that this composite factor model induces a factor model underlying the variables $y \in \R^p$ upon marginalization of the covariates $x$. In the subsequent discussion, we assume that the factor model that is supplied as input to Algorithm 1 is the factor model underlying the responses $y$. \\
 Now we consider the following question: Given observations jointly of $(y,x) \in \R^{p+q}$, does the convex relaxation \eqref{eqn:main} (for suitable choices of regularization parameters $\lambda_n,\gamma$) estimate the composite factor model underlying these two random vectors accurately? An affirmative answer to this question demonstrates the success of Algorithm 1. In particular, a positive answer to this question implies that we can decompose the effects of the latent variables in the factor model underlying $y$ using the convex relaxation \eqref{eqn:main}, as the accurate estimation of the composite model underlying $(y,x)$ implies a successful decomposition of the effects of the latent variables in the factor model underlying $y$.  That is, steps 2-3 in the Algorithm are successful. In Section~\ref{section:theorem}, we show that under suitable identifiability conditions on the population model of the joint random vector $(y,x)$, the convex program \eqref{eqn:main} succeeds in solving this question. Our analysis is carried out in a high-dimensional asymptotic scaling regime in which the dimensions $p,q$, the number of observations $n$, and other model parameters may all grow simultaneously \cite{Buhlmann,Wainwright}. \\

We give concrete demonstration of Algorithm 1 with experiments on synthetic data and real-world financial data. For the financial asset problem, we consider as our variables $y$ the monthly averaged stock prices of $45$ companies from the Standard and Poor index over the period March 1982 to March 2016, and we identify a factor model \eqref{eqn:factormodel} over $y$ with $10$ latent variables (the approach we use to fit a factor model is described in Section~\ref{section:experiments}). We then obtain observations of $q = 13$ covariates on quantities related to oil trade, GDP, government expenditures, etc. (See Section~\ref{section:experiments} for the full list), as these plausibly influence stock returns. Following the steps outlined in Algorithm 1, we use the convex program \eqref{eqn:main} to identify a two-dimensional projection of these $13$ covariates that represent an interpretable component of the $10$ latent variables in the factor model, as well as a remaining set of $8$ latent variables that constitute phenomena not observed via the covariates $x$. In further analyzing the characteristics of the two-dimensional projection, we find that EUR to USD exchange rate and government expenditures are the most relevant of the $13$ covariates considered in our experiment, while mortgage rate and oil imports are less useful. See Section~\ref{section:experiments} for complete details.

\subsection{Related Work}
Elements of our approach bear some similarity with \emph{canonical correlations analysis} \cite{Hotelling}, which is a classical technique for identifying relationships between two sets of variables. In particular, for a pair of jointly Gaussian random vectors $(y,x) \in \R^{p \times q}$, canonical correlations analysis may be used as a technique for identifying the most relevant component(s) of $x$ that influence $y$. However, the composite factor model \eqref{eqn:composite} allows for the effect of further unobserved phenomena not captured via observations of the covariates $x$. Consequently, our approach in some sense incorporates elements of both canonical correlations analysis and factor analysis. It is important to note that algorithms for factor analysis and for canonical correlations analysis usually operate on covariance and cross-covariance matrices. However, we parametrize our regularized maximum-likelihood problem \eqref{eqn:main} in terms of precision matrices, which is a crucial ingredient in leading to a computationally tractable convex program.

The nuclear-norm heuristic has been employed widely over the past several years in a range of statistical modeling tasks involving rank minimization problems; see \cite{Wainwright} and the references therein. The proof of our main result in Section~\ref{section:theorem} incorporates some elements from the theoretical analyses in these previous papers, along with the introduction of some new ingredients. We give specific pointers to the relevant literature in Section~\ref{section:proofs}.

\subsection{Notation}Given a matrix $U \in \mathbb{R}^{p_1 \times p_2}$, and the norm $\|U\|_2$ denotes the spectral norm (the largest singular value of $U$). We define the linear operators $\mathcal{F}: \Sp^p \times \Sp^p \times \mathbb{R}^{p{\times}q} \times \Sp^q \rightarrow \Sp^{(p+q)}$ and its adjoint $\mathcal{F}^{\dagger}: \Sp^{(p+q)} \rightarrow \Sp^p \times \Sp^p \times \mathbb{R}^{p{\times}q} \times \Sp^q$ as follows:
\begin{equation}
\mathcal{F}(M, N, K, O) \triangleq \left( \begin{array}{cc}
M - N & K \\
K^T & O \end{array} \right), \qquad \mathcal{F}^{\dagger}\left( \begin{array}{cc}
Q & K \\
K^{T} & O \end{array} \right)\triangleq (Q,Q,K,O)
\label{eqn:OperatorDefs}
\end{equation}
Similarly, we define the linear operators 
$\mathcal{G}: \Sp^p \times \mathbb{R}^{p{\times}q} \rightarrow \Sp^{(p+q)}$ and its adjoint $\mathcal{G}^{\dagger}: \Sp^{(p+q)} \rightarrow \Sp^p \times \mathbb{R}^{p{\times}q}$ as follows:
\begin{equation}
\mathcal{G}(M, K) \triangleq \left( \begin{array}{cc}
M  & K \\
K^T & 0 \end{array} \right), \qquad \mathcal{G}^{\dagger}\left( \begin{array}{cc}
Q & K \\
K^{T} & O \end{array} \right)\triangleq (Q,K)
\label{eqn:OperatorDefs23}
\end{equation}
Finally, for any subspace $\s$, the projection onto the subspace is denoted by $\mathcal{P}_\s$.

\section{{Theoretical Results}}
\label{section:theorem}
In this section, we state a theorem to prove the consistency of convex program \eqref{eqn:main}. This theorem requires assumptions on the population precision matrix, which are discussed in Section~\ref{section:Fishercond}. We provide examples of population composite factor models \eqref{eqn:main} that satisfy these conditions. The theorem statement is given in Section~\ref{section:theoremstatement} and the proof of the theorem is given in Section~\ref{section:proofs} with some details deferred to the appendix.  
\subsection{{Technical Setup}}
\label{section:setup}
As discussed in Section~\ref{section:results}, our theorems are premised on the existence of a population composite factor model \eqref{eqn:composite} $y = \A^\star{x} + \B^\star_u \zeta_u + \epsilon$ underlying a pair of random vectors $(y,x) \in \R^p \times \R^q$, with $\rk(\A^\star) = k_x$, $\B_u^\star \in \R^{p \times k_u}$, and $\colspace({\A}^\star) \cap \colspace({\B_u}^\star) = \{0\}$. As the convex relaxation \eqref{eqn:main} is solved in the precision matrix parametrization, the conditions for our theorems are more naturally stated in terms of the joint precision matrix $\Theta^\star \in \mathbb{S}^{p+q}, ~ \Theta^\star \succ 0$ of $(y,x)$.  The algebraic aspects of the parameters underlying the factor model translate to algebraic properties of submatrices of $\Theta^\star$.  In particular, the submatrix $\Theta^\star_{yx}$ has rank equal to $k_x$, and the submatrix $\Theta^\star_y$ is decomposable as $D_y^\star - L_y^\star$ with $D^\star_y$ being diagonal and $L^\star_y \succeq 0$ having rank equal to $k_u$.  Finally, the transversality of $\colspace({\A}^\star)$ and $\colspace({\B_u}^\star)$ translates to the fact that $\colspace(\Theta^\star_{yx}) \cap \colspace(L_y^\star) = \{0\}$ have a transverse intersection.

To address the requirements raised in Section~\ref{section:results}, we seek an estimate $(\hat{\Theta}, \hat{D}_y, \hat{L}_y)$ from the convex relaxation \eqref{eqn:main} such that $\mathrm{rank}(\hat{\Theta}_{yx}) = \mathrm{rank}(\Theta^\star_{yx})$, $\mathrm{rank}(\hat{L}_y) = \mathrm{rank}(L^\star_y),$ and that $\|\hat{\Theta}-\Theta^\star\|_2$ is small. Building on both classical statistical estimation theory \cite{Bickel} as well as the recent literature on high-dimensional statistical inference \cite{Buhlmann,Wainwright}, a natural set of conditions for obtaining accurate parameter estimates is to assume that the curvature of the likelihood function at $\Theta^\star$ is bounded in certain directions. This curvature is governed by the Fisher information at $\Theta^\star$:
\begin{eqnarray*}
\mathbb{I}^\star \triangleq {{{{\Theta}}}^\star}^{-1} \otimes {{\Theta}^\star}^{-1} = \Sigma^\star \otimes \Sigma^\star.
\end{eqnarray*}
Here $ \otimes$ denotes a tensor product between matrices and $\mathbb{I}^\star$ may be viewed as a map from $\mathbb{S}^{(p+q)}$ to $\mathbb{S}^{(p+q)}$. We impose conditions requiring that $\mathbb{I}^\star$ is well-behaved when applied to matrices of the form \\ $\Theta - \Theta^\star = \begin{pmatrix} (D_y-D_y^\star)-(L_y-L_y^\star) & \Theta_{yx} - \Theta_{yx}^\star \\ {\Theta_{yx}}'-{\Theta_{yx}^\star}' & \Theta_x-\Theta_{x}^\star \end{pmatrix}$, where $(L_y,\Theta_{yx})$ are in a neighborhood of $(L_y^\star,\Theta_{yx}^\star)$ restricted to sets of low-rank matrices. These local properties of $\mathbb{I}^\star$ around $\Theta^\star$ are conveniently stated in terms of \emph{tangent spaces} to the algebraic varieties of low-rank matrices. In particular, the tangent space at a rank-$r$ matrix $N \in \R^{p_1 \times p_2}$ with respect to the algebraic variety of $p_1 \times p_2$ matrices with rank less than or equal to $r$ is given by\footnote{We also consider the tangent space at a symmetric low-rank matrix with respect to the algebraic variety of symmetric low-rank matrices. We use the same notation `$T$' to denote tangent spaces in both the symmetric and non-symmetric cases, and the appropriate tangent space is clear from the context.}:
\begin{eqnarray*}
T(N) &\triangleq& \{N_R + N_C | N_R, N_C \in \R^{p_1 \times p_2}, \\&~& \rowspace~({N_R}) \subseteq \rowspace~({N}), \\ & & \colspace~({N_C}) \subseteq \colspace~ ({N})\}
\end{eqnarray*}

In the next section, we describe conditions on the population Fisher information $\mathbb{I}^\star$ in terms of the tangent spaces $T(L_{y}^\star)$, and $T(\Theta_{yx}^\star)$; under these conditions, we present a theorem in Section \ref{section:theoremstatement} showing that the convex program \eqref{eqn:main} obtains accurate estimates.

\subsection{{Fisher Information Conditions}}
\label{section:Fishercond}
Given a norm $\|\cdot\|_{\Upsilon}$ on $\mathbb{S}^p \times \mathbb{S}^p \times \mathbb{R}^{p \times q} \times \mathbb{S}^{q}$, we first consider a classical condition in statistical estimation literature, which is to control the minimum gain of the Fisher information $\mathbb{I}^\star$ restricted to a subspace $\mathbb{H} \subset \mathbb{S}^p \times \mathbb{S}^p \times \mathbb{R}^{p \times q} \times \mathbb{S}^{q}$ as follows:
\begin{eqnarray}
\chi({\mathbb{H}}, \|\cdot\|_{\Upsilon}) \triangleq \min_{\substack{Z \in {\mathbb{H}}\\ \|Z\|_\Upsilon= 1}} \|\mathcal{P}_{\mathbb{H}} \F^{\dagger} \mathbb{I}^\star \F \mathcal{P}_{\mathbb{H}}(Z)\|_{\Upsilon},
\label{eqn:Chieq}
\end{eqnarray}
where $\mathcal{P}_{\mathbb{H}}$ denotes the projection operator onto the subspace $\mathbb{H}$ and the linear maps $\F$ and $\F^{\dagger}$ are defined in \eqref{eqn:OperatorDefs}. The quantity $\chi({\mathbb{H}}, \|\cdot\|_{\Upsilon})$ being large ensures that the Fisher information $\mathbb{I}^\star$ is well-conditioned restricted to image $\F \mathbb{H} \subseteq \mathbb{S}^{p+q}$. The remaining conditions that we impose on $\mathbb{I}^\star$ are in the spirit of irrepresentibility-type conditions \cite{Meinhausen,Zao,Wai2009,Ravikumar,Chand2012} that are frequently employed in high-dimensional estimation. In the subsequent discussion, we employ the following notation to denote restrictions of a subspace $\mathbb{H} = H_1 \times H_2 \times H_3 \times H_4  \subset \mathbb{S}^{ p} \times \mathbb{S}^{p} \times \R^{p \times q} \times \mathbb{S}^{q}$ (here $H_1,H_2,H_3,H_4$ are subspaces in $\mathbb{S}^{ p},\mathbb{S}^p,\R^{p \times q},\mathbb{S}^q$, respectively) to its individual components. The restriction to the second components of $\mathbb{H}$ is given by $\mathbb{H}[2] = H_2$. The restriction to the second and third component of $\mathbb{H}$ is given by $\mathbb{H}[2,3] = H_2 \times H_3 \subset \mathbb{S}^{ p} \times \R^{p \times q}$. Given a norm $\|.\|_{\Pi}$ on $\mathbb{S}^p \times \mathbb{R}^{p \times q}$, we control the gain of $\mathbb{I}^\star$ restricted to $\mathbb{H}[2,3]$
\begin{equation}
\begin{aligned}
\Xi(\mathbb{H}, \|\cdot\|_{\Pi}) \triangleq \min_{\substack{Z \in \mathbb{H}[2,3]\\ \|Z\|_{\Pi} = 1}} \|\mathcal{P}_{{{\mathbb{H}[2,3]}}} \mathcal{G}^{\dagger}\mathbb{I}^{\star}\mathcal{G}\mathcal{P}_{{{\mathbb{H}[2,3]}}}(Z)\|_{\Pi}.
\end{aligned}
\label{eqn:Xidef}
\end{equation}
Here, the linear maps $\mathcal{G}$ and $\mathcal{G}^\dagger$ are defined in \eqref{eqn:OperatorDefs23}. In the spirit of irrepresentability conditions, we control the inner-product between elements in $\mathcal{G} \mathbb{H}[2,3]$ and $\mathcal{G} \mathbb{H}[2,3]^{\perp}$, as quantified by the metric induced by $\mathbb{I}^\star$ via the following quantity
\begin{equation}
\begin{aligned}
\varphi(\mathbb{H}, \|\cdot\|_{\Pi}) \triangleq \max_{\substack{Z \in \mathbb{H}[2,3]\\ \|Z\|_{\Pi} = 1}} \|\mathcal{P}_{{{\mathbb{H}[2,3]^{\perp}}}}\mathcal{G}^{\dagger} \mathbb{I}^{\star}\mathcal{G}\mathcal{P}_{{{\mathbb{H}}[2,3]}} (\mathcal{P}_{{{\mathbb{H}[2,3]}}} \mathcal{G}^{\dagger}\mathbb{I}^{\star}\mathcal{G}\mathcal{P}_{{{\mathbb{H}[2,3]}}})^{-1}(Z)\|_{\Pi}.
\end{aligned}
\label{eqn:varphidef}
\end{equation}
The operator $(\mathcal{P}_{{{\mathbb{H}[2,3]}}} \mathcal{G}^{\dagger}\mathbb{I}^{\star}\mathcal{G}\mathcal{P}_{{{\mathbb{H}[2,3]}}})^{-1}$ in \eqref{eqn:varphidef} is well-defined if $\Xi({\mathbb{H}}) > 0$, since this latter condition implies that $\mathbb{I}^\star$ is injective restricted to $\mathcal{G} \mathbb{H}[2,3]$. The quantity $\varphi(\mathbb{H}, \|\cdot\|_{\Upsilon})$ being small implies that any element of $\mathcal{G} \mathbb{H}[2,3]$ and any element of $\mathcal{G}{\mathbb{H}[2,3]^{\perp}}$ have a small inner-product (in the metric induced by $\mathbb{I}^\star$). The reason that we restrict this inner product to the second and third components of $\mathbb{H}$ in the  quantity $\varphi(\mathbb{H}, \|.\|_{\Upsilon})$ is that the regularization terms in the convex program \eqref{eqn:main} are only applied to the matrices $L_y$ and $\Theta_{yx}$. 

A natural approach to controlling the conditioning of the Fisher information around $\Theta^\star$ is to bound the quantities $\chi(\mathbb{H}^\star,\|\cdot\|_\Upsilon)$,  $\Xi(\mathbb{H}^\star, \|\cdot\|_{\Pi})$, and $\varphi(\mathbb{H}^\star, \|\cdot\|_{\Upsilon})$ for $\mathbb{H}^\star = \mathcal{W} \times T(L_y^\star) \times T(\Theta_{yx}^\star) \times \Sp^q$ where $\mathcal{W} \in \Sp^p$ is the set of diagonal matrices. However, a complication that arises with this approach is that the varieties of low-rank matrices are locally curved around $L_y^\star$ and around $\Theta_{yx}^\star$. Consequently, the tangent spaces at points in neighborhoods around $L_y^\star$ and around $\Theta_{yx}^\star$ are not the same as $T(L_y^\star)$ and $T(\Theta_{yx}^\star)$. In order to account for this curvature underlying the varieties of low-rank matrices, we bound the distance between nearby tangent spaces via the following induced norm:
\begin{eqnarray*}
\rho(T_1,T_2) \triangleq \max_{\|N\|_2 \leq 1} \|(\mathcal{P}_{T_1} - \mathcal{P}_{T_2})(N)\|_2.
\label{eqn:distortion}
\end{eqnarray*}
The quantity $\rho(T_1,T_2)$ measures the largest angle between $T_1$ and $T_2$.  Using this approach for bounding nearby tangent spaces, we consider subspaces $\mathbb{H}' = \mathcal{W} \times T'_y \times T'_{yx} \times \mathbb{S}^{q}$ for all $T_y'$ close to $T(L_y^\star)$ and for all $T_{yx}'$ close to $T(\Theta_{yx}^\star)$, as measured by $\rho$ \cite{Chand2012}. For $\omega_y \in (0,1)$ and $\omega_{yx} \in (0,1)$, we bound $\chi({\mathbb{H}}',  \|\cdot\|_{\Upsilon})$, $\Xi(\mathbb{H}', \|\cdot\|_{\Pi})$, and $\varphi({\mathbb{H}}', \|\cdot\|_{\Pi})$ in the sequel for all subspaces $\mathbb{H}'$ in the following set:
\begin{equation}
\begin{aligned}
U{(\omega_y, \omega_{yx})} \triangleq \Big\{\mathcal{W} \times T'_y \times T'_{yx} \times \mathbb{S}^{q} ~|~ &\rho({{T_{y}'}}, T({L_{y}^\star})) \leq \omega_y \\ & \rho({{T_{yx}'}}, T({\Theta_{yx}^\star})) \leq \omega_{yx}\Big\}.
\end{aligned}
\label{eqn:Udef}
\end{equation}

We control the quantities $\Xi({\mathbb{H}'}, \|\cdot\|_{\Pi})$ and $\varphi({\mathbb{H}}', \|\cdot\|_{\Pi})$ using the dual norm of the regularizer $\mathrm{trace}(L_y) + \gamma \|\Theta_{yx}\|_\star$ in \eqref{eqn:main}:
\begin{equation}
\Gamma_{\gamma}(L_y, \Theta_{yx}) \triangleq \max\left\{\|L_y\|_{2}, \frac{\|\Theta_{yx}\|_{2}}{\gamma}\right\}.
\label{eqn:Gammadef}
\end{equation}
Furthermore, we control the quantity $\chi({\mathbb{H}}',  \|\cdot\|_{\Upsilon})$ using a slight variant of the dual norm:
\begin{equation}
\Phi_{\gamma}(D_y, L_y, \Theta_{yx}, \Theta_{x}) \triangleq \max\left\{\|D_y\|_2,\|L_y\|_{2}, \frac{\|\Theta_{yx}\|_{2}}{\gamma}, \|\Theta_{x}\|_2 \right\}.
\label{eqn:Phidef}
\end{equation}
As the dual norm $\max\left\{\|L_y\|_{2}, \frac{\|\Theta_{yx}\|_{2}}{\gamma}\right\}$ of the regularizer in \eqref{eqn:main} plays a central role in the optimality conditions of \eqref{eqn:main}, controlling the quantities $\chi({\mathbb{H}'},\Phi_{\gamma})$, $\Xi({\mathbb{H}}', \Gamma_{\gamma})$, and $\varphi({\mathbb{H}}', \Gamma_{\gamma})$ leads to a natural set of conditions that guarantee the consistency of the estimates produced by
\eqref{eqn:main}. In summary, given a fixed set of parameters $(\gamma,\omega_y, \omega_{yx}) \in \R_+ \times (0,1) \times (0,1)$, we assume that $\mathbb{I}^\star$ satisfies the following conditions:
\begin{eqnarray}
\mathrm{Assumption~1}&:& \inf_{\mathbb{H}' \in U{(\omega_y, \omega_{yx})}}\chi({\mathbb{H}}', {{\Phi}_{\gamma}}) \geq \alpha, ~~~ \mathrm{for~some~} \alpha > 0 \label{eqn:FirstFisherCond} \\[.1in]
\mathrm{Assumption~2}&:& \inf_{\mathbb{H}' \in U{(\omega_y, \omega_{yx})}}\Xi({\mathbb{H}}', {{\Gamma}_{\gamma}}) > 0  \label{eqn:FirstFisherCond3} \\[.1in]
\mathrm{Assumption~3}&:& \sup_{\mathbb{H}' \in U{(\omega_{yx}ß, \omega_{yx})}}\varphi({\mathbb{H}}', {{\Gamma}_{\gamma}}) \leq 1-\frac{2}{\beta+1} ~~~ \mathrm{for~some~} \beta \geq 2.
\label{eqn:SecondFisherCond}
\end{eqnarray}
For fixed $(\gamma, \allowbreak \omega_y, \omega_{yx})$, larger value of $\alpha$ and smaller value of $\beta$ in these assumptions lead to a better conditioned $\mathbb{I}^\star$.

Assumptions 1, 2, and 3 are analogous to conditions that play an important role in the analysis of the Lasso for sparse linear regression, graphical model selection via the Graphical Lasso \cite{Ravikumar}, and in several other approaches for high-dimensional estimation. As a point of comparison with respect to analyses of the Lasso, the role of the Fisher information $\mathbb{I}^\star$ is played by $A^TA$, where $A$ is the underlying design
matrix. In analyses of both the Lasso and the Graphical Lasso in the papers referenced above, the analog of the subspace $\mathbb{H}$ is the set of models with support contained inside the support of the underlying sparse population model. Assumptions 1, 2, and 3 are also similar in spirit to conditions employed in the analysis of convex relaxation methods for latent-variable graphical model selection \cite{Chand2012}.
\subsection{{When Do the Fisher Information Assumptions Hold?}}
In this section, we provide examples of composite models \eqref{eqn:composite} that satisfy Assumptions 1, 2 and 3 in \eqref{eqn:FirstFisherCond} \eqref{eqn:FirstFisherCond3}, and \eqref{eqn:SecondFisherCond} for some choices of $\alpha > 0$, $\beta \geq 2$, $\omega_y \in (0,1)$,  $\omega_{yx} \in (0,1)$ and $\gamma > 0$ . Specifically, consider a population composite factor model $y = \mathcal{A}^\star{x} + \mathcal{B}_u^\star\zeta_u + \bar{\epsilon}$, where $\A^\star \in \R^{p \times q}$ with $\text{rank}(\A^\star) = k_x$, $\mathcal{B}_u^\star \in \R^{p,k_u}$, $\colspace({\A^\star}) \cap \colspace({\B_u^\star}) = \{0\}$, and the random variables $\zeta_u,\bar{\epsilon}, x$ are independent of each other and normally distributed as $\zeta_u \sim \mathcal{N}(0,\Sigma_{\zeta_u}), \bar{\epsilon} \sim \mathcal{N}(0,\Sigma_{\bar{\epsilon}})$.  
As described in Section \ref{section:compositeF}, the properties of the composite factor model translate to algebraic properties on the underlying precision matrix $\Theta^\star \in \Sp^{p+q}$. Namely, the submatrix $\Theta_{yx}^\star$ has rank equal to $k_x$ and the submatrix $\Theta_{y}^\star$ is decomposable as $D_y^\star - L_y^\star$ with $D_y^\star$ being diagonal and $L_y^\star \succeq 0$ having rank equal to $k_u$. Recall that the factor model underlying the random vector $y \in \R^p$ that is induced upon marginalization of $x$ is specified by the precision matrix of $y$ given by $\tilde{\Theta}_y^\star = D_y^\star - \Big[L_y^\star + \Theta_{yx}^\star(\Theta_x^\star)^{-1}\Theta_{xy}^\star\Big]$. Here, $L_y^\star + \Theta_{yx}^\star(\Theta_x^\star)^{-1}\Theta_{xy}^\star$ represents the effect of the latent variables in the underlying factor model. When learning a composite factor model, this effect is decomposed into: $\Theta_{yx}^\star(\Theta_x^\star)^{-1}\Theta_{xy}^\star$ --  a rank $k_x$ matrix representing the component of this affect attributed to $x$ -- and $L_y^\star$ -- a matrix of rank $k_u$ representing the effect of residual latent variables. There are two identifiability concerns that arise when learning a composite factor model. First, the low rank matrices $L_y^\star$ and $\Theta_{yx}^\star(\Theta_x^\star)^{-1}\Theta_{xy}^\star$ must be distinguishable from the diagonal matrix $D_y^\star$. Following previous literature in diagonal and low rank matrix decompositions \cite{Saunderson,Chand2012}, this task can be achieved by ensuring that the column/row spaces of $L_y^\star$ and $\Theta_{yx}^\star(\Theta_x^\star)^{-1}\Theta_{xy}^\star$ are \emph{incoherent} with respect to the standard basis. Specifically, given a subspace $U \subset \R^p$, the coherence of the subspace $U$ is defined as:
\begin{eqnarray*}
\mu(U) = \max_{i = 1,2 \dots p} \|\mathcal{P}_{U}(e_i)\|_{\ell_2}^2
\end{eqnarray*}
where $\mathcal{P}$ denotes a projection operation and $e_i \in \R^p$ denotes the i'th standard basis vector. It is not difficult to show that this incoherence parameter satisfies the following inequality:
\begin{eqnarray*}
\frac{\text{dim}(U)}{p} \leq \mu(U) \leq 1.
\end{eqnarray*}
A subspace $U$ with small coherence is necessarily of small dimension and far from containing standard basis elements. As such, a symmetric matrix with incoherent row and column spaces is low-rank and quite different from being a diagonal matrix. Consequently, we require that the quantities $\mu(\text{column-space}(L_y^\star))$ and $\mu(\text{column-space}(\Theta_{yx}^\star{\Theta_x^\star}^{-1}\Theta_{xy}^\star))$ are small \footnote{We only need to control the coherence of the column spaces since these matrices are symmetric.}. The second identifiability issue that arises is distinguishing the low rank matrices $L_y^\star$ and $\Theta_{yx}^\star(\Theta_x^\star)^{-1}\Theta_{xy}^\star$ from one another. This task is made difficult when the row/column spaces of these matrices are nearly aligned. Thus, we must ensure that the row/column spaces of $L_y^\star$ and $\Theta_{yx}^\star(\Theta_x^\star)^{-1}\Theta_{xy}^\star$ are sufficiently transverse (i.e. have large angles).\\

 These identifiability issues directly translate to conditions on the population composite factor model. Specifically, $\mu(\text{column-space}(L_y^\star))$ and \\ $\mu(\text{column-space}(\Theta_{yx}^\star(\Theta_x^\star)^{-1}\Theta_{xy}^\star))$ being small translates to $\mu(\text{column-space}(\A^\star))$ and $\mu(\text{column-space}(\B_u^\star))$ being small. Such a condition has another interpretation. It states that the effect of $x$ and $\zeta_u$ must not concentrate on any one variable of $y$; otherwise, this effect can be absorbed by the random variable $\bar{\epsilon}$ in \eqref{eqn:composite}. The second identifiability assumption that the row/column spaces of $L_y^\star$ and $\Theta_{yx}^\star(\Theta_x^\star)^{-1}\Theta_{xy}^\star$ have a large angle translates to the angle between column spaces of $\A^\star$ and $\B_u^\star$ being large. This assumption ensures that the effect of $x$ and $\zeta_u$ on $y$ can be distinguished. \\

Having these identifiability concerns in mind, we give a stylized composite factor model \eqref{eqn:composite} and check that the Fisher Information Assumptions 1,2, and 3 in \eqref{eqn:FirstFisherCond}, \eqref{eqn:FirstFisherCond3}, and \eqref{eqn:SecondFisherCond} are satisfied for appropriate choices of parameters. Specifically, we let $p = 60$, $q = 2$, $k_x = 1$, and $k_u = 1$. We let the random variables $x \in \R^q$, $\zeta_u \in \R^{k_u}$, $\bar{\epsilon} \in \R^p$ be distributed according to $x \sim \mathcal{N}(0,\mathcal{I}_{q{\times}q})$, $\zeta_u \sim \mathcal{N}(0,\mathcal{I}_{k_u{\times}k_u})$, and $\bar{\epsilon} \sim \mathcal{N}(0,\mathcal{I}_{p{\times}p})$. We generate matrices $J \in \R^{p \times k_x}, K \in \R^{q \times k_x}$ with i.i.d Gaussian entries, and let $\A^\star = JK^T$. Similarly, we generate $\B_u^\star \in \R^{p \times k_u}$ with i.i.d Gaussian entries. We scale matrices $\A^\star$ and $\B_{u}^\star$ to have spectral norm equal to $0.2$. With this selection, the smallest angle between the column spaces of $\mathcal{A}^\star$ and $\mathcal{B}^\star_u$ is $87$ degrees. Furthermore, the quantities $\mu(\text{column-space}(\A^\star))$ and $\mu(\text{column-space}(\B_u^\star))$ are $0.072$ and $0.074$ respectively, . Under this stylized setting, we numerically evaluate Assumptions 1, 2, and 3 in \eqref{eqn:FirstFisherCond}, \eqref{eqn:FirstFisherCond3}, and \eqref{eqn:SecondFisherCond} with a Fisher information $\mathbb{I}^\star$ that takes the form:
\begin{eqnarray*}
\mathbb{I}^\star = \begin{pmatrix} \mathcal{I} + \A^\star{\A^\star}^T + \B_u^\star{\B_u^\star}^T  & \A^\star \\ {\A^\star}^T & \mathcal{I} \end{pmatrix} \otimes \begin{pmatrix} \mathcal{I} + \A^\star{\A^\star}^T + \B_u^\star{\B_u^\star}^T  & \A^\star \\ {\A^\star}^T & \mathcal{I} \end{pmatrix}
\end{eqnarray*}
We let $\omega_{y} = 0.03, \omega_{yx} = 0.03$ so that the largest angle between the pair of tangent spaces $T_y', T(L_y^\star)$ and tangent spaces $T_{yx}', T(\Theta_{yx}^\star)$ is less than $1.8$ degrees. Letting $\gamma \in (1,1.4)$, one can numerically check that \\$\inf_{\mathbb{H}' \in U{(\omega_y, \omega_{yx})}}\chi({\mathbb{H}}', \Phi_{\gamma}) > 0.2$,  $\inf_{\mathbb{H}' \in U{(\omega_y, \omega_{yx})}}\Xi({\mathbb{H}}', \Gamma_{\gamma}) > 0.4$ and\\ $\sup_{\mathbb{H}' \in U{(\omega_y, \omega_{yx})}}\varphi({\mathbb{H}}', {{\Gamma}_{\gamma}}) < 0.8$. Thus, for $\omega_y = 0.03$,  $\omega_{yx} = 0.03$, $\alpha = 0.2$, $\beta = 9$, and $\gamma \in (1,1.4)$, the Fisher information condition Assumptions 1, 2, and 3 in \eqref{eqn:FirstFisherCond},  \eqref{eqn:FirstFisherCond3} and \eqref{eqn:SecondFisherCond} are satisfied.


\subsection{{Theorem Statement}}
\label{section:theoremstatement}
We now describe the performance of the regularized maximum-likelihood programs \eqref{eqn:main} under suitable conditions on the quantities introduced in the previous section. Before formally stating our main result, we introduce some notation. Let $\sigma_y$ denote the minimum nonzero singular value of $L_y^\star$ and let $\sigma_{yx}$ denote the minimum nonzero singular value of $\Theta_{yx}^\star$. We state the theorem based on essential aspects of the conditions required for the success of our convex relaxation (i.e. the Fisher information conditions) and omit complicated constants. We specify these constants in Section 4. 

\begin{theorem}
\label{theorem:main}
Suppose that there exists $\alpha > 0$, $\beta \geq 2$, $\omega_y \in (0,1)$, $\omega_{yx} \in (0,1)$, and the choice of parameter $\gamma$ so that the population Fisher information $\mathbb{I}^\star$ satisfies Assumptions 1, 2, and 3 in \eqref{eqn:FirstFisherCond}, \eqref{eqn:FirstFisherCond3} and \eqref{eqn:SecondFisherCond}. Let $m \triangleq \max\{1, \frac{1}{\gamma}\}$, and $\bar{m} \triangleq \max\{1, {\gamma}\}$. Furthermore, suppose that the following conditions hold:
\begin{enumerate}
\item $n_{} \gtrsim \Big[\frac{\beta^2}{\alpha^2} m^{6}\Big] (p+q) $
\item $\lambda_n \sim \Big[\frac{\beta}{\alpha}{m^2}\Big]\sqrt{\frac{p+q}{n}}$
\item $\sigma_{y} \gtrsim \Big[\frac{\beta}{\alpha^5\omega_{y}}{}{m^4}\Big]\lambda_n$ 
\item $\sigma_{yx} \gtrsim \Big[\frac{\beta}{\alpha^5\omega_{yx}}{m^5\bar{m}^2}\Big]\lambda_n$
\end{enumerate}

Then with probability greater than $1-2\exp\Big\{-\tilde{C}_{prob}\frac{\alpha^2}{\beta^2{m}^4} n\lambda_n^2\Big\}$, the optimal solution $(\hat{\Theta},\hat{D}_y,\hat{L}_y)$ of \eqref{eqn:main} with i.i.d. observations $\mathcal{D}^+_{n_{}} = \{y^{(i)}, x^{(i)}\}_{i = 1}^{n}$ of $(y,x)$ satisfies the following properties:
\begin{enumerate}
\item rank($\hat{L}_y$) = rank(${L}_y^\star$), rank($\hat{\Theta}_{yx}$) = rank(${{\Theta}^\star_{yx}}$)\\[.005in]
\item $\|\hat{D}_y - D_y^\star\|_{2} \lesssim \frac{m}{\alpha^2}\lambda_n$, $\|\hat{L}_y - L_y^\star\|_{2} \lesssim \frac{m}{\alpha^2}{}\lambda_n$, $\|\hat{\Theta}_{yx} - \Theta_{yx}^\star\|_{2} \lesssim \frac{m\bar{m}}{\alpha^2}\lambda_n$, $\|\hat{\Theta}_{x} - \Theta_{x}^\star\|_{2} \lesssim \frac{m}{\alpha^2}{}\lambda_n$
\end{enumerate}
\end{theorem}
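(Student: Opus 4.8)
The plan is to follow the now-standard primal-dual witness (or convexified equality-constrained) argument for regularized maximum likelihood with low-rank plus structured components, adapting the framework of Chandrasekaran--Parrilo--Willsky and Chandrasekaran-Sanghavi-Parrilo-Willsky (latent-variable graphical model selection), and of Ravikumar--Wainwright--Raskutti--Yu (graphical Lasso). First I would set up a \emph{variety-constrained} surrogate to \eqref{eqn:main}: fix the tangent spaces $T(L_y^\star)$ and $T(\Theta_{yx}^\star)$ and solve \eqref{eqn:main} with the additional constraints $L_y \in T(L_y^\star)$ and $\Theta_{yx} \in T(\Theta_{yx}^\star)$. Because this linearized problem is smooth and strictly convex near $\Theta^\star$ (Assumption~1 gives a uniform lower bound $\alpha$ on the relevant restricted Fisher-information gain, hence invertibility of the restricted Hessian), I can solve for its unique optimum $(\bar{\Theta}, \bar{D}_y, \bar{L}_y)$ and control $\|\bar{\Theta}-\Theta^\star\|_2$ by a Brouwer/contraction fixed-point argument on the stationarity equations, balancing the sampling error (the deviation $\mathcal{F}^\dagger(\hat{\Sigma}_n - \Sigma^\star)$) against the curvature $\alpha$ and the remainder term of $\log\det$. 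The sampling error in the dual norm $\Phi_\gamma$ is $\lesssim \sqrt{(p+q)/n}$ with the stated exponential probability via a standard covariance-concentration bound for Gaussians, which is where the choices $n \gtrsim \frac{\beta^2}{\alpha^2} m^6 (p+q)$ and $\lambda_n \sim \frac{\beta}{\alpha} m^2 \sqrt{(p+q)/n}$ enter; the powers of $m = \max\{1,1/\gamma\}$ track how the norms $\Phi_\gamma,\Gamma_\gamma$ rescale the $\Theta_{yx}$ block.

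Next I would verify that this restricted optimum is in fact optimal (and \emph{strictly} so, with the right rank) for the original unconstrained program \eqref{eqn:main}. This is the dual-certificate step: I need to exhibit a subgradient of the regularizer $\gamma\|\cdot\|_\star + \mathrm{trace}(\cdot)$ at $(\bar{L}_y,\bar{\Theta}_{yx})$ that, together with the stationarity of the restricted problem, satisfies the KKT conditions of \eqref{eqn:main} \emph{strictly} in the directions orthogonal to the tangent spaces. Concretely, the component of the gradient of the likelihood lying in $\mathbb{H}^\star[2,3]^\perp$ must be dominated, in $\Gamma_\gamma$-norm, by $\lambda_n$; Assumption~3 (the irrepresentability bound $\varphi \le 1 - \frac{2}{\beta+1}$) is exactly what makes this work, yielding slack of order $\lambda_n/\beta$. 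Assumption~2 guarantees the operator $(\mathcal{P}_{\mathbb{H}[2,3]}\mathcal{G}^\dagger\mathbb{I}^\star\mathcal{G}\mathcal{P}_{\mathbb{H}[2,3]})^{-1}$ appearing in the construction is well-defined. Because the varieties of low-rank matrices are curved, I cannot use $T(L_y^\star),T(\Theta_{yx}^\star)$ directly throughout; instead I work with all nearby tangent spaces in $U(\omega_y,\omega_{yx})$ and argue that the restricted solution's own tangent spaces $T(\bar{L}_y),T(\bar{\Theta}_{yx})$ stay within $\rho$-distance $\omega_y,\omega_{yx}$ of the population ones --- this is precisely where conditions~3 and~4 on $\sigma_y,\sigma_{yx}$ are needed: the minimum nonzero singular values must exceed the perturbation $\|\bar{L}_y - L_y^\star\|_2$ (resp.\ $\|\bar{\Theta}_{yx}-\Theta_{yx}^\star\|_2$) by enough that a Wedin/Davis--Kahan-type bound keeps the perturbed tangent space inside the $\omega$-ball, and also that no spurious small singular values appear, giving the exact rank recovery in conclusion~1.

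Then I would assemble the pieces: the error bounds in conclusion~2 follow from the fixed-point estimate on $\|\bar{\Theta}-\Theta^\star\|_2$ restricted to each block (the extra $\bar{m}$ factor on the $\Theta_{yx}$ block coming from the $\gamma$-rescaling in $\Phi_\gamma$), and the rank statements in conclusion~1 from the transversality $\colspace(\Theta_{yx}^\star)\cap\colspace(L_y^\star)=\{0\}$ together with the singular-value lower bounds, which ensure the estimated low-rank pieces do not collapse or bleed into one another or into $\hat{D}_y$. I expect the \textbf{main obstacle} to be the curvature bookkeeping: one must run the contraction argument not at the fixed pair $(T(L_y^\star),T(\Theta_{yx}^\star))$ but uniformly over the neighborhood $U(\omega_y,\omega_{yx})$, simultaneously showing (a) the restricted solution exists and is close to $\Theta^\star$, (b) its induced tangent spaces remain in that neighborhood (a self-consistency/bootstrapping loop), and (c) the strict-feasibility slack from Assumption~3 survives the tangent-space perturbation with only an $O(\omega)$ loss --- it is the interaction of (b) and (c), and the need to control the third-order remainder of $\log\det$ uniformly, that forces the somewhat delicate powers of $\alpha$, $\beta$, $m$, $\bar m$, and the factors $1/\omega_y,1/\omega_{yx}$ in the sample-size and signal-strength conditions. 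The concentration bound for the score and the strict convexity estimate are comparatively routine and can be cited from the references; the novel effort is entirely in threading the dual certificate through the curved varieties of \emph{both} the symmetric block $L_y$ and the rectangular block $\Theta_{yx}$ at once.
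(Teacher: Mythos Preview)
Your proposal is essentially correct and matches the paper's strategy: primal--dual witness via a constrained surrogate, Brouwer fixed-point control of the error using Assumption~1, dual certificate via Assumption~3 with slack $\lambda_n/\beta$, Gaussian covariance concentration for the sampling term, and singular-value/Davis--Kahan control to keep tangent spaces inside $U(\omega_y,\omega_{yx})$.

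One structural point is worth flagging. You describe the surrogate as the \emph{tangent-space}-constrained problem at the population spaces $T(L_y^\star),T(\Theta_{yx}^\star)$, and then propose a ``self-consistency/bootstrapping loop'' to cope with curvature. The paper instead inserts a genuinely non-convex \emph{variety}-constrained intermediate program: it imposes $\mathrm{rank}(L_y)\le\mathrm{rank}(L_y^\star)$ and $\mathrm{rank}(\Theta_{yx})\le\mathrm{rank}(\Theta_{yx}^\star)$ together with box constraints in $T(L_y^\star)^\perp$, $T(\Theta_{yx}^\star)^\perp$ and on $\mathbb{I}^\star\mathcal{F}(\Delta)$. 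It then shows every feasible point is a smooth point of the variety (so $\mathrm{rank}=k_u,k_x$ exactly), linearizes at the \emph{solution's own} tangent spaces $T(L_y^{\mathcal M}),T(\Theta_{yx}^{\mathcal M})$, and finally shows those constraints are inactive. This device cleanly resolves the bootstrapping you anticipate: the rank constraint forces the solution onto the variety, the error bound keeps singular values bounded away from zero, and the linearization point is automatically correct. By contrast, constraining to $T(L_y^\star)$ alone allows the solution to have rank up to $2k_u$, so your exact-rank claim and the choice of where to anchor the dual certificate would require extra care (essentially the iteration you sketch). Both routes lead to the same conclusion; the paper's non-convex intermediate is just the cleaner way to close the loop.
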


%
%

We outline the proof of Theorem~\ref{theorem:main} in Section~\ref{section:proofs}. The quantities $\alpha, \beta, \omega_y, \omega_{yx}$ as well as the choices of parameters $\gamma$ play a prominent role in the results of Theorem~\ref{theorem:main}. Indeed larger values of $\alpha, \omega_y, \omega_{yx}$ and smaller values of $\beta$ (leading to a better conditioned Fisher information even for large distortions around the tangent space $T(L_y^\star)$ and $T(\Theta_{yx}^\star)$ lead to less stringent requirements on the sample complexity, on the minimum nonzero singular value of $\sigma_{y}$ of $L_y^\star$, and on the minimum nonzero singular value $\sigma_{yx}$ of $\Theta_{yx}^\star$.

\section{Experimental Results}
In this section, we demonstrate the utility of Algorithm 1 for interpreting latent variables in factor models both with synthetic and real financial asset data. 
\label{section:experiments}
\subsection{{Synthetic Simulations}}
\label{section:simulation}
We give experimental evidence for the utility of Algorithm 1 on synthetic examples. Specifically, we generate a composite factor model \eqref{eqn:composite} $y = \A^\star{x} + \B^\star_u\zeta_u + \bar{\epsilon}$ as follows: we fix $p = 40$ and $q = 10$. We let the random variables $x \in \R^q$, $\zeta_u \in \R^{k_u}$, $\bar{\epsilon} \in \R^p$ be distributed according to $x \sim \mathcal{N}(0,\mathcal{I}_{q{\times}q})$, $\zeta_u \sim \mathcal{N}(0,\mathcal{I}_{k_u{\times}k_u})$, and $\bar{\epsilon} \sim \mathcal{N}(0,\mathcal{I}_{p\times{p}})$. We generate matrices $J \in \R^{p \times k_x}, K \in \R^{q \times k_x}$ with i.i.d Gaussian entries, and let $\A^\star = JK^T$. Similarly, we generate $\B_u^\star \in \R^{p \times k_u}$ with i.i.d Gaussian entries. This approach generates a factor model \eqref{eqn:factormodel} with $k = k_x + k_u$. The composite factor model translates to a joint precision matrix $\Theta^\star$, with the submatrix $\Theta_y^\star = D_y^\star - L_y^\star$ where $D_y^\star$ is diagonal, $\rk(L_y^\star) = k_u$, and $\rk(\Theta_{yx}^\star) = k_x$. We scale matrices $\A^\star$ and $\B_u^\star$ to have spectral norm equal to $\tau$. The value $\tau$ is chosen to be as large as possible without the condition number of $\Theta^\star$ exceeding $10$ (this is imposed for the purposes of numerical conditioning). We obtain four models with $(k_x,k_u) = (1,1), (k_x,k_u) = (2,2)$, and $(k_x,k_u) = (4,4)$, and $(k_x,k_u) = (6,6)$. \\

For the purposes of this experiment, we assume that the input to Algorithm 1 is the oracle factor model specified by the parameters $(D_y^\star, L_y^\star + \Theta_{yx}^\star(\Theta_{x})^{-1}\Theta_{xy}^\star$), and demonstrate the success of steps 2-3 of Algorithm 1. In particular, for each model, we generate $n$ samples of responses $y$ and covariates $x$, and use these observations as input to the convex program \eqref{eqn:main}. The regularization parameters $\lambda_n, \gamma$ are chosen so that the estimates $(\hat{\Theta}, \hat{L}_y, \hat{D}_y)$ satisfy $(i)~\mathrm{rank}(L_y^\star + \Theta_{yx}^\star(\Theta_{x}^\star)^{-1}\Theta_{xy}^\star) = \mathrm{rank}(\hat{L}_y + \hat{\Theta}_{yx}\hat{\Theta}_x^{-1}\hat{\Theta}_{xy})$,\\ $(ii)~\colspace(\hat{\Theta}_{yx}) \cap \colspace(\hat{L}_{y}) = \{0\}$, and the deviation from the underlying factor model \\$\max\{\|{D}_y^\star - \hat{D}_y \|_2 /\|{{D}}_y^\star\|_2, \|{L}_y^\star - [\hat{L}_y+\hat{\Theta}_{yx}\hat{\Theta}_x^{-1}\hat{\Theta}_{xy}]\|_2 / \|{L}_y^\star\|_2\}$ is minimized. Figure 1(a) shows the magnitude of the deviation for different values of $n$. Furthermore, for each fixed $n$, we use the choice of regularization parameters $(\lambda_n, \gamma)$ to compute the probability of obtaining structurally correct estimates of the composite model (i.e. $\rk(\hat{L}_y)  = \rk(L_y^\star)$ and $\rk(\Theta_{yx}^\star) = \rk(\hat{\Theta}_y)$). These probabilities are evaluated over $10$ experiments and are shown in Figure 1(b). These results support Theorem 1 that given (sufficiently many) samples of responses/covariates, the convex program \eqref{eqn:main} provides accurate estimates of the composite factor model \eqref{eqn:composite}. 
%

\FloatBarrier
\begin{figure}[!http]
\centering
\subfigure[composite factor model error]{
\includegraphics[width=5cm, height = 5cm]{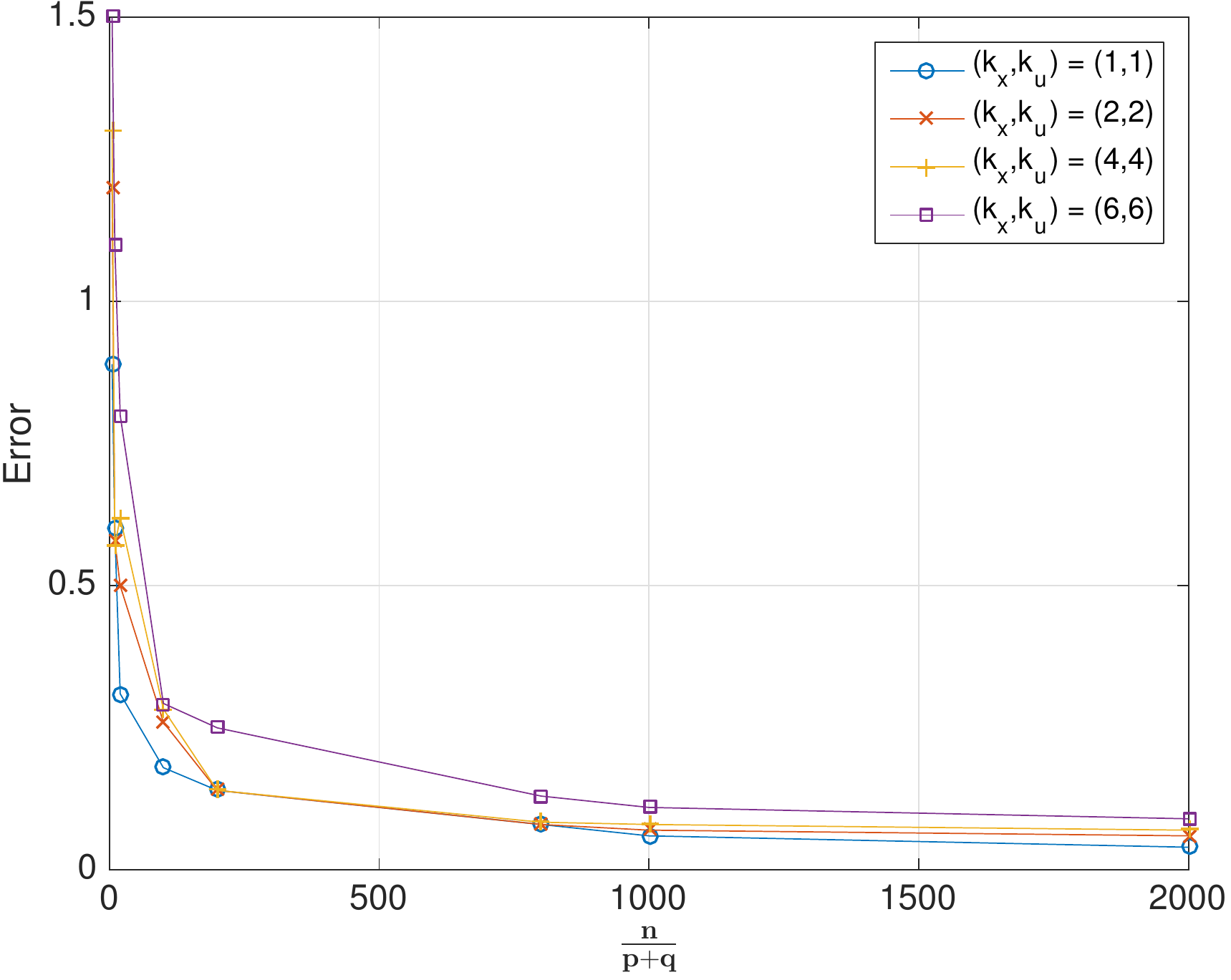}
}\subfigure[composite factor model structural recovery]{
\includegraphics[width=5cm, height = 5cm]{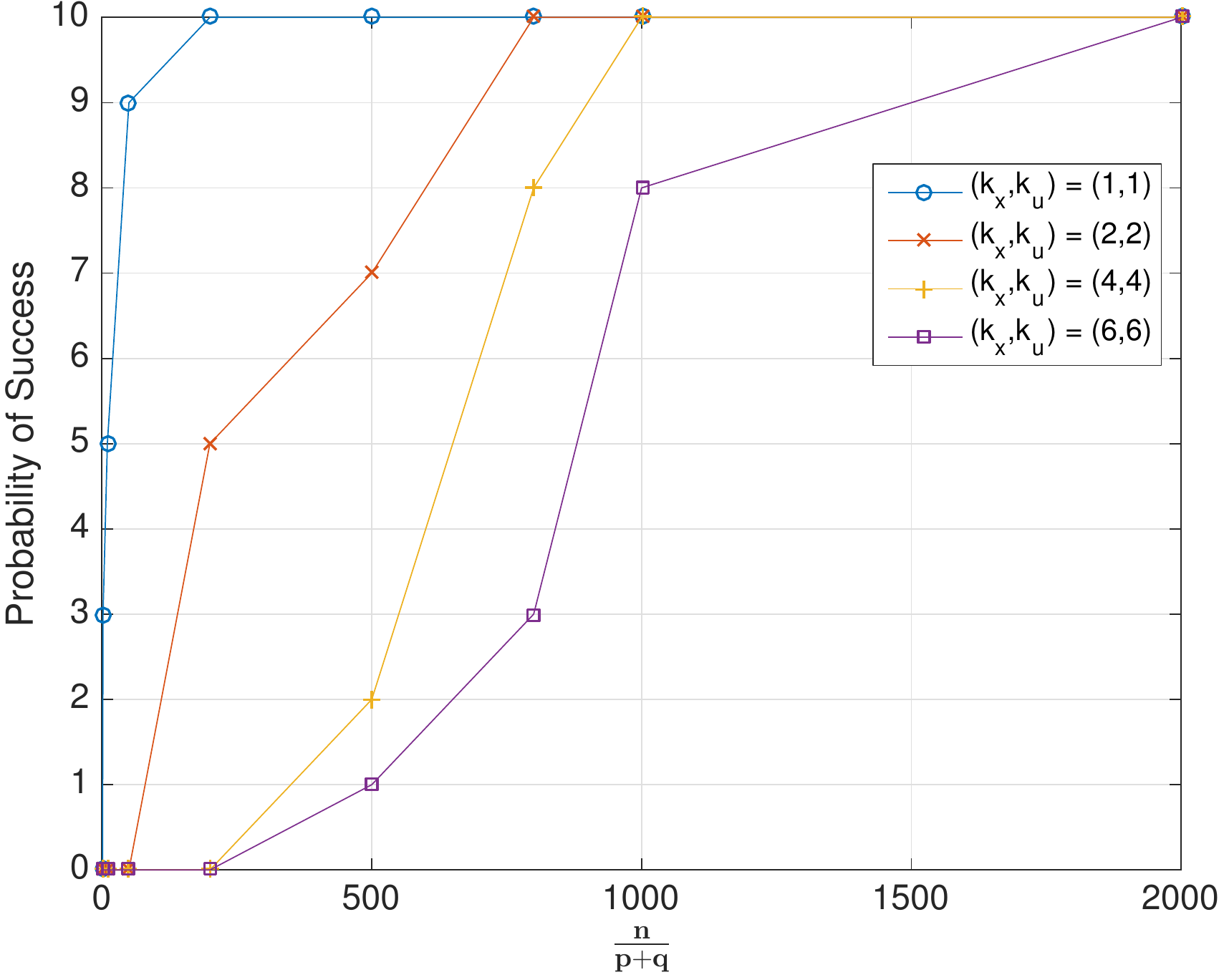}
}
{\caption{Synthetic data: plot shows the error (defined in the main text) and probability of correct structure recovery in composite factor models. The four models studied are $(i)~(k_x, k_u) = (1,1) $, $(ii)~(k_x, k_u) = (2,2)$, and $(iii)~(k_x, k_u) = (4,4)$, and $(iv)~(k_x, k_u) = (6,6)$. For each plotted point in (b), the probability of structurally correct estimation is obtained over $10$ trials.}} 
\end{figure}
\FloatBarrier

\subsection{{Experimental Results on Financial Asset Data}}
We consider as our responses $y$ the monthly stock returns of $p = 45$ companies from the Standard and Poor index over the period March $1982$ to March $2016$, which leads to a total of $n_{} = 408$ observations. We then obtain measurements of $13$ covariates that can plausibly influence the values of stock prices: consumer price index, producer price index, EUR to USD exchange rate, federal debt (normalized by GDP), federal reserve rate, GDP growth rate, government spending (normalized by GDP), home ownership rate, industrial production index, inflation rate, mortgage rate, oil import, and saving rate. Of these $13$ covariates, the covariates federal debt, government spending, GDP growth rate, and home ownership rate are only available at a quarterly scale. Monthly observations are available for the remaining covariates. Evidently, many more observations of $y$ are available than of $(y,x)$ jointly. As described in Section~\ref{section:algorithm}, this scenario motivates us to first learn a factor model using the monthly observations of $y$. We then associate semantics to the latent variables of this factor model by fitting a composite factor model to the more limited joint observations of $(y,x)$.

As a factor model is not available in advance, we begin with learning a factor model~\eqref{eqn:factormodel} using observations of $y$. In particular, we fit observations $\mathcal{D}_n = \{y^{(i)}\}_{i = 1}^n$ to the factor model \eqref{eqn:factormodel} using the following convex relaxation:
\begin{eqnarray}
(\hat{\tilde{D}}_y, \hat{\tilde{L}}_y) = \arg\min_{\substack{\tilde{D}_y,\tilde{L}_y \in \Sp^p \\ \tilde{D}_y - \tilde{L}_y \succ 0}} & -\ell(\tilde{D}_y-\tilde{L}_y; \mathcal{D}_{n}) + \tilde{\lambda}_n \mathrm{trace}(\tilde{L}_y) \nonumber \\ \mathrm{s.t.} & \tilde{L}_y \succeq 0, \tilde{D}_y ~\mathrm{is~diagonal}. &
\label{eqn:main2}
\end{eqnarray}
We note that the convex program \eqref{eqn:main2} is a specialization of the convex program \eqref{eqn:main} for learning a composite factor model. The parameter $\tilde{\lambda}_n$ in \eqref{eqn:main2} provides a tradeoff between fidelity of the model to the observations and the complexity of the model (i.e., the number of latent variables). In contrast to minimum-trace factor analysis -- in which the objective is to decompose a covariance matrix as the sum of a diagonal matrix and a low-rank matrix \cite{Ledermann,Shapiro1,S2,S3}-- the regularized maximum-likelihood convex program \eqref{eqn:main2} fits factor models by decomposing a precision matrix as the difference between a diagonal matrix and a low-rank matrix. Although the focus of this paper is not about learning a factor model accurately, for the sake of completeness, we show in Section 5.6 of the appendix that under suitable conditions on the population model, the convex relaxation \eqref{eqn:main2} provides an accurate estimate of the underlying factor model. \\

For the purpose of learning a factor model, we set aside a random subset of $n_{\train} = 308$ of the total $n_{} = 408$ observations as a training set and the remaining subset of $n_{\test} = 100$ as the test set. We let $\mathcal{D}_\train = \{y^{(i)}\}_{i  = 1}^{n_{\train}}$ and $\mathcal{D}_\test = \{y^{(i)}\}_{i  = 1}^{n_{\test}}$ be the corresponding training and testing data sets respectively. We use the observations $\mathcal{D}_\train$ as input to the convex program~\eqref{eqn:main2} where the regularization parameter $\tilde{\lambda}_n$ is chosen via cross-validation. Concretely, for a particular choice of $\tilde{\lambda}_n$, we supply $\mathcal{D}_\train$ as input to the convex program \eqref{eqn:main2}, and solve \eqref{eqn:main2} to obtain a factor model specified by $(\hat{\tilde{D}}_y , \hat{\tilde{L}}_y)$. We then compute the average log-likelihood over the testing set $\mathcal{D}_\test$ using the distribution specified by the precision matrix $\hat{\tilde{D}}_y - \hat{\tilde{L}}_y$. We perform this procedure as we vary $\tilde{\lambda}_n$ from $0.04$ to $4$ in increments of $0.004$. Figure~\ref{fig:testp} shows a plot of $\text{rank}(\hat{\tilde{L}}_y))$ (i.e. number of latent factors) vs. average log-likelihood performance on the testing set. Notice that fixing the number of latent factors does not lead to a unique factor model as varying the regularization parameter $\tilde{\lambda}_n$ may lead to a change in the estimated model, but no change in its structure (i.e. $\text{rank}(\hat{\tilde{L}}_y)$ remains the same). As larger values of average log-likelihood are indicative of a better fit to test samples, these results suggest that $10$ latent factors influence stock prices. We thus focus on associating semantics to the factor model with the largest average log-likelihood performance that consists of $10$ latent factors.

\FloatBarrier
\begin{figure}[!http]
\centering
\includegraphics[width=8cm, height = 6cm]{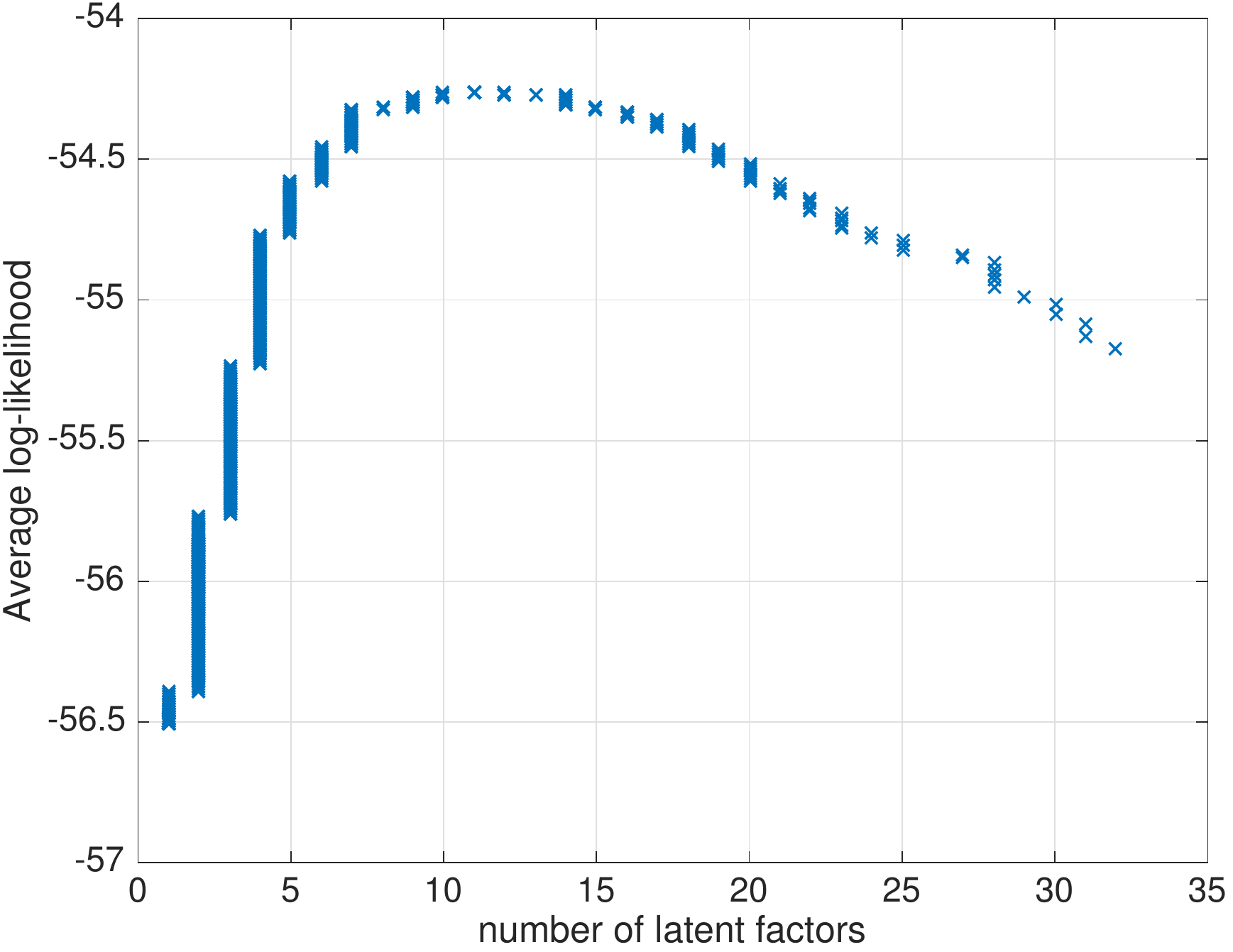}
{\caption{Number of latent factors vs. average log-likelihood over testing set. These results are obtained by sweeping over parameters $\tilde{\lambda}_n \in [0.04, 4]$ in increments of $0.004$ and solving the convex program~\eqref{eqn:main2}}}. 
\label{fig:testp}
\end{figure}
\FloatBarrier

We now proceed with the steps 2-3 of Algorithm 1. To obtain a consistent set of joint observations $(y,x)$ to employ as input to the convex program \eqref{eqn:main}, we apply a 3-month averaging for each variable that is available at a monthly scale (i.e. the responses $y$ and the covariates $x$ with the exception of the four specified earlier) to obtain quarterly measurements. This leads to $n_{} = 137$ quarterly measurements. We denote the quarterly responses and covariates by $\tilde{y}$ and $\tilde{x}$, respectively. We let $\mathcal{D}_n^{+} = \{(\tilde{y}^{(i)},\tilde{x}^{(i)})\}_{i  = 1}^{n}$ be the set of joint quarterly observations of response $\tilde{y}$ and covariates $\tilde{x}$. Using observations $\mathcal{D}_n^+$ as input to the convex program~\eqref{eqn:main} , we perform an exhaustive sweep over parameter space $(\lambda_n, \gamma)$ to learn composite models with estimates $(\hat{\Theta}, \hat{D}_y, \hat{L}_y)$ such that $\text{rank}(\hat{\Theta}) = 0,1,2,\dots 10$, and $\text{rank}(\hat{L}_y) = 0,1,2,\dots 10$. As we are interested comparing these composite models to the factor model with $10$ latent variables, we finely grid the parameter space $(\lambda_n, \gamma)$ so that there are a large number of models for which $\text{rank}(\hat{\Theta}) + \text{rank}(\hat{L}_y)$ is equal to $10$. Among these models, we restrict to those that satisfy the conditions of step 3 of Algorithm 1. Table~\ref{table:nummodels} shows the number of models that satisfy these conditions for $\text{rank}(\hat{\Theta}_{yx}) = 1,\dots,5$. For each $d = 1,\dots,5$, we then identify the composite factor model which minimizes the quantity $\max\{\|\hat{\tilde{D}}_y - \hat{D}_y \|_2 /\|\hat{\tilde{D}}_y\|_2, \|\hat{\tilde{L}}_y - \hat{L}_y-\hat{\Theta}_{yx}\hat{\Theta}_x^{-1}\hat{\Theta}_{xy}]\|_2 / \|\hat{\tilde{L}}_y\|_2\}$. Table 2 shows the values of this quantity for $\text{rank}(\hat{\Theta}_{yx}) = 1,\dots,5$ with respect to the factor model with $10$ latent variables.
\FloatBarrier
\begin{table}[ht]
\centering 
\begin{tabular}{c c} 
\hline 
$(\text{rank}(\hat{\Theta}_{yx}), \text{rank}(\hat{L}_{y}))$ & $\#$ models satisfying conditions of step 2. \\
\hline 
(1,9) & 167\\ 
(2,8) & 196 \\
(3,7) & 218\\
(4,6) & 110 \\
(5,5) & 98 \\
\hline 
\end{tabular}
{\caption{Number of composite factor models with $\text{rank}(\hat{\Theta}_{yx}) = 1,\dots,5$ that satisfy the requirements of step 2 in Algorithm 1(for the factor model with $10$ latent variables).}}
\label{table:nummodels} 
\end{table}
\FloatBarrier
\FloatBarrier
\begin{table}[ht]
\centering 
\begin{tabular}{c c} 
\hline 
$(\text{rank}(\hat{\Theta}_{yx}), \text{rank}(\hat{L}_{y}))$ & $\max\{\|\hat{\tilde{D}}_y - \hat{D}_y \|_2 /\|\hat{\tilde{D}}_y\|_2, \|\hat{\tilde{L}}_y - \hat{L}_y-\hat{\Theta}_{yx}\hat{\Theta}_x^{-1}\hat{\Theta}_{xy}]\|_2 / \|\hat{\tilde{L}}_y\|_2\}$\\[0.5ex]
\hline 
(1,9)& 0.39 \\ 
(2,8) & 0.40 \\
(3,7) & 0.47 \\
(4,6) & 0.51 \\
(5,5) & 0.55 \\ [1ex] 
\hline 
\end{tabular}
\label{table:deviation12} 
{\caption{Deviation of the candidate composite factor model from the factor model consisting of $10$ latent variables.} } 
\end{table}
\FloatBarrier

Examining Table 2, we note that there is large increase in deviation as $\text{rank}(\hat{\Theta}_{yx})$ is increased above $2$. Thus, we consider the composite factor model with $\text{rank}(\hat{\Theta}_{yx}) = 2$ to be an acceptable approximation of the underlying factor model. As a final step of the algorithm, we investigate the properties of the two-dimensional row-space of $\hat{\Theta}_{yx}$ to shed some light on those covariates that appear to play a significant role in capturing some of the latent phenomena in the $10$-factor model.  In particular, for the composite factor model with $(\text{rank}(\hat{\Theta}_{yx}), \text{rank}(\hat{L}_{y})) = (2,8)$ (second row in Table 2), we let $V \in \R^{13\times 2}$ denote a matrix with orthogonal, unit-norm columns such that the columns of $V$ form a basis for the row space of $\hat{\Theta}_{yx}$ (such a matrix may be computed, for example, via the singular value decomposition).  Thus, the projection of $x$ onto the row-space of $\hat{\Theta}_{yx}$ -- given by $V^Tx$ -- represents the interpretable component of the latent variables. We then consider the Euclidean-squared-norm  of the $i$-th row of $V$, as this specifies the relative strength of the $i$-th covariate. As shown in Table~\ref{table:covariaterelevance}, all covariates have some contribution (as we allow general linear combinations of the covariates $x$ in the composite factor model~\eqref{eqn:composite}). However, the covariates exchange rate, government expenditures, and GDP growth rate seem to be the most relevant, and the covariates mortgage rate and oil import seem to be the least relevant.
\FloatBarrier
\begin{table}[ht]
\centering 
\begin{tabular}{c c} 
\hline\hline 
covariate & strength\\ [0.5ex] 
\hline 
Exchange rate & 0.18 \\
Government expenditures & 0.14 \\
GDP growth rate & 0.11 \\
Home ownership rate & 0.09\\
Industrial production rate & 0.08 \\
PPI & 0.08 \\
CPI & 0.07 \\
Federal debt & 0.06 \\
Saving rate & 0.04 \\
Inflation rate & 0.04\\
Federal reserve rate & 0.03\\
Oil import & 0.03 \\
Mortgage rate & 0.01 \\
\hline 
\end{tabular}
{\caption{Strength of each covariate in the composite factor model with $2$-dimensional projection of covariates and $8$ latent variables} 
\label{table:covariaterelevance}} 
\end{table}
\FloatBarrier
\section{Proof Strategy of Theorem 1}
\label{section:proofs}
We first begin by specifying the constants in Theorem 1. Let $\psi \triangleq \|{\Theta^\star}^{-1}\|_2$, $\tilde{C} = 352\psi^3$, $\tilde{C}_0 = \max\{\frac{1}{192\psi}, 2\psi, \frac{1}{24\psi^2\max\{\frac{2}{\psi^2}+8,\frac{1}{\psi}\}}, \frac{\psi}{8}\}$, $\tilde{C}_{samp} = \tilde{C}\tilde{C}_0$, $\tilde{C}_1 =\frac{1}{6}(186\psi^2+56\psi^4)$, $\tilde{C}_{\sigma} = 6\psi^4(56\psi^4+186\psi^2)^2$, and $\tilde{C}_{prob} = \frac{1}{247808\psi^6}$. The precise conditions on the number of observations, the regularization parameter $\lambda_n$, minimum nonzero singular value of $L_y^\star$ and minimum nonzero singular value of $\Theta_{yx}^\star$ for Theorem 1 are given by:
\begin{enumerate}
\item $n \geq \tilde{C}_{samp}^2\Big[\frac{\beta^4}{\alpha^2}m^6(p+q)\Big]$
\item $\lambda_n \in \Bigg[{\tilde{C}_{}}\Big\{\frac{\beta}{\alpha}{m^2}\sqrt{\frac{p+q}{n}}\Big\}, \frac{1}{\beta{m}{\tilde{C}_{0}}}\Bigg]$
\item $\sigma_y \geq \tilde{C}_{\sigma}\Big[\frac{\beta}{\alpha^5\omega_y} m^4\lambda_n\Big]$
\item  $\sigma_{yx} \geq \tilde{C}_{\sigma_{}}\Big[\frac{\beta}{\alpha^5\omega_{yx}} m^5\bar{m}^2\lambda_n\Big]$
\end{enumerate}
Moreover, under these conditions, with probability greater than \\ $1-2\exp\Big(-\tilde{C}_{prob}\frac{\alpha^2}{m^4\beta^2}n\lambda_n^2\Big)$, the optimal solution of the convex program \eqref{eqn:main} with estimates $(\hat{\Theta}, \hat{L}_y, \hat{D}_y)$ satisfies the following properties:
\begin{enumerate}
\item rank($\hat{L}_y$) = rank(${L}_y^\star$), rank($\hat{\Theta}_{yx}$) = rank(${{\Theta}^\star_{yx}}$)\\[.005in]
\item $\|\hat{D}_y - D_y^\star\|_{2} \leq \tilde{C}_1\frac{m}{\alpha^2}\lambda_n$, $\|\hat{L}_y - L_y^\star\|_{2} \leq \tilde{C}_1 \frac{m}{\alpha^2}\lambda_n$, $\|\hat{\Theta}_{yx} - \Theta_{yx}^\star\|_{2}\leq \tilde{C}_1\frac{m\bar{m}}{\alpha^2}\lambda_n$, $\|\hat{\Theta}_{x} - \Theta_{x}^\star\|_{2} \leq \tilde{C}_1\frac{m}{\alpha^2}\lambda_n$
\end{enumerate}

Now under assumptions of Theorem 1, we construct appropriate primal feasible variables $(\hat{\Theta},\hat{D}_y, \hat{L}_y)$ that satisfy the conclusions of the theorem - i.e., $\hat{\Theta}_{yx}$, $\hat{L}_y$ are low-rank (with the same ranks as the underlying population quantities $\Theta_{yx}^\star$ and $L_y^\star$) - and for which there exists a corresponding dual variable certifying optimality. This proof technique is sometimes also referred to as a primal-dual witness or certificate approach \cite{Wai2009}. The high-level proof strategy is similar in spirit to the proofs of consistency results for sparse graphical model recovery \cite{Ravikumar} and latent variable graphical model recovery \cite{Chand2012}, although our convex program and the conditions required for its success are different from these previous results. Consider the following convex program

\begin{eqnarray}
(\hat{\Theta}, \hat{D}_y, \hat{L}_y) = \arg\min_{\substack{\Theta \in \Sp^{p+q}, ~\Theta \succ 0 \\ D_y,L_y \in \Sp^p}} & -\ell(\Theta; \mathcal{D}_n^+) + \lambda_n [\gamma\|\Theta_{yx}\|_{\star} + \|L_y\|_{\star}] \nonumber \\ \mathrm{s.t.} & \Theta_{y} = D_y - L_y, D_y ~\mathrm{is~diagonal}
\label{eqn:ConvexRelaxed_N}
\end{eqnarray}

Comparing \eqref{eqn:ConvexRelaxed_N} with the convex program \eqref{eqn:main}, the difference is that we no longer constrain ${L}_y$ to be a positive semidefinite matrix. In particular, if ${L}_y \succeq 0$, then the nuclear norm of the matrix ${L}_y$ in the objective function of \eqref{eqn:ConvexRelaxed_N} reduces to the trace of $L_y$. We show in the appendix that with high probability, the matrix $\hat{L}_y$ is positive semidefinite. Standard convex analysis states that $(\hat{\Theta}, \hat{D}_y, \hat{L}_y)$ is the solution of the convex program \eqref{eqn:ConvexRelaxed_N} if there exists a dual variable $\Lambda \in \Sp^p$ with the following optimality conditions being satisfied:
\begin{eqnarray*}
[\Sigma_n - {\hat{\Theta}}^{-1}]_y + \Lambda = 0&;& \hspace{.1in} [\Sigma_n - {\hat{\Theta}}^{-1}]_y \in \lambda_n\partial\|\hat{L}_y\|_\star\\[.01in]
[\Sigma_n - {\hat{\Theta}}^{-1}]_{yx}  \in -\lambda_n\gamma\partial\|\hat{\Theta}_{yx}\|_{\star}&;&  \hspace{.1in} [\Sigma_n - {\hat{\Theta}}^{-1}]_{x}  = 0\\ \hat{\Theta}_y = \hat{D}_y - \hat{L}_y; \hspace{.1in} \hat{D}_y \text{ is diagonal}&;& \hspace{.1in} \Lambda_{i,i} = 0 ~ \text{ for } i = 1,2,\dots p
\end{eqnarray*}
Recall that elements of the subdifferential with respect to nuclear norm at a matrix $M$ have the key property that they decompose with respect to the tangent space $T(M)$. Specifically, the subdifferential with respect to the nuclear norm at a matrix $M$ with (reduced) SVD given by $M = UQV^T$ is as follows:
\begin{eqnarray*}
N \in \partial\|M\|_{\star} \Leftrightarrow \mathcal{P}_{T(M)} (N) = UV^T ~, ~ \|\mathcal{P}_{T(M)^\perp} (N)\|_2 \leq 1,
\end{eqnarray*}
where $\mathcal{P}$ denote a projection operator. Let us denote the subspace $\mathcal{W} \in \Sp^p$ as the set of diagonal matrices with nonnegative entries. Let SVD of $\hat{L}_y$ and $\hat{\Theta}_{yx}$ be given by $\hat{L}_y = \bar{U}\bar{Q}\bar{V}'$ and $\hat{\Theta}_{yx} = \breve{U}\breve{Q}{\breve{V}}'$ respectively, and $Z \triangleq  (0, \hspace{.1in} \lambda_n\bar{U}\bar{V}', \hspace{.1in}  -\lambda_n\gamma_{}{\breve{U}}{\breve{V}}',  \hspace{.1in} 0)$. Setting $\Lambda  = [\Sigma_n - \hat{\Theta}^{-1}]_{Y, \text{off diagonal}}$, and letting $\mathbb{H} = \mathcal{W} \times T(\hat{L}_y) \times T(\hat{\Theta}_{yx}) \times \Sp^q$, the optimality conditions of \eqref{eqn:ConvexRelaxed_N} can be reduced to:
\begin{center}
\begin{enumerate}
\item $\mathcal{P}_{\mathbb{H}}\mathcal{F}^{\dagger}(\Sigma_n - \hat{\Theta}^{-1}) = Z$
\item $\|\mathcal{P}_{T(\hat{L}_y)^\perp} (\Sigma_n - \hat{\Theta}^{-1})_y\|_2 < \lambda_n$;  $\|\mathcal{P}_{T(\hat{\Theta}_{yx})^\perp} (\Sigma_n - \hat{\Theta}^{-1})_{yx}\|_2 < \lambda_n\gamma$
\end{enumerate}
\end{center}

Our analysis proceeds by constructing variables $(\hat{\Theta}, \hat{D}_y, \hat{L}_y)$ that satisfy the optimality conditions specified above. Consider the optimization program \eqref{eqn:ConvexRelaxed_N} with additional (non-convex) constraints that $L_y$ and $\Theta_{yx}$ belong to algebraic variety of low rank matrices specified by $L_y^\star$ and $\Theta_{yx}^\star$. While this new program is non-convex, it has a very interesting property that at the global optimal solution (and indeed at any locally optimal solution) $\hat{L}_y$ and $\hat{\Theta}_{yx}$ are smooth points of their respective algebraic varieties. This observation suggests that the Lagrange multipliers corresponding to the additional variety constraints belongs to $T(\hat{L}_y)^\perp$ and $T(\hat{\Theta}_{yx})^\perp$ respectively. We show under suitable conditions that $(\hat{\Theta}, \hat{D}_y, \hat{L}_y)$ also satisfy the second optimality condition of \eqref{eqn:ConvexRelaxed_N} corresponding to the tangent spaces $T(\hat{L}_y)^\perp$ and $T(\hat{\Theta}_{yx})^\perp$. Thus $(\hat{\Theta}, \hat{D}_y, \hat{L}_y)$ is a unique solution of \eqref{eqn:main} and as constructed, is algebraically consistent (i.e. $\text{rank}(\hat{L}_y) = \text{rank}(L_y^\star)$ and $\text{rank}(\hat{\Theta}_{yx}) = \text{rank}(\Theta_{yx}^\star)$)

\subsection{Results Proved in the Appendix}
To ensure that the estimate $\hat{\Theta}$ is close to the population quantity $\Theta^\star$, the quantity $E = \hat{\Theta} - \Theta^\star$ must be small. Since the optimality conditions of \eqref{eqn:ConvexRelaxed_N} are stated in terms of $\hat{\Theta}^{-1}$, we bound the deviation between $\hat{\Theta}^{-1}$ and ${\Theta^\star}^{-1}$. Specifically, the Taylor series expansion of $\hat{\Theta}^{-1}$ around $\Theta^\star$ is given by:
\begin{eqnarray*}
\hat{\Theta}^{-1} = (\Theta^\star+ E)^{-1} = {\Theta^{\star}}^{-1} + {\Theta^{\star}}^{-1}E{\Theta^{\star}}^{-1} + R_{\Sigma^\star}(E)
\end{eqnarray*}
where, $R_{\Sigma^\star}(E) = \Sigma^\star\Big[\sum_{k = 2}^{\infty}(-E\Theta^\star)^k\Big]$. Recalling that $\mathbb{I}^\star = {\Theta^\star}^{-1} \otimes {\Theta^\star}^{-1}$, we note that $\hat{\Theta}^{-1} - {\Theta^\star}^{-1} = \mathbb{I}^\star({E}) + {R}_{\Sigma^\star}(E)$.  In Section~\ref{section:Fishercond},  we imposed assumptions 1,2, and 3 in \eqref{eqn:FirstFisherCond}, \eqref{eqn:FirstFisherCond3}, and \eqref{eqn:SecondFisherCond} on $\mathbb{I}^\star$. These assumptions allow us to control $\mathbb{I}^\star(E)$ when $E$ is restricted to certain directions. We bound the remainder term ${R}_{\Sigma^\star}(E)$ in Proposition~\ref{prop:Remainder} where $E$ is restricted to live in a certain space. Specifically, consider the following constrained optimization program:
\begin{eqnarray}
(\tilde{\Theta}, \tilde{D}_y, \tilde{L}_y) = \argmin_{\substack{\Theta \in \Sp^{q+p}, ~\Theta \succ 0 \\ D_y,{L}_y \in \Sp^p}} & -\ell(\Theta; \mathcal{D}_+^n) + \lambda_n [\|{L_y}\|_{\star} + \gamma \|\Theta_{yx}\|_\star] \nonumber \\ \mathrm{s.t.} ~~\hspace{-0.05in} & \Theta_y = D_y - {L}_y, ~(D_y, {L}_y, \Theta_{yx}, \Theta_{x}) \in \mathbb{H}' \label{eqn:ConvexRelaxed231_N}
\end{eqnarray}
Here $\mathbb{H}' = \mathcal{W} \times T_y' \times T_{yx}' \times \Sp^{q}$, where $T_y'$  is a subspace in $\Sp^{p}$, and $T_{yx}'$ is a subspace in $\R^{p \times q}$. Let $\Delta = (\tilde{D}_y - D_y^\star, \tilde{L}_y - L_y^\star, \tilde{\Theta}_{yx} - \Theta_{yx}^\star, \tilde{\Theta}_x - \Theta_x^\star)$ denote the error in the estimated variables. Furthermore, let $\Delta_1 = \tilde{D}_y - D_y^\star$, $\Delta_2 = \tilde{L}_y - L_y^\star$ and so forth. In the following proposition, we bound the remainder term $R_{\Sigma^\star}(\mathcal{F}(\Delta))$ defined earlier. 

\begin{proposition}
\label{prop:Remainder}
Let $\psi \triangleq \|{\Theta^\star}^{-1}\|_2$ and $C' = (3+\gamma)\psi$. If $\Phi_{\gamma}[\Delta] \leq \frac{1}{2C'}$, then $\Phi_{\gamma}[\mathcal{F}^{\dagger}R_{\Sigma^\star}(\mathcal{F}(\Delta))] \leq 2m{\psi}C'^2 \Phi_{\gamma}[\Delta]^2$.
\end{proposition}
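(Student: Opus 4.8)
\emph{Proof proposal.} Write $E \triangleq \mathcal{F}(\Delta)$, the symmetric perturbation appearing in the expansion $\hat{\Theta}^{-1} = {\Theta^\star}^{-1} + {\Theta^\star}^{-1} E {\Theta^\star}^{-1} + R_{\Sigma^\star}(E)$ that precedes the proposition. The plan is to reduce the claim to a single spectral-norm bound on $R_{\Sigma^\star}(E)$ that is quadratic in $\|E\|_2$, and then to move between the spectral norm on $\Sp^{p+q}$ and the weighted block-norm $\Phi_\gamma$ using the structure of $\mathcal{F}^\dagger$. There are three steps.

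First, I would control $\|E\|_2$ by $\Phi_\gamma[\Delta]$. Since $\mathcal{F}(\Delta) = \bigl(\begin{smallmatrix} \Delta_1 - \Delta_2 & \Delta_3 \\ \Delta_3^{T} & \Delta_4 \end{smallmatrix}\bigr)$, splitting it into a block-diagonal part and an off-diagonal part and applying the triangle inequality for $\|\cdot\|_2$ gives $\|E\|_2 \le \|\Delta_1\|_2 + \|\Delta_2\|_2 + \|\Delta_3\|_2 + \|\Delta_4\|_2 \le (3+\gamma)\,\Phi_\gamma[\Delta]$, using $\|\Delta_3\|_2 \le \gamma\,\Phi_\gamma[\Delta]$ and $\|\Delta_j\|_2 \le \Phi_\gamma[\Delta]$ for $j \in \{1,2,4\}$. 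With the hypothesis $\Phi_\gamma[\Delta] \le \tfrac{1}{2C'}$ and $C' = (3+\gamma)\psi$, this yields $\|E\|_2 \le \tfrac{1}{2\psi}$, hence $\|E\Sigma^\star\|_2 \le \|\Sigma^\star\|_2\,\|E\|_2 = \psi\,\|E\|_2 \le \tfrac12 < 1$, so the Neumann series defining $R_{\Sigma^\star}(E)$ converges.

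Second, summing the geometric tail I would write $R_{\Sigma^\star}(E) = \Sigma^\star (E\Sigma^\star)^2 (I + E\Sigma^\star)^{-1}$ and estimate $\|R_{\Sigma^\star}(E)\|_2 \le \|\Sigma^\star\|_2 \,\|E\Sigma^\star\|_2^{2}\,\|(I+E\Sigma^\star)^{-1}\|_2 \le \psi\,(\psi\|E\|_2)^2\,\tfrac{1}{1 - 1/2} = 2\psi^3\|E\|_2^2$. Third, since $E$ is symmetric so is $R_{\Sigma^\star}(E)$, say $R_{\Sigma^\star}(E) = \bigl(\begin{smallmatrix} Q & K \\ K^{T} & O \end{smallmatrix}\bigr)$; then $\mathcal{F}^\dagger R_{\Sigma^\star}(E) = (Q,Q,K,O)$, and since $Q,K,O$ are submatrices of $R_{\Sigma^\star}(E)$ and passing to a submatrix does not increase the spectral norm, $\Phi_\gamma[\mathcal{F}^\dagger R_{\Sigma^\star}(E)] = \max\{\|Q\|_2, \|K\|_2/\gamma, \|O\|_2\} \le \max\{1,1/\gamma\}\,\|R_{\Sigma^\star}(E)\|_2 = m\,\|R_{\Sigma^\star}(E)\|_2$. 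Chaining the three steps and using $(3+\gamma) = C'/\psi$ gives $\Phi_\gamma[\mathcal{F}^\dagger R_{\Sigma^\star}(\mathcal{F}(\Delta))] \le 2m\psi^3\|E\|_2^2 \le 2m\psi^3(3+\gamma)^2\Phi_\gamma[\Delta]^2 = 2m\psi C'^2\Phi_\gamma[\Delta]^2$, which is the assertion.

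I do not expect a genuine difficulty here; the argument is essentially bookkeeping of operator-norm inequalities. The one point that needs care is to use the block triangle inequality in the (slightly lossy) form that produces the constant $(3+\gamma)$, so that it is consistent with $C' = (3+\gamma)\psi$ and the threshold $\Phi_\gamma[\Delta] \le \tfrac{1}{2C'}$ is precisely what makes $\|E\Sigma^\star\|_2 \le \tfrac12$ and keeps the geometric-series factor at $2$; and to track the factor $m = \max\{1, 1/\gamma\}$ that enters when converting the spectral norm of the $(y,x)$ off-diagonal block into the $\Phi_\gamma$ scaling.
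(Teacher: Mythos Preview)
Your proposal is correct and follows essentially the same route as the paper's own proof: bound $\|\mathcal{F}(\Delta)\|_2$ by $(3+\gamma)\Phi_\gamma[\Delta]$ via the block triangle inequality, control the Neumann-series remainder by a geometric sum (the paper sums term-by-term where you pass to the closed form $\Sigma^\star(E\Sigma^\star)^2(I+E\Sigma^\star)^{-1}$), and then convert back using $\Phi_\gamma[\mathcal{F}^\dagger(\cdot)] \le m\|\cdot\|_2$. Your write-up is in fact more explicit than the paper's, which compresses the last two steps into ``some algebra.''
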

Notice the bound on $R_{\Sigma^\star}(\mathcal{F}(\Delta))$ is dependent on the error term $\Phi_{\gamma}[\Delta]$. In the following proposition, we bound this error so that we can control the remainder term. Suppose that for $\alpha > 0$, $\beta \geq 2$, $\omega_{y} \in (0,1)$, and $\omega_{yx} \in (0,1)$, the Fisher information conditions \eqref{eqn:FirstFisherCond}, \eqref{eqn:FirstFisherCond3}, and \eqref{eqn:SecondFisherCond} are satisfied. Suppose we let $T_y'$ and $T_{yx}'$ be tangent spaces to the low-rank matrix varieties and $\rho(T_y', T(L_y^\star)) \leq \omega_y$ and $\rho(T_{yx}', T(\Theta_{yx}^\star)) \leq \omega_{yx}$. Let $E_n = \Sigma^\star - \Sigma_n$ denote the difference between the true joint covariance and the sample covariance and let $C_T = (\mathcal{P}_{{T_y'}^\perp}(L_y^\star), \mathcal{P}_{{T_{yx}'}^\perp}(\Theta_{yx}^\star))$. The proof of the following result uses Brouwer's fixed-point theorem, and is inspired by the proof of a similar result in \cite{Ravikumar,Chand2012}.
\begin{proposition}
\label{prop:Brower}
 Let $\kappa \triangleq \beta(3+\frac{16}{\alpha}\psi^2m)$. Consider the following two quantities:
\begin{eqnarray}
r_1 &\triangleq& \max\Big\{\frac{4}{\alpha}\Big(\Phi_{\gamma}[\mathcal{F}^{\dagger}E_n] + \Phi_{\gamma}[\mathcal{F}^{\dagger}\mathbb{I}^{\star}\mathcal{F}C_{T_{}}] +\lambda_n\Big),\hspace{.1in} \Phi_{\gamma}[C_{T}]\Big\} \\
r_2 &\triangleq&  \frac{4}{\alpha}\Big(\Phi_{\gamma}[\mathcal{F}^{\dagger}E_n] + \Phi_{\gamma}[\mathcal{F}^{\dagger}\mathbb{I}^{\star}\mathcal{F}C_{T_{}}]\Big) 
\label{eqn:rdef}
\end{eqnarray}
Define $r_1^u \triangleq \max\Big\{\frac{4}{\alpha}\Big(\frac{2\lambda_n}{\kappa}+\lambda_n\Big),\hspace{.1in} \frac{\lambda_n}{\kappa}\Big\}$ and $r_2^u \triangleq \frac{8\lambda_n}{\alpha\kappa}$. Suppose that 1) $r_1 \leq r_1^u$, 2) $r_2 \leq r_2^u$, and 3) $r_1^u \leq \min\{\frac{1}{4C'}, \frac{\alpha}{32\max\{1+\frac{\kappa}{2},\frac{\alpha}{8}\}^2m{\psi}C'^2}\}$, \\ then  $\max\{\|\Delta_2\|_{2}, \frac{1}{\gamma}\|\Delta_3\|_2\} \leq 2r_1^u$ and $\max\{\|\Delta_1\|_2, \|\Delta_4\|_2\} \leq r_2^u$. Consequently, $\Phi_{\gamma}(\Delta) \leq 2r_1^u$. 
\end{proposition}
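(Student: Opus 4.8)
The plan is a primal fixed-point argument in the spirit of \cite{Ravikumar,Chand2012}: realize the error $\Delta$ of the constrained estimator \eqref{eqn:ConvexRelaxed231_N} as a fixed point of a self-map on a small $\Phi_\gamma$-ball, produce such a fixed point via Brouwer's theorem, use convexity to identify it with the true error, and then extract the two families of bounds.

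First I would write the first-order optimality conditions of \eqref{eqn:ConvexRelaxed231_N}. Parametrizing $\Theta = \mathcal{F}(D_y,L_y,\Theta_{yx},\Theta_x)$, the feasible set is the subspace $\mathbb{H}'$ intersected with the open cone $\{\Theta \succ 0\}$, so stationarity reads $\mathcal{P}_{\mathbb{H}'}\mathcal{F}^{\dagger}(\Sigma_n - \tilde{\Theta}^{-1}) = -\lambda_n\,\mathcal{P}_{\mathbb{H}'}(0,G_L,\gamma G_{\Theta_{yx}},0)$ for some subgradients $G_L \in \partial\|\tilde L_y\|_\star$ and $G_{\Theta_{yx}} \in \partial\|\tilde\Theta_{yx}\|_\star$; the right-hand side vanishes in the $D_y$ and $\Theta_x$ blocks and has $\Phi_\gamma$-norm of order $\lambda_n$, since the regularizer touches only the second and third components. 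Substituting the Taylor expansion $\tilde\Theta^{-1} = {\Theta^\star}^{-1} + \mathbb{I}^\star(\mathcal{F}(\Delta)) + R_{\Sigma^\star}(\mathcal{F}(\Delta))$ and $\Sigma_n - {\Theta^\star}^{-1} = -E_n$ turns this into $\mathcal{P}_{\mathbb{H}'}\mathcal{F}^{\dagger}\mathbb{I}^\star\mathcal{F}(\Delta) = \mathcal{P}_{\mathbb{H}'}\mathcal{F}^{\dagger}\big(-E_n - R_{\Sigma^\star}(\mathcal{F}(\Delta))\big) + \lambda_n\mathcal{P}_{\mathbb{H}'}(0,G_L,\gamma G_{\Theta_{yx}},0)$. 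Because $L_y^\star$ and $\Theta_{yx}^\star$ need not lie in $T_y'$ and $T_{yx}'$, I would write $\Delta = \bar\Delta - C_T$ with $\bar\Delta \in \mathbb{H}'$ and $C_T$ the part of $(L_y^\star,\Theta_{yx}^\star)$ outside the tangent spaces (padded by zeros in the diagonal and $\Theta_x$ slots), move the $\mathbb{I}^\star\mathcal{F}C_T$ term to the right, and invert $\mathcal{P}_{\mathbb{H}'}\mathcal{F}^{\dagger}\mathbb{I}^\star\mathcal{F}\mathcal{P}_{\mathbb{H}'}$: since $\mathbb{H}' \in U(\omega_y,\omega_{yx})$, Assumption~1 gives $\chi(\mathbb{H}',\Phi_\gamma) \ge \alpha > 0$, which makes this operator invertible on $\mathbb{H}'$ with inverse of $\Phi_\gamma$-operator-norm at most $1/\alpha$ (and in particular forces $\mathcal{W}\cap T_y' = \{0\}$). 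This produces $\bar\Delta = \mathcal{T}(\bar\Delta)$, where $\mathcal{T}$ is affine in the data $E_n$, $C_T$ and the subgradient, plus a nonlinear piece built from $R_{\Sigma^\star}(\mathcal{F}(\bar\Delta - C_T))$.

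Next I would run Brouwer's theorem on the compact convex set $\mathcal{B} = \{\bar\Delta \in \mathbb{H}' : \Phi_\gamma[\bar\Delta] \le r_1^u\}$. The map $\mathcal{T}$ is continuous on $\mathcal{B}$ because condition (3) together with $\Phi_\gamma[C_T] \le r_1 \le r_1^u$ forces $\Phi_\gamma[\Delta] \le 2r_1^u \le \tfrac{1}{2C'}$, so the operator series defining $R_{\Sigma^\star}$ converges and Proposition~\ref{prop:Remainder} applies. For the self-map property $\mathcal{T}(\mathcal{B})\subseteq\mathcal{B}$: the affine part of $\mathcal{T}$ is controlled by $\tfrac{1}{\alpha}\big(\Phi_\gamma[\mathcal{F}^{\dagger}E_n] + \Phi_\gamma[\mathcal{F}^{\dagger}\mathbb{I}^\star\mathcal{F}C_T] + \lambda_n\big)$, which the definition of $r_1$ and condition (1) keep at a fixed small fraction of $r_1^u$; the nonlinear part is controlled, via Proposition~\ref{prop:Remainder}, by $\tfrac{1}{\alpha}\cdot 2m\psi C'^2\,\Phi_\gamma[\Delta]^2 \le \tfrac{1}{\alpha}\cdot 2m\psi C'^2(2r_1^u)^2$, which condition (3) keeps at a fixed small fraction of $r_1^u$; adding, $\Phi_\gamma[\mathcal{T}(\bar\Delta)] \le r_1^u$, so Brouwer produces a fixed point $\bar\Delta^\star\in\mathcal{B}$. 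Since \eqref{eqn:ConvexRelaxed231_N} is convex with a unique minimizer — strict concavity of $\log\det$ in $\Theta$ determines $\tilde\Theta$, and $\mathcal{W}\cap T_y' = \{0\}$ determines the $(D_y,L_y)$ split — this fixed point equals $\Delta + C_T$, whence $\Phi_\gamma[\Delta] \le \Phi_\gamma[\bar\Delta^\star] + \Phi_\gamma[C_T] \le 2r_1^u$, giving $\max\{\|\Delta_2\|_2,\tfrac{1}{\gamma}\|\Delta_3\|_2\} \le 2r_1^u$ and $\Phi_\gamma(\Delta) \le 2r_1^u$. To sharpen the unpenalized blocks to $r_2^u$, I would restrict the fixed-point identity to its $D_y$ and $\Theta_x$ components: there the subgradient term and the $C_T$-bias contribution are absent, leaving only the $E_n$ term (bounded through the definition of $r_2$ and condition (2)) and the remainder (again quadratically small — this is where the constant $\max\{1+\kappa/2,\alpha/8\}$ in condition (3) is needed), yielding $\max\{\|\Delta_1\|_2,\|\Delta_4\|_2\} \le r_2^u$.

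The main obstacle is the self-map verification: the nonlinear remainder $R_{\Sigma^\star}(\mathcal{F}(\Delta))$, which is quadratic in $\Phi_\gamma[\Delta]$ by Proposition~\ref{prop:Remainder}, must be absorbed into the slack left by the leading-order $\lambda_n$-scale linear terms, and this must be done simultaneously on the $2r_1^u$-ball (penalized blocks, inflated by the tangent-space bias $C_T$) and in the tighter $r_2^u$ estimate (unpenalized blocks) — forcing careful bookkeeping of the block structure of $\mathcal{P}_{\mathbb{H}'}\mathcal{F}^{\dagger}\mathbb{I}^\star\mathcal{F}\mathcal{P}_{\mathbb{H}'}$ through the inversion, which is exactly what the intricate constants defining $\kappa$, $r_1^u$, $r_2^u$ and condition (3) are engineered to make work. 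A secondary technical point is the standard care (following \cite{Ravikumar,Chand2012}) needed to handle the subgradient selection and to keep every $\bar\Delta \in \mathcal{B}$ inside the radius of convergence $\tfrac{1}{2C'}$ of the matrix-inverse expansion.
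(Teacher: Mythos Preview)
Your proposal is correct and follows essentially the same route as the paper: a Brouwer fixed-point argument on the first-order conditions of the constrained program \eqref{eqn:ConvexRelaxed231_N}, with the remainder controlled through Proposition~\ref{prop:Remainder} and the linear part through Assumption~1.

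The one structural difference worth noting is the shape of the Brouwer domain. You run Brouwer on the single $\Phi_\gamma$-ball $\{\bar\Delta\in\mathbb{H}':\Phi_\gamma[\bar\Delta]\le r_1^u\}$ and then, in a second pass, read the sharper $r_2^u$ bound off the $D_y$ and $\Theta_x$ components of the fixed-point identity. The paper instead applies Brouwer directly on the \emph{box}
\[
\mathbb{B}_{r_1^u,r_2^u}=\Bigl\{\underline\delta\in\mathbb{H}':\ \max\{\|\delta_2\|_2,\tfrac{1}{\gamma}\|\delta_3\|_2\}\le r_1^u,\ \max\{\|\delta_1\|_2,\|\delta_4\|_2\}\le r_2^u\Bigr\},
\]
verifying the self-map property block by block (the subgradient term $Z$ has zero first and fourth components, so those blocks see only the $E_n$, $\mathbb{I}^\star\mathcal{F}C_T$, and remainder contributions). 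This yields both radii in one shot and makes transparent why the constant $\max\{1+\tfrac{\kappa}{2},\tfrac{\alpha}{8}\}$ appears in condition~(3): on the box one has $r_1^u\le r_2^u\max\{1+\tfrac{\kappa}{2},\tfrac{\alpha}{8}\}$, which is exactly what converts the quadratic remainder bound $\tfrac{16}{\alpha}m\psi C'^2(r_1^u)^2$ into $\tfrac{r_2^u}{2}$. Your two-stage version reaches the same conclusion, but be careful with the sentence ``there the subgradient term and the $C_T$-bias contribution are absent'': these are absent in the \emph{input} components $1,4$, yet the inverse $(\mathcal{P}_{\mathbb{H}'}\mathcal{F}^\dagger\mathbb{I}^\star\mathcal{F}\mathcal{P}_{\mathbb{H}'})^{-1}$ mixes blocks, so the bound you actually use is still $\|[\,\cdot\,]_1\|_2\le\Phi_\gamma[\,\cdot\,]\le\tfrac{1}{\alpha}\Phi_\gamma[\text{input}]$ --- the same componentwise accounting the paper does inside its box argument.
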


In the following proposition, we prove algebraic correctness of program \eqref{eqn:ConvexRelaxed231_N}. The statement of this proposition requires us to define some constants. Let $C_1' = \frac{2m}{\alpha\kappa} \Big(28(1+\frac{4}{\alpha})\psi^2 + 2\kappa + 13\Big)$, $C_2' = \frac{4}{\alpha}(\frac{1}{2\kappa}+1)$,  $C_{\sigma_{y}}' = C_1'^2\psi^2\max\{2\kappa + 1, \frac{2}{C_2'\psi^2} + 1\}$, $C_{\sigma_{yx}}' = C_1'^2\psi^2\max\{2\kappa + \frac{\kappa}{\gamma}, \frac{2}{C_2'\psi^2} + \frac{\kappa}{\gamma}\}$, and \\$C_{samp}' = \max\{\frac{1}{8m\psi\kappa}, \frac{\alpha}{16C'(\frac{2}{\kappa}+1)}, \frac{\alpha^2}{128(\frac{2}{\kappa}+1)\max\{1+\frac{\kappa}{2},\frac{\alpha}{8}\}^2m\psi^2C'^2}, \frac{1}{4C_1'C'}\}$
\begin{proposition}
\label{proposition:lambdaBound}
Suppose that $\sigma_y \geq  \frac{{m}}{\omega}C_{\sigma_y}'\lambda_n$, $\sigma_{yx} \geq  {m\gamma^2}C_{\sigma_{yx}}'\lambda_n$. Further, suppose that $\lambda_n$ is chosen so that $\lambda_n \leq \frac{1}{C_{samp}'}$. Then, there exists tangent space $T_y' \subset \Sp^{p}$ in the rank-$k_{u}$ variety ($k_u = \text{rank}(L_y^\star)$) and tangent space $T_{yx}' \subset \R^{p \times q}$ in rank $k_{x}$-variety ($k_x = \text{rank}(\Theta_{yx}^\star)$) where $\rho(T_y', T(L_y^\star)) \leq \omega_y$, $\rho(T_{yx}', T(\Theta_{yx}^\star)) \leq \omega_{yx}$ such that the corresponding solution $(\tilde{\Theta}, \tilde{D}_y, \tilde{L}_y)$ of \eqref{eqn:ConvexRelaxed231_N} satisfies the following properties:
\begin{enumerate}
\item$\text{rank}(\tilde{L}_y) = \text{rank}(L_y^\star)$ and $\text{rank}(\tilde{\Theta}_{yx}) = \text{rank}(\Theta_{yx}^\star)$
\item Letting $C_{T} = (0~,~\mathcal{P}_{{T_y'}^\perp}(L_y^\star)~, ~\mathcal{P}_{{T_{yx}'}^\perp}(\Theta_{yx}^\star)~,~ 0)$, we have that $\Phi_{\gamma}[\mathcal{F}^\dagger \mathbb{I}^\star\mathcal{F}(C_{T})] \leq \frac{\lambda_n}{\kappa}$ and $\Phi_{\gamma}[C_T] \leq \frac{4}{\alpha}(1+\frac{2}{\kappa})\lambda_n$
\item $\Phi_{\gamma}[\Delta] \leq 2C_1'\lambda_n$
\item $\tilde{L}_y \succeq 0$
\end{enumerate}
Furthermore, suppose that $\Phi_{\gamma}(\F^{\dagger}E_n) \leq \frac{\lambda_n}{\kappa}$ and $~\Phi_{\gamma}[\mathcal{F}^{\dagger}R_{\Sigma^\star}(\mathcal{F}(\Delta))] \leq \frac{\lambda_n}{\kappa}$. Then the tangent space constraint $(D_y, L_y, \Theta_{yx}, \Theta_x) \in \mathbb{H}'$ in \eqref{eqn:ConvexRelaxed231_N} is inactive, so that  $(\tilde{\Theta}, \tilde{D}_y, \tilde{L}_y)$ is the unique solution of the original convex program \eqref{eqn:main}.
\label{prop:rec}
\end{proposition}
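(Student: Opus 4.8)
\emph{Proof proposal.} The plan is a primal--dual-witness argument built around a variety-constrained surrogate of \eqref{eqn:ConvexRelaxed231_N}, in the spirit of the latent-variable graphical model analysis of \cite{Chand2012} but adapted to the cross term $\Theta_{yx}$. First I would introduce the nonconvex program obtained from \eqref{eqn:ConvexRelaxed231_N} by replacing the linear constraint $(D_y,L_y,\Theta_{yx},\Theta_x)\in\mathbb{H}'$ with the determinantal-variety constraints $\rk(L_y)\le k_u$ and $\rk(\Theta_{yx})\le k_x$. Coercivity of $-\ell$ on $\{\Theta\succ0\}$ and compactness of the relevant sublevel set give a minimizer $(\tilde\Theta,\tilde D_y,\tilde L_y)$. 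Using Proposition~\ref{prop:Brower} (whose hypotheses are verified below) together with the assumed bound $\Phi_{\gamma}(\mathcal{F}^{\dagger}E_n)\le\lambda_n/\kappa$, the error $\Phi_{\gamma}(\Delta)$ is $O(\lambda_n)$; combined with the hypotheses $\sigma_y\ge\frac{m}{\omega}C_{\sigma_y}'\lambda_n$, $\sigma_{yx}\ge m\gamma^2 C_{\sigma_{yx}}'\lambda_n$ and Weyl's inequality, this forces $\tilde L_y$ to have \emph{exactly} $k_u$ nonzero eigenvalues — and, since the perturbation is smaller than $\sigma_y$ while $L_y^\star\succeq0$, these are strictly positive — and $\tilde\Theta_{yx}$ to have rank exactly $k_x$. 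This yields items~1 and~4, and it makes $\tilde L_y,\tilde\Theta_{yx}$ smooth points of their varieties, so $T_y'\triangleq T(\tilde L_y)$ and $T_{yx}'\triangleq T(\tilde\Theta_{yx})$ are well defined. The first-order conditions of the nonconvex program at a smooth point place its variety-constraint multipliers in $T_y'^\perp$ and $T_{yx}'^\perp$; these same multipliers certify optimality of $(\tilde\Theta,\tilde D_y,\tilde L_y)$ for the convex program \eqref{eqn:ConvexRelaxed231_N} with $\mathbb{H}'=\mathcal{W}\times T_y'\times T_{yx}'\times\Sp^q$ (for which it is feasible, as $\tilde L_y\in T_y'$ and $\tilde\Theta_{yx}\in T_{yx}'$), and strict convexity of the log-determinant likelihood makes it the unique such solution.

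Second, I would establish $\rho(T_y',T(L_y^\star))\le\omega_y$ and $\rho(T_{yx}',T(\Theta_{yx}^\star))\le\omega_{yx}$, i.e.\ $\mathbb{H}'\in U(\omega_y,\omega_{yx})$ in the sense of \eqref{eqn:Udef}. The mechanism is the standard stability estimate $\rho(T(\tilde L_y),T(L_y^\star))\lesssim\|\tilde L_y-L_y^\star\|_2/\sigma_y$ (and likewise for $\Theta_{yx}$ with $\sigma_{yx}$): the $O(\lambda_n)$ error bound divided by the assumed lower bounds on $\sigma_y,\sigma_{yx}$ is at most $\omega_y$ resp.\ $\omega_{yx}$. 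The apparent circularity — Proposition~\ref{prop:Brower} needs $\mathbb{H}'\in U(\omega_y,\omega_{yx})$ to produce the error bound that is used to certify $\mathbb{H}'\in U(\omega_y,\omega_{yx})$ — is resolved by running the construction as a fixed-point/continuity argument over the compact family $U(\omega_y,\omega_{yx})$ of admissible tangent-space pairs: the map sending a candidate pair to the tangent spaces of the corresponding solution of \eqref{eqn:ConvexRelaxed231_N} sends $U(\omega_y,\omega_{yx})$ into itself, so a self-consistent pair $T(\tilde L_y)=T_y'$, $T(\tilde\Theta_{yx})=T_{yx}'$ exists. Granting self-consistency and $\mathbb{H}'\in U(\omega_y,\omega_{yx})$, item~3 follows from Proposition~\ref{prop:Brower}: one checks its three hypotheses (hypotheses 1--2 using $\Phi_{\gamma}(\mathcal{F}^{\dagger}E_n)\le\lambda_n/\kappa$ and the $C_T$ bounds of item~2 below, hypothesis 3 using $\lambda_n\le1/C_{samp}'$), so $\Phi_{\gamma}(\Delta)\le 2r_1^u$, and a short comparison of constants gives $r_1^u\le C_1'\lambda_n$.

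Third, item~2 and the ``furthermore'' conclusion. Because $\tilde L_y\in T_y'$, one has $\mathcal{P}_{T_y'^\perp}(L_y^\star)=\mathcal{P}_{T_y'^\perp}(L_y^\star-\tilde L_y)$, so this residual is governed by the error bound and the curvature of the variety, $\|\mathcal{P}_{T_y'^\perp}(L_y^\star)\|_2\lesssim\rho(T_y',T(L_y^\star))\,\|L_y^\star\|_2\lesssim\|\tilde L_y-L_y^\star\|_2\,\|L_y^\star\|_2/\sigma_y$, and the lower bound on $\sigma_y$ is calibrated so this is $\le\lambda_n/\kappa$; the analogous estimate for $\mathcal{P}_{T_{yx}'^\perp}(\Theta_{yx}^\star)$ uses $\sigma_{yx}$ and the $1/\gamma$ weighting inside $\Phi_{\gamma}$ (which is why $\sigma_{yx}$'s bound carries the extra $\gamma^2$). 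Thus $\Phi_{\gamma}[C_T]\le\tfrac4\alpha(1+\tfrac2\kappa)\lambda_n$ and, since $\|\mathbb{I}^\star\|=\|\Sigma^\star\otimes\Sigma^\star\|\le\psi^2$, also $\Phi_{\gamma}[\mathcal{F}^{\dagger}\mathbb{I}^\star\mathcal{F}(C_T)]\le\lambda_n/\kappa$. For the last claim I would verify the strict slackness inequalities $\|\mathcal{P}_{T(\tilde L_y)^\perp}(\Sigma_n-\tilde\Theta^{-1})_y\|_2<\lambda_n$ and $\|\mathcal{P}_{T(\tilde\Theta_{yx})^\perp}(\Sigma_n-\tilde\Theta^{-1})_{yx}\|_2<\lambda_n\gamma$, which by the optimality conditions recalled before \eqref{eqn:ConvexRelaxed_N} are exactly what makes $(\tilde\Theta,\tilde D_y,\tilde L_y)$ solve the unconstrained program \eqref{eqn:main}. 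Writing $\Sigma_n-\tilde\Theta^{-1}=-E_n-\mathbb{I}^\star(\mathcal{F}(\Delta))-R_{\Sigma^\star}(\mathcal{F}(\Delta))$ and using the first-order condition of \eqref{eqn:ConvexRelaxed231_N} on $\mathbb{H}'$, I split the middle term into its part in $\mathcal{G}\mathbb{H}'[2,3]$ — bounded via Assumption~3 in \eqref{eqn:SecondFisherCond}, which supplies the decisive slack factor $1-\tfrac2{\beta+1}$ — and a $C_T$-correction controlled by item~2; the residual terms $E_n$ and $R_{\Sigma^\star}(\mathcal{F}(\Delta))$ contribute at most $\lambda_n/\kappa$ each by hypothesis, and $\kappa=\beta(3+\tfrac{16}\alpha\psi^2 m)$ was defined precisely so that these additions stay inside the $\tfrac2{\beta+1}$ margin.

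I expect the principal obstacle to be the coupling exploited in the second and third paragraphs: locating a self-consistent tangent-space pair, keeping it inside $U(\omega_y,\omega_{yx})$, and simultaneously controlling $\mathcal{P}_{T_y'^\perp}(L_y^\star)$ and $\mathcal{P}_{T_{yx}'^\perp}(\Theta_{yx}^\star)$ at the rate $\lambda_n/\kappa$ — all three of which close up only because the constants $C_{\sigma_y}',C_{\sigma_{yx}}',C_{samp}'$ in the hypotheses are tuned to make them close up together. Once the bookkeeping constant $\kappa$ is fixed, the irrepresentability step, though conceptually the heart of the support-recovery guarantee, reduces to a mechanical chain of norm bounds.
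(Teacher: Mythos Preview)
Your overall primal--dual witness strategy is correct and matches the paper's, but the paper resolves the ``circularity'' you identify differently and more cheaply than the Brouwer fixed-point on tangent-space pairs that you propose. The paper's variety-constrained surrogate \eqref{eqn:NConveProblem_N} carries not only the rank constraints $\rk(L_y)\le k_u$, $\rk(\Theta_{yx})\le k_x$ but two additional families of constraints: bounds on $\|\mathcal{P}_{T(L_y^\star)^\perp}(L_y-L_y^\star)\|_2$ and its $\Theta_{yx}$ analogue (the set $\mathcal{M}_1$), and bounds on $\Gamma_\gamma[\mathcal{P}_{[2,3]}(\mathbb{I}^\star\mathcal{F}(\Delta))]$ and $\|\mathbb{I}^\star\mathcal{F}(\Delta_1,0,0,\Delta_4)\|_2$ (the set $\mathcal{M}_2$). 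The $\mathcal{M}_2$ constraints are the device that breaks the circularity: they directly force $\Phi_\gamma[\Delta]\le C_1'\lambda_n$ for \emph{every feasible} point (Proposition~\ref{prop:FirstResultCor}), with no appeal to Proposition~\ref{prop:Brower} and no fixed-point over tangent spaces. With this a priori bound in hand, Corollary~\ref{eqn:Corollary1} delivers smoothness, rank, inertia (hence $\tilde L_y\succeq0$), tangent-space proximity, and the $C_T$ bounds all at once. The extra $\mathcal{M}_1$ and $\mathcal{M}_2$ constraints are then shown to be strictly satisfied (inactive), so the surrogate's optimum is the optimum of \eqref{eqn:ConvexRelaxed231_N} with $\mathbb{H}'=\mathbb{H}_{\mathcal{M}}$. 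Your Brouwer route is a legitimate alternative, but it would require you to verify continuity of the solution map on the product Grassmannian and contractibility of the $\rho$-ball $U(\omega_y,\omega_{yx})$; the paper's $\mathcal{M}_2$ trick sidesteps all of that.

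One technical point in your item~2 sketch: the linear estimate $\|\mathcal{P}_{T_y'^\perp}(L_y^\star)\|_2\lesssim\rho(T_y',T(L_y^\star))\,\|L_y^\star\|_2$ is correct but introduces $\|L_y^\star\|_2$, which the hypotheses do not control. The paper instead uses the \emph{quadratic} curvature bound $\|\mathcal{P}_{T(\tilde L_y)^\perp}(L_y^\star-\tilde L_y)\|_2\le\|\tilde L_y-L_y^\star\|_2^2/\sigma_y'$ (Proposition~2.2 of \cite{Chand2012}), giving $\Phi_\gamma[C_T]\lesssim m(C_1'\lambda_n)^2/\sigma_y'$; this is what the lower bounds on $\sigma_y,\sigma_{yx}$ are actually calibrated against, and is the estimate you want here.
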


Thus far, the analysis of the convex program so has been deterministic in nature. In the following proposition, we present the probabilistic component of our analysis by showing the rate at which the sample covariance matrix $\Sigma_n$ converges to $\Sigma^\star$ in spectral norm. This result is well-known and is a specialization of a result proven by \cite{Davidson}.
\begin{proposition}
\label{prop:Enbound}
Suppose that the number of observed samples obeys \\$n \geq 64 \kappa^2m^2\psi^2C_{samp}'^2(p+q)$, and the regularization parameter $\lambda_n$ is chosen so that: \\ $\lambda_n \in [8\psi\kappa{m}\sqrt{\frac{p+q}{n}}, \frac{1}{C_{samp}'}]$. Then, with probability greater than $1 - 2\text{exp}\Big\{-\frac{n\lambda_n^2}{128\kappa^2m^2\psi^2}\Big\}$, $\Phi_{\gamma}[\mathcal{F}^{\dagger}E_n] \leq \frac{\lambda_n}{\kappa}$. \label{prop:prob} \end{proposition}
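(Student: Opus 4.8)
The plan is to reduce the claim to a single spectral-norm concentration inequality for the sample covariance matrix and then to invoke the Gaussian concentration estimate already cited in the statement. First I would unpack the left-hand side: since $\mathcal{F}^\dagger$ merely reads off the blocks of its argument, $\mathcal{F}^\dagger E_n = ([E_n]_y,\,[E_n]_y,\,[E_n]_{yx},\,[E_n]_x)$, and each of $[E_n]_y, [E_n]_{yx}, [E_n]_x$ is a submatrix of $E_n = \Sigma^\star - \Sigma_n$, so its spectral norm is at most $\|E_n\|_2$. Feeding this into the definition \eqref{eqn:Phidef} of $\Phi_\gamma$ and using $m = \max\{1,1/\gamma\}$ (so $m \ge 1$ and $m \ge 1/\gamma$), one gets $\Phi_\gamma[\mathcal{F}^\dagger E_n] \le m\,\|E_n\|_2$. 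Hence it suffices to show that, with the claimed probability, $\|E_n\|_2 \le \lambda_n/(\kappa m)$.

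For the probabilistic part I would whiten the data: write $(y^{(i)},x^{(i)})' = (\Sigma^\star)^{1/2} g^{(i)}$ with $g^{(i)}$ i.i.d.\ $\mathcal{N}(0, I_{p+q})$, so that $\Sigma_n = (\Sigma^\star)^{1/2} W_n (\Sigma^\star)^{1/2}$ with $W_n = \tfrac1n \sum_{i=1}^n g^{(i)} (g^{(i)})'$, and therefore $\|E_n\|_2 \le \|\Sigma^\star\|_2\,\|W_n - I\|_2 = \psi\,\|W_n - I\|_2$. The Davidson--Szarek bound \cite{Davidson} on the extreme singular values of the Gaussian matrix with rows $(g^{(i)})'$ then yields, for every $t > 0$ and with probability at least $1 - 2\exp(-t^2/2)$, the estimate $\|W_n - I\|_2 \le 2\nu + \nu^2$ where $\nu \triangleq \sqrt{(p+q)/n} + t/\sqrt{n}$ (the $\nu^2$ term is because both $\lambda_{\max}(W_n)$ and $\lambda_{\min}(W_n)$ are controlled by $(1\pm\nu)^2$).

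I would then choose $t = \sqrt{n}\,\lambda_n / (8\kappa m\psi)$, so that the tail exponent is exactly $t^2/2 = n\lambda_n^2/(128\kappa^2 m^2 \psi^2)$, matching the probability in the statement. The hypothesis $\lambda_n \ge 8\psi\kappa m \sqrt{(p+q)/n}$ makes both summands in $\nu$ at most $\lambda_n/(8\kappa m\psi)$, so $\nu \le \lambda_n/(4\kappa m\psi)$; the hypothesis $\lambda_n \le 1/C_{samp}'$ together with $C_{samp}' \ge 1/(8m\psi\kappa)$ (immediate from the max defining $C_{samp}'$) gives $\nu \le 2$, hence $2\nu + \nu^2 \le 4\nu$. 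Chaining everything, $\|E_n\|_2 \le \psi(2\nu+\nu^2) \le 4\psi\nu \le \lambda_n/(\kappa m)$, and therefore $\Phi_\gamma[\mathcal{F}^\dagger E_n] \le m\|E_n\|_2 \le \lambda_n/\kappa$ with probability at least $1 - 2\exp\{-n\lambda_n^2/(128\kappa^2 m^2\psi^2)\}$. The sample-complexity hypothesis $n \ge 64\kappa^2 m^2\psi^2 C_{samp}'^2 (p+q)$ enters only to ensure the interval $[8\psi\kappa m\sqrt{(p+q)/n},\, 1/C_{samp}']$ for $\lambda_n$ is nonempty.

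The genuinely delicate point — everything else being block extraction and arithmetic with $\nu$ — is invoking the Gaussian concentration result in the precise form needed: a two-sided control of $\|W_n - I\|_2$ (both the upper tail of $\lambda_{\max}(W_n)$ and the lower tail of $\lambda_{\min}(W_n)$) with a sub-Gaussian tail of the exact shape $2e^{-t^2/2}$, followed by checking that the chosen value of $t$ keeps $\nu$ below the threshold where $2\nu + \nu^2 \le 4\nu$. This is exactly where the interplay between the lower bound on $\lambda_n$ (which absorbs the $\sqrt{(p+q)/n}$ term) and the upper bound on $\lambda_n$ (which keeps $\nu$ out of the regime where $\nu^2$ dominates) must be used simultaneously.
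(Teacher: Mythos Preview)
Your proposal is correct and follows essentially the same route as the paper: reduce $\Phi_\gamma[\mathcal{F}^\dagger E_n]$ to $m\|E_n\|_2$, then invoke the Davidson--Szarek bound on the whitened Wishart matrix, using $\lambda_n \ge 8\psi\kappa m\sqrt{(p+q)/n}$ to absorb the dimension term and $\lambda_n \le 1/C_{samp}' \le 8m\psi\kappa$ to keep the deviation in the linear regime. The paper's own argument is a terse two-line sketch of exactly this; you have simply made the whitening, the choice of $t$, and the $2\nu+\nu^2 \le 4\nu$ step explicit.
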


\subsection{Proof of Theorem 1} We first relate the constants $\tilde{C}_{samp}$, $\tilde{C}$, $\tilde{C}_0$, $\tilde{C}_1$, and $\tilde{C}_{\sigma}$ of Theorem 1 to the constants $C_{samp}'$, $C_1'$ $C'_{\sigma_y}$, and $C'_{\sigma_{yx}}$. In particular, using the properties that $\beta \geq 2$ and $\frac{\psi^2}{\alpha} \geq \frac{1}{2}$ and $\bar{m}, m \geq 1$,  one can check that:  $\tilde{C}_{samp}^2 \geq \frac{64\psi^2\kappa^2\alpha^2}{\beta^4m^6}C_{samp}'^2$, $\tilde{C}_0 \geq \frac{1}{\beta{m}}C_{samp}'$, $\tilde{C}_{\sigma_{}} \geq \frac{\alpha^5}{\beta{m}^3}C_{\sigma_y}'$, $\tilde{C}_{\sigma_{}} \geq \frac{\alpha^5}{\beta{m}^4}C_{\sigma_{yx}}'$, and $\tilde{C}_1 \geq \frac{\alpha^2}{m}C'_1$. Furthermore, we have that $\tilde{C} \geq \frac{\alpha}{\beta{m}}8\psi\kappa$. Using these relations, one can also check that the assumptions of Theorem 1 imply that the assumptions of Proposition~\ref{prop:rec} and Proposition 4 are satisfied. Thus we can conclude that the optimal solution $(\tilde{\Theta}, \tilde{D}_y, \tilde{L}_y)$ of \eqref{eqn:ConvexRelaxed231_N} (with a particular choice of tangent spaces $T_y'$ and $T_{yx}'$) satisfy results of Proposition~\ref{prop:rec}. Further, by appealing to Proposition~\ref{prop:prob}, we have that $\Phi_{\gamma}(\mathcal{F}^\dagger{E_n})  \leq \frac{\lambda_n}{\kappa}$. If we show that $~\Phi_{\gamma}[\mathcal{F}^{\dagger}R_{\Sigma^\star}(\Delta)] \leq \frac{\lambda_n}{\kappa}$, then we conclude that the unique optimum $(\hat{\Theta}, \hat{D}_y, \hat{L}_y)$ of the original convex program \eqref{eqn:main} coincide with the optimum $(\tilde{\Theta}, \tilde{D}_y, \hat{L}_y)$ of the convex program \eqref{eqn:ConvexRelaxed231_N}. Thus, we conclude that the estimates of \eqref{eqn:main} have structurally correct structure (i.e. $\text{rank}(\hat{L}_y) = \text{rank}(L_y^\star)$ and $\text{rank}(\hat{\Theta}_{yx}) = \text{rank}(\Theta_{yx}^\star)$) and have their error bounded by $\Phi_{\gamma}(\Delta) \leq 2C_1'\lambda_n$. To show that  $~\Phi_{\gamma}[\mathcal{F}^{\dagger}R_{\Sigma^\star}(\Delta)] \leq \frac{\lambda_n}{\kappa}$, we note that
\begin{eqnarray*}
\frac{4}{\alpha}\Big(\Phi_{\gamma}[\mathcal{F}^{\dagger}E_n] + \Phi_{\gamma}[\mathcal{F}^{\dagger}\mathbb{I}^{\star}\mathcal{F}C_{T}] +\lambda_n \Big) \leq \frac{4}{\alpha} \Big(\frac{\lambda_n}{\kappa} + \frac{\lambda_n}{\kappa} + \lambda_n\Big)
\leq \frac{4\lambda_n}{\alpha}\Big(\frac{2}{\kappa}+1\Big)\\ \leq \min\Big\{\frac{1}{4C'}, \frac{\alpha}{32\max\{1+\frac{\kappa}{2},\frac{\alpha}{8}\}^2m{\psi}C'^2}\Big\}
\end{eqnarray*}
Here, we used the bound on $\Phi_{\gamma}[\mathcal{F}^{\dagger}\mathbb{I}^{\star}\mathcal{F}C_{T}]$ provided by Proposition~\ref{prop:rec} and the bound on $\lambda_n$. Furthermore, appealing to Proposition~\ref{prop:rec} once again, we have $\Phi_{\gamma}[C_{T}] \leq 
\frac{4}{\alpha}(1+\frac{2}{\kappa})\lambda_n \leq \min\{\frac{1}{4C'}, \frac{\alpha}{16m{\psi}C'^2}\}$. Thus Proposition~\ref{prop:Brower} provides us with the bound $\Phi_{\gamma}[\Delta] \leq 2C_1'\lambda_n \leq \frac{1}{2C'}$.  We subsequently apply the results of Proposition~\ref{prop:Remainder} to obtain:
\begin{eqnarray*}
\Phi_{\gamma}[\mathcal{F}^{\dagger}R_{\Sigma^\star}(\mathcal{F}(\Delta))] \leq 2m{\psi}C'^2 \Phi_{\delta,\gamma}[\Delta]^2 \leq  \Big[2m\psi{C}'^2C_1'^2\lambda_n\Big] \lambda_n \leq \frac{\lambda_n}{\kappa}
\end{eqnarray*}
The last inequality follows from the bound on $\lambda_n$.

\section{Discussion} In this paper we describe a new approach for interpreting
the latent variables in a factor model.  Our method proceeds by
obtaining observations of auxiliary covariates that may plausibly be
related to the observed phenomena, and then suitably associating these
auxiliary covariates to the latent variables.  The procedure involves
the solutions of computationally tractable convex optimization
problems, which are log-determinant semidefinite programs that can be
solved efficiently.  We give both theoretical as well as experimental
evidence in support of our methodology.  Our technique generalizes
transparently to other families beyond factor models such as
latent-variable graphical models \cite{Chand2012}, although we do not pursue
these extensions in the present article.

\newpage

\section{Appendix}

\subsection{\textit{Proof of Proposition 1 (main paper)}} 
\begin{proof}
We note that:
\begin{eqnarray*}
\|\Delta\|_2 &\leq& {\|{\Delta}D_y\|_2}  + \|{\Delta}{L}_y\|_2 + {\|{\Delta}\Theta_{yx}\|_2}{} + \|{\Delta}\Theta_{x}\|_2 \leq (3+\gamma)\Phi_{\gamma}(\Delta)
 \end{eqnarray*}
 Furthermore, recall that
\begin{eqnarray*}
 R_{\Sigma^\star}(\mathcal{F}(\Delta)) = \Sigma^\star\Big[\sum_{k = 2}^{\infty}(-\mathcal{F}(\Delta){\Sigma^\star}^{-1})^k\Big].
\end{eqnarray*}
Using this observation and some algebra, we have that:
\begin{eqnarray*}
\Phi_{\gamma}[\mathcal{F}^{\dagger}R_{\Sigma^\star}(\mathcal{F}(\Delta))] \leq m\psi \Big[\sum_{k = 2}^{\infty} (\psi \|\Delta\|_2)^k \Big]
 &\leq& m\psi^3 \frac{(3+\gamma)^2\Phi_{\gamma}[\Delta]^2}{1-(3+\gamma)\Phi_{\gamma}[\Delta]\psi}\\
&\leq& 2m{\psi}C'^2 \Phi_{\gamma}[\Delta]^2
\end{eqnarray*}
\end{proof}

\subsection{\textit{Proof of Proposition 2 (main paper)}} 
\begin{proof}
{\noindent} The proof of this result uses Brouwer's fixed-point theorem, and is inspired by the proof of a similar result in \cite{Ravikumar,Chand2012}.
The optimality conditions of \eqref{eqn:ConvexRelaxed231_N} suggest that there exist Lagrange multipliers $Q_{D_y} \in \mathcal{W}$,  $Q_{T_y} \in {T_y'}^\perp$, and $Q_{T_{yx}} \in {T_{yx}'}^\perp$ such that
\begin{eqnarray*}
[\Sigma_n - {\tilde{\Theta}}^{-1}]_y + Q_{D_y} = 0; \hspace{.1in} [\Sigma_n - {\tilde{\Theta}}^{-1}]_y + Q_{T_y}  &\in& \lambda_n\partial\|\tilde{L}_y\|_\star\\[.01in]
[\Sigma_n - {\tilde{\Theta}}^{-1}]_{yx} + Q_{T_{yx}}  \in -\lambda_n\gamma\partial\|\tilde{\Theta}_{yx}\|_{\star};  \hspace{.1in} [\Sigma_n - {\tilde{\Theta}}^{-1}]_{x}  &=& 0
\end{eqnarray*}
{\noindent}Letting the SVD of $\tilde{L}$ and $\tilde{\Theta}_{yx}$ be given by $\tilde{L}_y = \bar{U}\bar{D}\bar{V}'$ and $\tilde{\Theta}_{yx} = \breve{U}\breve{D}{\breve{V}}'$ respectively, and $Z \triangleq  (0, \hspace{.1in} \lambda_n\bar{U}\bar{V}', \hspace{.1in}  -\lambda_n\gamma_{}{\breve{U}}{\breve{V}}',  \hspace{.1in} 0)$, we can restrict the optimality conditions  of \eqref{eqn:ConvexRelaxed231_N}(main paper) to the space $\mathbb{H}'$ to obtain, $\mathcal{P}_{{\mathbb{H}}'}\mathcal{F}^{\dagger}(\Sigma_n - \tilde{\Theta}^{-1}) = Z$. Further, by appealing to the matrix inversion lemma, this condition can be restated as $\mathcal{P}_{{\mathbb{H}}_{\mathcal{M}}}\mathcal{F}^{\dagger}(E_{n} - R_{\Sigma^\star}(\mathcal{F}\Delta) + \mathbb{I}^{\star}\mathcal{F}(\Delta)) = Z$. Based on the Fisher information assumption 1 in \eqref{eqn:FirstFisherCond} (main paper), the optimum of \eqref{eqn:ConvexRelaxed231_N}(main paper) is unique (this is because the Hessian of the negative log-likelihood term is positive definite restricted to the tangent space constraints). Moreover, using standard Lagrangian duality, one can show that the set of variables  $(\tilde{\Theta}, \tilde{D}_y, \tilde{L}_y)$ that satisfy the restricted optimality conditions are unique. Consider the following function $S(\underline{\delta})$ restriced to $\underline{\delta} \in \mathcal{W} \times T'_y \times T'_{yx} \times \Sp^q$ with $\rho(T(L_y^\star),T_y') \leq \omega_y$ and $\rho(T(\Theta_{yx}^\star),T_{yx}') \leq \omega_{yx}$:
\begin{eqnarray*}
S(\underline{\delta}) = \underline{\delta} - (\mathcal{P}_{\mathbb{H}'_{}}\mathcal{F}^{\dagger}\mathbb{I}^{\star}\mathcal{F}\mathcal{P}_{\mathbb{H}'_{}})^{-1}\Big(\mathcal{P}_{\mathbb{H}'_{}}\mathcal{F}^{\dagger} [E_{n} &-& R_{\Sigma^\star}\mathcal{F}(\underline{\delta} + C_{T_{}}) \\ + \mathbb{I}^{\star}\mathcal{F}(\underline{\delta} + C_{T_{}})] - Z\Big)
\end{eqnarray*}
{\noindent}The function $S(\underline{\delta})$ is well-defined since the operator $\mathcal{P}_{\mathbb{H}'_{}}\mathcal{F}^{\dagger}\mathbb{I}^{\star}\mathcal{F}\mathcal{P}_{\mathbb{H}'_{}}$ is bijective due to Fisher information assumption 1 in \eqref{eqn:FirstFisherCond} (main paper). As a result, $\underline{\delta}$ is a fixed point of $S(\underline{\delta})$ if and only if $\mathcal{P}_{\mathbb{H}'_{}}\mathcal{F}^{\dagger} [E_{n} - R_{\Sigma^\star}(\mathcal{F}(\underline{\delta} + C_{T_{}})) + \mathbb{I}^{\star}\mathcal{F}(\underline{\delta} + C_{T_{}})] =  Z$.
Since the pair $(\tilde{\Theta}, \tilde{D}_y, \tilde{L}_y)$ are the unique solution to \eqref{eqn:ConvexRelaxed231_N}(main paper), the only fixed point of $S$ is $\mathcal{P}_{\mathbb{H}'_{}}[\Delta]$. Next we show that this unique optimum lives inside the ball $\mathbb{B}_{r_1^u,r_2^u} = \{\underline{\delta} \hspace{.1in} | \hspace{.1in} \max\{\|\delta_2\|_2, \frac{1}{\gamma}\|\delta_3\|_2\} \leq r_1^u, \max\{\|\delta_1\|_2, \|\delta_4\|_2\} \leq r_2^u \hspace{.05in} \underline{\delta} \in \mathbb{H}'\}$. In particular, we show that under the map $S$, the image of $\mathbb{B}_{r_1^u,r_2^u}$ lies in $\mathbb{B}_{r_1^u, r_2^u}$ and appeal to Brouwer's fixed point theorem to conclude that $\mathcal{P}_{\mathbb{H}'_{}}[\Delta] \in \mathbb{B}_{r_1^u, r_2^u}$. For $\underline{\delta} \in \mathbb{B}_{r_1^u, r_2^u}$, the first component of $S(\underline{\delta})$, denoted by $S(\underline{\delta})_1$, can be bounded as follows:
\begin{eqnarray*}
\|S(\underline{\delta})_1\|_2 &=& \Big\|\Big[(\mathcal{P}_{\mathbb{H}'_{}}\mathcal{F}^{\dagger}\mathbb{I}^{\star}\mathcal{F}\mathcal{P}_{\mathbb{H}'_{}})^{-1}\Big(\mathcal{P}_{\mathbb{H}'}\mathcal{F}^{\dagger} [E_{n} - R_{\Sigma^\star}(\mathcal{F}(\underline{\delta} + C_{T_{}})) \\&+& \mathbb{I}^{\star}\mathcal{F}C_{T_{}}] + Z\Big)\Big]_1\Big\|_2
\leq \frac{2}{\alpha}\Big[\Phi_{\gamma}[\mathcal{F}^{\dagger}( E_n + \mathbb{I}^{\star}\mathcal{F}(C_{T_{}}))]\Big]\\ &+& \frac{2}{\alpha}{\Phi_{\gamma}[\mathcal{F}^{\dagger}R_{\Sigma^\star}(\underline{\delta}+C_{T})}] \leq \frac{r_2^u}{2} +  \frac{2}{\alpha}{\Phi_{\gamma}[\mathcal{F}^{\dagger}R_{\Sigma^\star}(\underline{\delta}+C_{T})}]
\end{eqnarray*}
The first inequality holds because of Fisher Information Assumption 1 in \eqref{eqn:FirstFisherCond} (main paper), and the properties that $\Phi_{\gamma}[\mathcal{P}_{\mathbb{H}_{\mathcal{M}}}(.)] \leq 2\Phi_{\gamma}(.)$  (since projecting into the tangent space of a low-rank matrix variety increases the spectral norm by a factor of at most two) and $\Phi_{\gamma}(Z) = \lambda_n$. Moreover, since $r_1^u \leq \frac{1}{4C'}$, we have $\Phi_{\gamma}(\underline{\delta} + C_{T_{}}) \leq \Phi_{\gamma}(\underline{\delta}) + \Phi_{\gamma}(C_{T_{}}) \leq 2r_1^u \leq \frac{1}{2C'}$. Moreover, $r_1^u \leq r_2^u\max\{1+\frac{\kappa}{2},\frac{\alpha}{8}\}$. We can now appeal to Proposition 1 (main paper) to obtain:
\begin{eqnarray*}
\frac{2}{\alpha}{\Phi_{\gamma} [\mathcal{F}^{\dagger}R_{\Sigma^\star}(\underline{\delta} + C_{T_{}})}] &\leq& \frac{4}{\alpha}m{\psi}C'^2 [\Phi_{\gamma}(\underline{\delta} + C_{T_{}})]^2 \\ &\leq& \frac{16}{\alpha}m{\psi}C'^2(r_2^u)^2\max\{1+\frac{\kappa}{2},\frac{\alpha}{8}\}^2 \\ &\leq& \frac{r_2^u}{2}
\end{eqnarray*}
Thus, we conclude that $\|S(\delta)_1\|_2 \leq r_2^u$. Similarly, we check that:
\begin{eqnarray*}
\|[S(\underline{\delta})_2]\|_2 &=& \Big\|\Big[(\mathcal{P}_{\mathbb{H}'_{}}\mathcal{F}^{\dagger}\mathbb{I}^{\star}\mathcal{F}\mathcal{P}_{\mathbb{H}'_{}})^{-1}\Big(\mathcal{P}_{\mathbb{H}'}\mathcal{F}^{\dagger} [E_{n} - R_{\Sigma^\star}(\mathcal{F}(\underline{\delta} + C_{T_{}})) \\&+& \mathbb{I}^{\star}\mathcal{F}C_{T_{}}] + Z\Big)\Big]_2
\Big\|_2
\leq \frac{2}{\alpha}\Big[\Phi_{\gamma}[\mathcal{F}^{\dagger}( E_n + \mathbb{I}^{\star}\mathcal{F}(C_{T_{}})] + \lambda_n\Big]\\ &+& \frac{2}{\alpha}{\Phi_{\gamma}[\mathcal{F}^{\dagger}R_{\Sigma^\star}(\underline{\delta}+C_{T})}] \leq \frac{r_1^u}{2} +  \frac{2}{\alpha}{\Phi_{\gamma}[\mathcal{F}^{\dagger}R_{\Sigma^\star}(\underline{\delta}+C_{T})}] \leq r_1^u
\end{eqnarray*}

Using a similar approach, we can conclude that $\frac{1}{\gamma}\|S(\delta)_3\|_2 \leq r_1^u$ and $\|S(\delta)_3\|_2 \leq r_2^u$. Therefore, Brouwer's fixed point theorem suggests that $\mathcal{P}_{\mathbb{H}'}(\Delta) \in \mathcal{B}_{r_1^u, r_2^u}$. Hence, $\|\Delta_1\|_2 \leq r_2^u$,  $\|\Delta_4\|_2 \leq r_2^u$, \\$\|\Delta_2\|_2 \leq \|\mathcal{P}_{\mathbb{H}'[2]}(\Delta_2)\|_2 +  \|\mathcal{P}_{\mathbb{H}'[2]^\perp}(\Delta_2)\|_2 \leq 2r_1^u$, and \\$\frac{1}{\gamma}\|\Delta_3\|_2 \leq \frac{1}{\gamma}\|\mathcal{P}_{\mathbb{H}'[3]}(\Delta_3)\|_2 +  \frac{1}{\gamma}\|\mathcal{P}_{\mathbb{H}'[3]^\perp}(\Delta_2)\|_2 \leq 2r_1^u$.
\end{proof}

\subsection{Proof of Proposition 3 (main paper)}
Below, we outline our proof strategy:
\begin{enumerate}
\item We proceed by analyzing \eqref{eqn:ConvexRelaxed_N} with additional constraints that the variables  ${L}_y$, and $\Theta_{yx}$ belong to the algebraic varieties low-rank matrices (specified by rank of $L_y^\star$, and $\Theta_{yx}^\star$) , and that the tangent spaces $T(L_y)$, $T(\Theta_{yx})$ are close to the nominal tangent spaces $T({L}_y^\star)$, and $T(\Theta_{yx}^\star)$ respectively. We prove that under suitable conditions on the minimum nonzero singular value of ${L}_y^\star$, and minimum nonzero singular value of $\Theta_{yx}^\star$, any optimum pair of variables $(\Theta, D_y, L_y)$ of this non-convex program are smooth points of the underlying varieties; that is $\rm{rank}(L_y) = \rm{rank}(L_y^\star)$ and $\rm{rank}(\Theta_{yx}) = \rm{rank}(\Theta_{yx}^\star)$. Further, we show that $L_y$ has the same inertia as $L_y^\star$ so that $L_y \succeq 0$.
\item Conclusions of the previous step imply the the variety constraints can be ``linearized" at the optimum of the non-convex program to obtain tangent-space constraints. Under the specified conditions on the regularization parameter $\lambda_n$, we prove that with high probability, the unique optimum of this ``linearized" program coincides with the global optimum of the non-convex program.
\item Finally, we show that the tangent-space constraints of the linearized program are inactive at the optimum. Therefore the optimal solution of \eqref{eqn:ConvexRelaxed_N} has the property that with high probability: $\text{rank}(\bar{L}_y) = \text{rank}(L_y^\star)$ and $\text{rank}(\bar{\Theta}_{yx}) = \text{rank}(\Theta_{yx}^\star)$. Since $\bar{L}_y \succeq 0$, we conclude that the variables $(\bar{\Theta}, \bar{D}_y, \bar{L}_y)$ are the unique optimum of \eqref{eqn:main}. 
\end{enumerate}

\subsubsection{Variety Constrained Program}
We begin by considering a variety-constrained optimization program. Letting $(M,N,P,Q) \subset \Sp^p \times \Sp^p \times \R^{p\times{q}} \times \Sp^q$, we denote $\mathcal{P}_{[2,3]}(M,N,P,Q) = (N,P) \subset \Sp^p \times \R^{p\times{q}}$. The variety-constrained optimization program is given by:
\begin{eqnarray}
({\Theta}^{\mathcal{M}}, {D}_y^{\mathcal{M}}, {L}_y^{\mathcal{M}}) = \argmin_{\substack{\Theta \in \Sp^{q+p}, ~\Theta \succ 0 \\ D_y,{L}_y \in \Sp^p}} & -\ell(\Theta;\mathcal{D}^+_n) + \lambda_n [\|{L}_y\|_{\star} + \gamma \|\Theta_{yx}\|_\star] \nonumber \\ \mathrm{s.t.} & \Theta_y = D_y - {L}_y,  (\Theta, D_y,  {L}_y) \in \mathcal{M}. \label{eqn:NConveProblem_N}
\end{eqnarray}
{\noindent}Here, the set $\mathcal{M} = \mathcal{M}_1 \cap \mathcal{M}_2$, where the sets $\mathcal{M}_1$ and $\mathcal{M}_2$ are given by:
\begin{eqnarray*}
\mathcal{M}_1 &\triangleq& \Big\{(\Theta, D_y, {L}_y) \in \Sp^{(p+q)} \times \Sp^p \times \Sp^{p} \Big | D_y \text{ is diagonal}, \hspace{.1in} \text{rank}({L}_y) \leq \text{rank}({L}_y^\star) \\ & \text{rank}&(\Theta_{yx}) \leq \text{rank}(\Theta_{yx}^\star); \|\mathcal{P}_{T({L}_y^\star)^{\perp}}({L}_y - {L}_y^\star)\|_2 \leq \frac{\lambda_n}{2\psi^2} \\
& &\|\mathcal{P}_{T(\Theta_{yx}^\star)^{\perp}}(\Theta_{yx} - \Theta_{yx}^\star)\|_2 \leq \frac{\lambda_n}{2\psi^2}\Big\} \\
 \mathcal{M}_2 &\triangleq& \Big\{(\Theta, D_y, {L}_y) \in \Sp^{(p+q)} \times \Sp^p \times \Sp^{p} \Big |
\hspace{.1in} \\ & & \Gamma_{\gamma}[\mathcal{P}_{[2,3]}(\mathbb{I}^{\star}\mathcal{F}(\Delta))] \leq   \frac{\lambda_n}{\kappa}\Big(20m(1+\frac{4}{\alpha})\psi^2+\kappa+13\Big) \\ & & \|\mathbb{I}^\star\mathcal{F}(\Delta_1,0,0,\Delta_4)\|_2 \leq \frac{8\lambda_n}{\kappa}\Big(1+\frac{4}{\alpha}\Big)\psi^2 \Big\},
\end{eqnarray*}

{\noindent}The optimization program \eqref{eqn:NConveProblem_N} is non-convex due to the rank constraints $\text{rank}(L_y) \leq \text{rank}({L}_y^\star)$ and $\text{rank}(\Theta_{yx}) \leq \text{rank}(\Theta_{yx}^\star)$ in the set $\mathcal{M}$. These constraints ensure that the matrices ${L}_y$, and $\Theta_{yx}$ belong to appropriate varieties. The constraints in $\mathcal{M}$ along $T(L_y^\star)^\perp$ and $T(\Theta_{yx}^\star)^\perp$ ensure that the tangent spaces $T({L}_y)$ and $T(\Theta_{yx})$ are ``close'' to $T({L}_y^\star)$ and $T(\Theta_{yx}^\star)$ respectively. Finally, the last conditions roughly controls the error. We begin by proving the following useful proposition:

\begin{proposition}
\label{prop:FirstResultCor}
Let $(\Theta, D_y, {L}_y)$ be a set of feasible variables of \eqref{eqn:NConveProblem_N}. Let $\Delta = (D_y - D_y^\star, {L}_y - {L}_y^\star, \Theta_{yx} - \Theta_{yx}^\star, \Theta_{x} - \Theta_{x}^\star)$ and recall that $C_1' =  \frac{2m}{\alpha\kappa}\Big(28(1+\frac{4}{\alpha})\psi^2+2\kappa+13\Big)+ \frac{1}{\psi^2}$. Then, $\Phi_{\gamma}[\Delta] \leq C_1'\lambda_n$
\label{theorem:NonconvexImp}
\end{proposition}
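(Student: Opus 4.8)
The plan is to derive the bound as a purely deterministic consequence of membership in $\mathcal{M}=\mathcal{M}_1\cap\mathcal{M}_2$ together with Assumption~1; no probabilistic input enters here. The real work is bookkeeping across the four coordinates $(D_y,L_y,\Theta_{yx},\Theta_x)$, which are treated asymmetrically (only $L_y$ and $\Theta_{yx}$ are regularized, and only tangent-space restrictions appear in $\mathcal{M}_1$). First I would fix the nominal subspace $\mathbb{H}^\star = \mathcal{W}\times T(L_y^\star)\times T(\Theta_{yx}^\star)\times\Sp^q$ and split $\Delta = \mathcal{P}_{\mathbb{H}^\star}(\Delta) + \mathcal{P}_{{\mathbb{H}^\star}^{\perp}}(\Delta)$. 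Since $\Delta_1$ is diagonal and $\Delta_4\in\Sp^q$, the orthogonal remainder is $\mathcal{P}_{{\mathbb{H}^\star}^{\perp}}(\Delta) = (0,\mathcal{P}_{T(L_y^\star)^{\perp}}(\Delta_2),\mathcal{P}_{T(\Theta_{yx}^\star)^{\perp}}(\Delta_3),0)$, and the two $\mathcal{M}_1$ inequalities bound each nonzero block by $\lambda_n/(2\psi^2)$, so $\Phi_\gamma[\mathcal{P}_{{\mathbb{H}^\star}^{\perp}}(\Delta)]\lesssim \lambda_n/\psi^2$; this supplies the $\tfrac{1}{\psi^2}$ summand of $C_1'$. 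It then remains to bound $\Phi_\gamma[\mathcal{P}_{\mathbb{H}^\star}(\Delta)]$.

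Next I would invoke Assumption~1: since $\mathbb{H}^\star\in U(\omega_y,\omega_{yx})$ (take $T_y'=T(L_y^\star)$, $T_{yx}'=T(\Theta_{yx}^\star)$, so $\rho=0$), we have $\chi(\mathbb{H}^\star,\Phi_\gamma)\ge\alpha$, i.e. $\alpha\,\Phi_\gamma[\mathcal{P}_{\mathbb{H}^\star}(\Delta)] \le \Phi_\gamma\big[\mathcal{P}_{\mathbb{H}^\star}\mathcal{F}^{\dagger}\mathbb{I}^\star\mathcal{F}\mathcal{P}_{\mathbb{H}^\star}(\Delta)\big]$. Writing $\mathcal{F}\mathcal{P}_{\mathbb{H}^\star}(\Delta) = \mathcal{F}(\Delta) - \mathcal{F}(\mathcal{P}_{{\mathbb{H}^\star}^{\perp}}\Delta)$ and using $\|\mathbb{I}^\star\|_2 = \psi^2$, that $\mathcal{P}_\mathcal{W}$ is contractive, and that projection onto a low-rank tangent space enlarges the spectral norm by at most a factor $2$, the $\mathcal{P}_{{\mathbb{H}^\star}^{\perp}}$-piece contributes $\lesssim 2m\lambda_n$. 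For the main piece, the $D_y$, $L_y$, $\Theta_{yx}$ coordinates of $\mathcal{P}_{\mathbb{H}^\star}\mathcal{F}^{\dagger}\mathbb{I}^\star\mathcal{F}(\Delta)$ are $\mathcal{P}_\mathcal{W}[\mathbb{I}^\star\mathcal{F}(\Delta)]_y$, $\mathcal{P}_{T(L_y^\star)}[\mathbb{I}^\star\mathcal{F}(\Delta)]_y$ and $\mathcal{P}_{T(\Theta_{yx}^\star)}[\mathbb{I}^\star\mathcal{F}(\Delta)]_{yx}$, all controlled (up to the factor $2$) by the first defining inequality of $\mathcal{M}_2$; the $\Theta_x$ coordinate $[\mathbb{I}^\star\mathcal{F}(\Delta)]_x$ I would control by splitting $\mathcal{F}(\Delta) = \mathcal{F}(\Delta_1,0,0,\Delta_4) + \mathcal{F}(0,\Delta_2,\Delta_3,0)$: the $x$-block of the first summand is handled by the second defining inequality of $\mathcal{M}_2$, and the $x$-block of the second is split further into its $\mathcal{M}_1$-controlled tangent-perpendicular part and a tangent part bounded by a $\psi^2$-multiple of $\Phi_\gamma[\mathcal{P}_{\mathbb{H}^\star}(\Delta)]$, which is then reabsorbed on the left — this is where the precise value $\kappa=\beta(3+\tfrac{16}{\alpha}\psi^2 m)$ is used. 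Separately, the $(D_y,\Theta_x)$ pair is pinned down most directly by applying $\chi(\mathbb{H}^\star,\Phi_\gamma)\ge\alpha$ to the test element $(\Delta_1,0,0,\Delta_4)\in\mathbb{H}^\star$, whose image $\mathcal{F}(\Delta_1,0,0,\Delta_4)=\mathrm{diag}(\Delta_1,\Delta_4)$ has $\mathbb{I}^\star$-norm bounded by the second $\mathcal{M}_2$ inequality, yielding $\max\{\|\Delta_1\|_2,\|\Delta_4\|_2\}\lesssim \tfrac{m}{\alpha\kappa}(1+\tfrac{4}{\alpha})\psi^2\lambda_n$.

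Finally I would collect the three contributions — the $\mathcal{M}_1$ remainder ($\lesssim\lambda_n/\psi^2$), the $\mathcal{M}_2$-plus-$\chi$ bound on the low-rank tangent part (of order $\tfrac{m}{\alpha\kappa}(20(1+\tfrac{4}{\alpha})\psi^2+\kappa+13)\lambda_n$, plus the $\approx 2m\lambda_n/\alpha$ perturbation from the $\mathcal{M}_1$ remainder), and the $(D_y,\Theta_x)$ bound of order $\tfrac{m}{\alpha\kappa}(1+\tfrac{4}{\alpha})\psi^2\lambda_n$ — and simplify using $m,\bar m\ge1$, $\beta\ge2$, and the lower bound on $\psi^2/\alpha$ forced by Assumption~1. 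The numerology is designed to close: $20+8=28$, and the $\kappa$- and $13$-terms line up, so that $\Phi_\gamma[\Delta]\le\Phi_\gamma[\mathcal{P}_{\mathbb{H}^\star}(\Delta)]+\Phi_\gamma[\mathcal{P}_{{\mathbb{H}^\star}^{\perp}}(\Delta)]\le C_1'\lambda_n$.

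I expect the main obstacle to be controlling $[\mathbb{I}^\star\mathcal{F}(\Delta)]_x$ in Step~2. The first $\mathcal{M}_2$ inequality only sees the $y$- and $yx$-blocks, and the crude estimate $\|[\mathbb{I}^\star\mathcal{F}(\Delta)]_x\|_2\le\psi^2\|\mathcal{F}(\Delta)\|_2$ reintroduces $\Phi_\gamma[\Delta]$ on the right with coefficient $\psi^2(3+\gamma)$, which Assumption~1 does not dominate; hence the split $\mathcal{F}(\Delta)=\mathcal{F}(\Delta_1,0,0,\Delta_4)+\mathcal{F}(0,\Delta_2,\Delta_3,0)$ together with peeling off the $\mathcal{M}_1$-controlled perpendicular parts of $\Delta_2,\Delta_3$ is essential, so that the residual self-referential term is only a small constant times $\Phi_\gamma[\mathcal{P}_{\mathbb{H}^\star}(\Delta)]$ and can be absorbed via the definition of $\kappa$. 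Everything after that — matching each numerical constant to the stated $C_1'$ — is mechanical.
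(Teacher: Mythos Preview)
Your proposal matches the paper's approach: fix $\mathbb{H}^\star=\mathcal{W}\times T(L_y^\star)\times T(\Theta_{yx}^\star)\times\Sp^q$, bound $\Phi_\gamma[\mathcal{P}_{{\mathbb{H}^\star}^\perp}(\Delta)]$ via the $\mathcal{M}_1$ constraints (giving the $1/\psi^2$ summand), bound $\Phi_\gamma[\mathcal{P}_{\mathbb{H}^\star}\mathcal{F}^\dagger\mathbb{I}^\star\mathcal{F}\mathcal{P}_{\mathbb{H}^\star}(\Delta)]$ using the two $\mathcal{M}_2$ constraints together with a $\psi^2$-perturbation from the perpendicular piece, then invoke Assumption~1 and combine. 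The numerology you describe ($20+8=28$, the $\kappa$- and $13$-terms, the factor-of-two from projecting onto a low-rank tangent space) is exactly what the paper records.

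The one substantive difference is your handling of the $\Theta_x$-coordinate. The paper's written proof is terse here: it simply asserts $\Phi_\gamma[\mathcal{F}^\dagger\mathbb{I}^\star\mathcal{F}(\Delta)]\le \frac{8m\lambda_n}{\kappa}(1+\tfrac{4}{\alpha})\psi^2 + \frac{\lambda_n}{\kappa}(20m(1+\tfrac{4}{\alpha})\psi^2+\kappa+13)$ by adding the two $\mathcal{M}_2$ bounds, without explicitly isolating or reabsorbing the $x$-block, and with no appeal to the specific form of $\kappa$ at this step. Your more elaborate split $\mathcal{F}(\Delta)=\mathcal{F}(\Delta_1,0,0,\Delta_4)+\mathcal{F}(0,\Delta_2,\Delta_3,0)$ and the proposed reabsorption address a genuine circularity that the paper glosses over, so your version is if anything more careful; but it is not a different argument, just a more explicit execution of the same one.
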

\begin{proof}
Let ${\mathbb{H}}^\star = \mathcal{W} \times T({L}_y^\star) \times T(\Theta_{yx}^\star) \times \Sp^{q}$. Then,
\begin{eqnarray*}
\Phi_{\gamma}[\mathcal{F}^{\dagger}\mathbb{I}^{\star}\mathcal{F}\mathcal{P}_{\mathbb{H}^\star}(\Delta)] &\leq& \Phi_{\gamma}[\mathcal{F}^{\dagger}\mathbb{I}^{\star}\mathcal{F}(\Delta_{})] + \Phi_{\gamma}[\mathcal{F}^{\dagger}\mathbb{I}^{\star}\mathcal{F}\mathcal{P}_{{\mathbb{H}^\star}^{\perp}}(\Delta)] \\
&\leq& \frac{8m\lambda_n}{\kappa}\Big(1+\frac{4}{\alpha}\Big)\psi^2 + \frac{\lambda_n}{\kappa}\Big(20m(1+\frac{4}{\alpha})\psi^2+\kappa+13\Big) \\&+& m\psi^2\Big(\frac{\omega_{y}\lambda_n}{2\psi^2} + \frac{\omega_{yx}\lambda_n}{2\psi^2}\Big) \\ 
&\leq& \frac{m\lambda_n}{\kappa}\Big(28(1+\frac{4}{\alpha})\psi^2+2\kappa+13\Big)
\end{eqnarray*}
Since $\Phi_{\gamma}[\mathcal{P}_{\mathbb{H}^\star}(\cdot)] \leq 2\Phi_{\gamma}(\cdot)$, we have that $\Phi_{\gamma}[\mathcal{P}_{\mathbb{H}^\star}\mathcal{F}^{\dagger}\mathbb{I}^{\star}\mathcal{F}\mathcal{P}_{\mathbb{H}^\star}(\Delta)] \leq \frac{2m\lambda_n}{\kappa}\Big(28(1+\frac{4}{\alpha})\psi^2+2\kappa+13\Big)$. Consequently, we apply Fisher Information Assumption 1 in \eqref{eqn:FirstFisherCond} (main paper) to conclude that $\Phi_{\gamma}[\mathcal{P}_{\mathbb{H}^\star}(\Delta)] \leq \frac{2m\lambda_n}{\alpha\kappa}\Big(28(1+\frac{4}{\alpha})\psi^2+2\kappa+13\Big)$. Moreover:
\begin{eqnarray*}
\Phi_{\gamma}[\Delta] \leq \Phi_{\gamma}[\mathcal{P}_{\mathbb{H}^\star}(\Delta_{})] + \Phi_{\gamma}[\mathcal{P}_{{\mathbb{H}^\star}^{\perp}}(\Delta)] &\leq& \frac{2m\lambda_n}{\alpha\kappa}\Big(28(1+\frac{4}{\alpha})\psi^2+2\kappa+13\Big)+ \frac{\lambda_n}{\psi^2}
\\ &=& C_1'\lambda_n
\end{eqnarray*}
\end{proof}
Proposition~\ref{theorem:NonconvexImp} leads to powerful implications. In particular, under additional conditions on the minimum nonzero singular values of ${L}_y^\star$ and $\Theta_{yx}^\star$, any feasible set of variables $(\Theta, D_y, {L}_y)$ of \eqref{eqn:NConveProblem_N} has two key properties: $(a)$ The variables $(\Theta_{yx}, {L}_y)$ are smooth points of the underlying varieties, $(b)$ The constraints in $\mathcal{M}$ along $T({L}_y^\star)^{\perp}$ and $T(\Theta_{yx}^\star)^{\perp}$ are locally inactive at $\Theta_{yx}$ and $L_y$. These properties, among others, are proved in the following corollary.

\begin{corollary}
\label{eqn:Corollary1}
Consider any feasible variables $( \Theta, D_y, {L}_y)$ of \eqref{eqn:NConveProblem_N}. Let $\sigma_y$ be the smallest nonzero singular value of ${L}_y^\star$ and $\sigma_{yx}$ be the smallest nonzero singular value of $\Theta_{yx}^\star$. Let $\mathbb{H}' = \mathcal{W} \times T({L}_y) \times T(\Theta_{yx}) \times \Sp^q$ and $C_{T'} = \mathcal{P}_{\mathbb{H}'^{\perp}} (0, {L}_y^\star, \Theta_{yx}^\star ,0)$. Furthermore, recall that $C_1' =  \frac{2m}{\alpha\kappa}\Big(28(1+\frac{4}{\alpha})\psi^2+2\kappa+13\Big)+ \frac{1}{\psi^2}$, $C_2' = \frac{4}{\alpha_{}} (1+\frac{2}{\kappa})$, $C_{\sigma_y}' = C_1'^2\psi^2\max\{2\kappa + 1, \frac{2}{C_2'\psi^2} + 1\}$ and $C_{\sigma_{yx}}' = C_1'^2\psi^2\max\{2\kappa + \frac{\kappa}{\gamma}, \frac{2}{C_2'\psi^2} + \frac{\kappa}{\gamma}\}$. Suppose that the following inequalities are met: $\sigma_y \geq  \frac{{m}}{\omega_y}C_{\sigma_y}\lambda_n$, \\ $\sigma_{yx} \geq  \frac{m\gamma^2}{\omega_{yx}}C_{\sigma_{yx}}'\lambda_n$. Then,
\begin{enumerate}
\item ${L}_y$ and $\Theta_{yx}$ are smooth points of their underlying varieties, i.e. $\rm{rank}({L}_y) = \rm{rank}({L}_y^\star)$, $\rm{rank}(\Theta_{yx}) = \rm{rank}(\Theta_{yx}^\star)$; Moreover ${L}_y$ has the same inertia as ${L}_y^\star$.
\item $\|\mathcal{P}_{T({L}_y^\star)^{\perp}}({L}_y - {L}_y^\star)\|_2 \leq \frac{\lambda_n\omega_{y}}{48m\psi^2}$ and $\|\mathcal{P}_{T(\Theta_{yx}^\star)^{\perp}}(\Theta_{yx} - \Theta_{yx}^\star)\|_2 \leq \frac{\lambda_n\omega_{yx}}{48m\psi^2}$
\item $\rho(T({L}_y), T({L}_y^\star)) \leq \omega_{y}$; $\rho(T({\Theta}_{yx}), T({\Theta}_{yx}^\star)) \leq \omega_{yx}$; that is, the tangent spaces at $L_y$ and $\Theta_{yx}$ is ``close" to the tangent space $L_y^\star$ and $\Theta_{yx}^\star$. 
\item $\Phi_{\gamma}[C_{T_{}'}] \leq \min\{\frac{\lambda_n}{\kappa\psi^2},C_2'\lambda_n\}$
\end{enumerate}
\end{corollary}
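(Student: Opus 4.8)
The plan is to derive all four conclusions from the a~priori spectral bound $\Phi_{\gamma}[\Delta] \le C_1'\lambda_n$ supplied by Proposition~\ref{theorem:NonconvexImp}, combined with a standard perturbation/curvature lemma for the varieties of (symmetric and general) low-rank matrices, in the spirit of \cite{Chand2012} and ultimately of \cite{Candes}. The lemma I would record is: if $N^\star$ has smallest nonzero singular value $\sigma$ and $N$ satisfies $\rk(N)\le\rk(N^\star)$ with $\|N-N^\star\|_2 \le \sigma/2$, then (a)~$\rk(N)=\rk(N^\star)$, so $N$ is a smooth point of its variety, and if $N^\star$ is symmetric positive semidefinite then so is $N$ (Weyl's inequality for eigenvalues rules out any new negative eigenvalue once the rank is pinned down); (b)~$\rho(T(N),T(N^\star))\le 2\|N-N^\star\|_2/\sigma$; and (c)~$\|\mathcal{P}_{T(N^\star)^\perp}(N-N^\star)\|_2 \le 2\|N-N^\star\|_2^2/\sigma$ and, symmetrically, $\|\mathcal{P}_{T(N)^\perp}(N^\star-N)\|_2 \le 2\|N-N^\star\|_2^2/\sigma$.

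First I would unpack $\Phi_{\gamma}[\Delta]\le C_1'\lambda_n$ into $\|\Delta_2\|_2=\|L_y-L_y^\star\|_2\le C_1'\lambda_n$ and $\|\Delta_3\|_2=\|\Theta_{yx}-\Theta_{yx}^\star\|_2\le\gamma C_1'\lambda_n$. Using $m\ge1$, $\omega_y,\omega_{yx}\in(0,1)$, $\psi^2/\alpha\ge\tfrac12$, $\beta\ge2$, and the definitions of $C_1',C_2',\kappa$, one checks from the hypotheses $\sigma_y\ge(m/\omega_y)C_{\sigma_y}'\lambda_n$ and $\sigma_{yx}\ge(m\gamma^2/\omega_{yx})C_{\sigma_{yx}}'\lambda_n$ that $C_1'\lambda_n\le\sigma_y/2$ and $\gamma C_1'\lambda_n\le\sigma_{yx}/2$, so the lemma applies with $(N,N^\star)=(L_y,L_y^\star)$ and with $(N,N^\star)=(\Theta_{yx},\Theta_{yx}^\star)$. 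Part~(a) then gives conclusion~1 (recall $L_y^\star\succeq0$). Part~(b) gives $\rho(T(L_y),T(L_y^\star))\le 2C_1'\lambda_n/\sigma_y\le\omega_y$ and $\rho(T(\Theta_{yx}),T(\Theta_{yx}^\star))\le 2\gamma C_1'\lambda_n/\sigma_{yx}\le\omega_{yx}$, which is conclusion~3, again by the sizes of $\sigma_y,\sigma_{yx}$. Part~(c) gives $\|\mathcal{P}_{T(L_y^\star)^\perp}(L_y-L_y^\star)\|_2\le 2(C_1'\lambda_n)^2/\sigma_y\le\lambda_n\omega_y/(48m\psi^2)$ and the analogous bound for $\Theta_{yx}$, which is conclusion~2 — the lower bounds on $\sigma_y,\sigma_{yx}$ were chosen precisely so that these quadratic terms clear the stated thresholds.

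For conclusion~4, I would write $C_{T'}=\big(0,\mathcal{P}_{T(L_y)^\perp}(L_y^\star),\mathcal{P}_{T(\Theta_{yx})^\perp}(\Theta_{yx}^\star),0\big)$, use $L_y\in T(L_y)$ to replace $\mathcal{P}_{T(L_y)^\perp}(L_y^\star)$ by $\mathcal{P}_{T(L_y)^\perp}(L_y^\star-L_y)$ (likewise for $\Theta_{yx}$), and apply the second half of part~(c) to get $\|\mathcal{P}_{T(L_y)^\perp}(L_y^\star)\|_2\le 2(C_1'\lambda_n)^2/\sigma_y$ and $\|\mathcal{P}_{T(\Theta_{yx})^\perp}(\Theta_{yx}^\star)\|_2\le 2(\gamma C_1'\lambda_n)^2/\sigma_{yx}$, hence $\Phi_{\gamma}[C_{T'}]=\max\{2(C_1'\lambda_n)^2/\sigma_y,\,2\gamma(C_1'\lambda_n)^2/\sigma_{yx}\}$. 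The constants $C_{\sigma_y}'$ and $C_{\sigma_{yx}}'$, which carry the factors $\max\{2\kappa+1,\tfrac{2}{C_2'\psi^2}+1\}$ and $\max\{2\kappa+\tfrac{\kappa}{\gamma},\tfrac{2}{C_2'\psi^2}+\tfrac{\kappa}{\gamma}\}$, are exactly what is needed to push this maximum below $\min\{\lambda_n/(\kappa\psi^2),\,C_2'\lambda_n\}$, yielding conclusion~4.

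The content here is not conceptually deep — it is the routine combination of spectral perturbation and the quadratic curvature of the low-rank varieties — but the work lies entirely in (i)~making the curvature lemma quantitatively sharp, with explicit constants valid on the full neighborhood $\|N-N^\star\|_2\le\sigma/2$ and in both the symmetric and non-symmetric cases, and (ii)~verifying, one conclusion at a time, that the rather intricate constants $C_{\sigma_y}'$ and $C_{\sigma_{yx}}'$ (which bundle together $\kappa$, $C_2'$, $\psi$, $m$, and $\gamma$) are simultaneously large enough for all four statements. This constant-chasing, together with the eigenvalue (as opposed to singular-value) argument needed for the inertia claim in conclusion~1, is the main obstacle.
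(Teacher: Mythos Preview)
Your proposal is correct and follows essentially the same route as the paper: both start from the bound $\Phi_\gamma[\Delta]\le C_1'\lambda_n$ of Proposition~\ref{theorem:NonconvexImp} and then invoke the standard low-rank perturbation/curvature lemmas (the paper cites Propositions~2.1--2.2 of \cite{Chand2012} and \cite{Bach2008}) to read off each of the four conclusions in turn. The only cosmetic difference is in conclusion~4: the paper explicitly introduces the perturbed minimum singular values $\sigma_y'\ge\sigma_y-\|L_y-L_y^\star\|_2$ and $\sigma_{yx}'$ before applying the curvature bound at $T(L_y)$ and $T(\Theta_{yx})$, whereas you absorb this step into the ``symmetric'' half of your part~(c) with a factor~2; both amount to the same thing.
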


\begin{proof}
We note the following relations before proving each step: $C_1' \geq \frac{1}{\psi^2} \geq \frac{1}{m\psi^2}$, $\omega_{y}, \omega_{yx} \in (0,1)$, and $\kappa \geq 6$. We also appeal to the results of  regarding perturbation analysis of the low-rank matrix variety \cite{Bach2008}.\\

1. Based on the assumptions regarding the minimum nonzero singular values of ${L}_y^\star$ and $\Theta_{yx}^\star$, one can check that:
\begin{eqnarray*}
\sigma_y &\geq& \frac{ C_1'^2\lambda_n}{\omega_{y}} m\psi^2({\kappa+1})
\geq \frac{ C_1'\lambda_n}{\omega_{y}}({2\kappa+1}) \geq 8\|L - L_y^\star\|_2\\
\sigma_{yx} &\geq& \frac{C_1'^2\lambda_n}{{\omega_{yx}}} \gamma^2m\psi^2{\Big(\frac{6\beta}{\gamma}+2\kappa\Big)}
\geq  8\|{\Theta}_{yx} - \Theta_{yx}^\star\|_2
\end{eqnarray*}

{\noindent}Combining these results and Proposition~\ref{theorem:NonconvexImp}, we conclude that ${L}_y$ and $\Theta_{yx}$ are smooth points of their respective varieties, i.e. $\text{rank}({L}_y)= \text{rank}({L}_y^\star)$, and $\text{rank}(\Theta_{yx}) = \text{rank}(\Theta_{yx}^\star)$. Furthermore, ${L}_y$ has the same inertia as ${L}_y^\star$.

2. Since $\sigma_y \geq 8\|L_y - L_y^\star\|_2$, and $\sigma_{yx} \geq 8\|\Theta_{yx} - \Theta_{yx}^\star\|_2$, we can appeal to Proposition 2.2 of \cite{Chand2012} to conclude that the constraints in $\mathcal{M}$ along $\mathcal{P}_{T({L}_y^\star)^{\perp}}$ and $\mathcal{P}_{T(\Theta_{yx}^\star)^{\perp}}$ are strictly feasible:
\begin{eqnarray*}
\|\mathcal{P}_{T({L}_y^\star)^{\perp}}({L}_y - {L}_y^\star)\|_2 &\leq& \frac{\|{L}_y - {L}_y^\star\|_2^2}{\sigma_y}
\leq \frac{\lambda_n}{48m\psi^2} \\
\|\mathcal{P}_{T(\Theta_{yx}^\star)^{\perp}}(\Theta_{yx} - \Theta_{yx}^\star)\|_2 &\leq& \frac{\|\Theta_{yx} - \Theta_{yx}^\star\|_2^2}{\sigma_{yx}}
\leq \frac{\lambda_n}{48m\psi^2}
\end{eqnarray*}

3. Appealing to Proposition 2.1 of \cite{Chand2012}, we prove that the tangent spaces $T(L_y)$ and $T(\Theta_{yx})$ are close to $T(L_y^\star)$ and $T(\Theta_{yx}^\star)$ respectively:
\begin{eqnarray*}
\rho(T({L}_y), T({L}_y^\star)) &\leq& \frac{2\| {L}_y - {L}_y^\star \|_2}{\sigma_y}
\leq \frac{2C_1'\lambda_n \omega_{}}{C_1'^2\lambda_nm{\psi}^2(2\kappa+1)}
\leq \omega_{y}\\
\rho(T({\Theta}_{yx}), T({\Theta}_{yx}^\star)) &\leq& \frac{2\| {\Theta}_{yx} - {\Theta}_{yx}^\star \|_2}{\sigma_{yx}}
\leq \frac{2C_1'\lambda_n\gamma \omega_{yx}}{\frac{C_1'^2\lambda_n}{{\omega_{yx}}} \gamma^2m\psi^2{\Big(\frac{\kappa}{\gamma}+2\kappa\Big)}}
\leq \omega_{yx}\\
\end{eqnarray*}

4. Letting $\sigma_y'$ and $\sigma_{yx}'$ be the minimum nonzero singular value of ${L}_y$ and $\Theta_{yx}$ respectively, one can check that:
\begin{eqnarray*}
\sigma_y' \geq \sigma_y - \|{L}_y - {L}_y^\star\|_2 \geq 8C_1'\lambda_n \geq 8\|{L}_y - L_y^\star\|_2 \\[.1in]
\sigma_{yx}' \geq \sigma_{yx} - \|\Theta_{yx} - \Theta_{yx}^\star\|_2 \geq 8C_1'\lambda_n\gamma \geq 8\|\Theta_{yx} - {\Theta}_{yx}^\star\|_2
\end{eqnarray*}
Once again appealing to Proposition 2.2 of \cite{Chand2012} and simple algebra, we have:
\begin{eqnarray*}
\Phi_{\gamma}(C_{T'}) &\leq& m \|\mathcal{P}_{T({L_y})^{\perp}} ({L}_y - {L}_y^\star)\|_2 + m \|\mathcal{P}_{T(\Theta_{yx})^{\perp}} (\Theta_{yx} - \Theta_{yx}^\star)\|_2 \\ &\leq& m\frac{\|{L}_y - {L}_y^\star\|_2^2}{\sigma_y'} + m\frac{\|\Theta_{yx} - \Theta_{yx}^\star\|_2^2}{\sigma_{yx}'} \leq \min \Big\{\frac{\lambda_n}{\kappa\psi^2}, C_2'\lambda_n\Big\}
\end{eqnarray*}
\end{proof}

\subsubsection{Variety Constrained Program to Tangent Space Constrained Program}
Consider any optimal solution $(\Theta^{\mathcal{M}}, D_y^{\mathcal{M}}, {L}_y^{\mathcal{M}})$ of \eqref{eqn:NConveProblem_N}. In Corollary~\ref{eqn:Corollary1}, we concluded that the variables $(\Theta^{\mathcal{M}}_{yx}, {L}_y^{\mathcal{M}})$ are smooth points of their respective varieties. As a result, the rank constraints  $\rm{rank}({L}_y) \leq \rm{rank}({L}_y^\star)$ and $\rm{rank}(\Theta_{yx}) \leq \rm{rank}(\Theta_{yx}^\star)$ can be ``linearized" to ${L}_y \in T(L_y^{\mathcal{M}})$ and $\Theta_{yx} \in T(\Theta^{\mathcal{M}}_{yx})$ respectively. Since all the remaining constraints are convex, the optimum of this linearized program is also the optimum of \eqref{eqn:NConveProblem_N}. Moreover, we once more appeal to Corollary~\ref{eqn:Corollary1} to conclude that the constraints in $\mathcal{M}$ along $\mathcal{P}_{T({L}_y^\star)^{\perp}}$ and $\mathcal{P}_{T(\Theta_{yx}^\star)^{\perp}}$ are strictly feasible at $(\Theta^{\mathcal{M}}, D_y^{\mathcal{M}}, {L}_y^{\mathcal{M}})$. As a result, these constraints are locally inactive and can be removed without changing the optimum. Therefore the constraint $(\Theta^{\mathcal{M}}, D_y^{\mathcal{M}}, {L}_y^{\mathcal{M}}) \in \M_1$ is inactive and can be removed. We now argue that the constraint $(\Theta^{\mathcal{M}}, D_y^{\mathcal{M}}, {L}_y^{\mathcal{M}}) \in \M_2$ in \eqref{eqn:NConveProblem_N} can also removed in this ``linearized" convex program. In particular, letting $\mathbb{H}_{\mathcal{M}} \triangleq \mathcal{W} \times T({L}_y^{\mathcal{M}}) \times T(\Theta^{\mathcal{M}}_{yx}) \times \Sp^q$, consider the following convex optimization program:
\begin{eqnarray}
(\tilde{\Theta}, \tilde{D}_y, \tilde{L}_y) = \argmin_{\substack{\Theta \in \Sp^{q+p}, ~\Theta \succ 0 \\ D_y,{L}_y \in \Sp^p}} & -\ell(\Theta; \mathcal{D}^+_n) + \lambda_n [\|{L_y}\|_{\star} + \gamma \|\Theta_{yx}\|_\star] \nonumber \\ \mathrm{s.t. } & \hspace{-0.021in} \Theta_y = D_y - {L}_y, ~ (D_y, {L}_y, \Theta_{yx}, \Theta_{x}) \in \mathbb{H}_{\mathcal{M}} \label{eqn:ConvexRelaxed21_N}
\end{eqnarray}

{\noindent}We prove that under conditions imposed on the regularization parameter $\lambda_n$, the pair of variables $(\Theta^{\mathcal{M}}, D_y^{\mathcal{M}}, {L}_y^{\mathcal{M}})$ is the unique optimum of \eqref{eqn:ConvexRelaxed21_N}. That is, we show that
\begin{enumerate}
\item $\Gamma_{\gamma}[\mathcal{P}_{[2,3]}(\mathbb{I}^{\star}\mathcal{F}(\Delta))] < \frac{2\lambda_n}{\kappa}\Big(20m(1+\frac{4}{\alpha})\psi^2+\kappa+13\Big)$\\[0.05in] 
\item $\|\mathbb{I}^\star\mathcal{F}(D_y^{\mathcal{M}}-D_y^\star,0,0,\Theta_{x}^\M-\Theta_{x}^\star)\|_2 < \frac{8\lambda_n}{\kappa}\Big(1+\frac{4}{\alpha}\Big)\psi^2$
\end{enumerate}

Appealing to Corollary~\ref{eqn:Corollary1} and Proposition~\ref{prop:Enbound}, we have that $\Phi_{\gamma}[\mathcal{F}^{\dagger}\mathbb{I}^{\star}\mathcal{F}C_{T_{\mathcal{M}}}]\leq \frac{\lambda_n}{\kappa}$, $\Phi_{\gamma}[C_{T_{\mathcal{M}}}] \leq C_2'\lambda_n$ and  (with high probability) $\Phi_{\gamma}[\mathcal{F}^{\dagger}E_{n}] \leq \frac{\lambda_n}{\kappa}$. Consequently, based on the bound on $\lambda_n$ in assumption of Theorem~\ref{proposition:lambdaBound}, it is straightforward to show that $r_1^u \leq \min\Big\{\frac{1}{4C'}, \frac{\alpha}{32\max\{1+\frac{\kappa}{2},\frac{\alpha}{8}\}^2m{\psi}C'^2}\Big\}$. We further have that $r_2^u \leq \frac{2\lambda_n}{\kappa}\Big(1+\frac{4}{\alpha}\Big)$. Hence by Proposition 2 (main paper), $\|D_y^{\mathcal{M}}-D_y^\star\|_2 \leq r_2^u$ and $\|\Theta_x^{\mathcal{M}}-\Theta_x^\star\|_2 \leq r_2^u$. Thus, $\|\mathbb{I}^\star\mathcal{F}(D_y^{\mathcal{M}}-D_y^\star,0,0,\Theta_{x}^\M-\Theta_{x}^\star)\|_2 \leq 2\psi^2r_2^u < \frac{8\lambda_n}{\kappa}\Big(1+\frac{4}{\alpha}\Big)\psi^2$. From Proposition 2 (main paper), we also have that $\Phi_{\gamma}[\Delta] \leq \frac{1}{2C'}$. Finally, we can appeal to Proposition 1 (main paper) and the bound on $\lambda_n$ to conclude $\Phi_{\gamma}[\mathcal{F}^{\dagger}R_{\Sigma^\star} (\mathcal{F}(\Delta_{}))] \leq {2m{\psi}C'^2 \Phi_{\gamma}[\Delta]^2} \leq {2m{\psi}C'^2C_1'^2\lambda_n^2} \leq \frac{\lambda_n}{\kappa}$. Based on the optimality condition of \eqref{eqn:ConvexRelaxed21_N}, the property that $\Phi_{\gamma}[\mathcal{P}_{\mathbb{H}_{\mathcal{M}}}( . )] \leq 2\Phi_{\gamma}(.)$, we have:
\begin{eqnarray*}
\Phi_{\gamma}[\mathcal{P}_{\mathbb{H}_{\mathcal{M}}}\mathcal{F}^{\dagger}\mathbb{I}^{\star}\mathcal{F}\mathcal{P}_{\mathbb{H}_{\mathcal{M}}}(\Delta)] &\leq& 2\lambda_n + \Phi_{\gamma}[\mathcal{P}_{\mathbb{H}_{\mathcal{M}}}\mathcal{F}^{\dagger}R_{\Sigma^\star}(\Delta)] + \Phi_{\gamma} [\mathcal{P}_{\mathbb{H}_{\mathcal{M}}} \mathcal{F}^{\dagger}\mathbb{I}^{\star}\mathcal{F}C_{T_{\mathcal{M}}}]\nonumber\\ &+& \Phi_{\gamma}[\mathcal{P}_{\mathbb{H}_{\mathcal{M}}} \mathcal{F}^{\dagger}E_{n}] \\
 &\leq& 2\lambda_n +  2\lambda_n\Big(\frac{1}{\kappa} + \frac{1}{\kappa} + \frac{1}{\kappa}\Big) \leq \frac{2\lambda_n\kappa + 6\lambda_n}{\kappa} 
\end{eqnarray*}
Recalling the notation $\mathbb{H}_\M[2,3] = L_y^\M \times \Theta_{yx}^\M$, we use the result above to conclude that:\\
\begin{eqnarray*}
\|\mathcal{P}_{\mathbb{H}_\M[2]}
\mathcal{G}^\dagger\mathbb{I}^\star\mathcal{G}\mathcal{P}_{\mathbb{H}_\mathcal{M}[2,3]}(\Delta)\|_2 &\leq& \frac{2\lambda_n\kappa + 6\lambda_n}{\kappa} +4\psi^2r_2^u \\ &\leq& \frac{8\lambda_n}{\kappa}\Big(1+\frac{4}{\alpha}\Big)\psi^2 +  \frac{2\lambda_n\kappa + 6\lambda_n}{\kappa}  \\
\frac{1}{\gamma}\|\mathcal{P}_{\mathbb{H}_\M[3]}
F^\dagger\mathbb{I}^\star\F\mathcal{P}_{\mathbb{H}_\mathcal{M}[2,3]}(\Delta)\|_2 &\leq& \frac{2\lambda_n\kappa + 3\lambda_n}{\kappa} +\frac{4\psi^2r_2^u}{\gamma}\\ &\leq& \frac{1}{\gamma}\frac{8\lambda_n}{\kappa}\Big(1+\frac{4}{\alpha}\Big)\psi^2 +  \frac{2\lambda_n\kappa + 6\lambda_n}{\kappa}
\end{eqnarray*}

Thus, 
\begin{eqnarray*}
\Gamma_{\gamma}[\mathcal{P}_{\mathbb{H}_\mathcal{M}[2,3]}\mathcal{G}^\dagger\mathbb{I}^\star\mathcal{G}\mathcal{P}_{\mathbb{H}_\mathcal{M}[2,3]}(\Delta)] &\leq& \frac{4\lambda_n{}}{\kappa}\Big(1+\frac{4}{\alpha}\Big)\psi^2 +  \frac{2\lambda_n\kappa{} + 6\lambda_n{}}{\kappa} \\
&=& \frac{\lambda_n{}}{\kappa}\Big(8m(1+\frac{4}{\alpha})\psi^2+2\kappa+6\Big)
\end{eqnarray*}

Furthermore, observing that the Fisher Information Assumption 3 in \eqref{eqn:SecondFisherCond} (main paper) implies that \\ $\Gamma_{\gamma}[\mathcal{P}_{\mathbb{H}_\mathcal{M}[2,3]^\perp} \mathcal{G}^\dagger\mathbb{I}^\star\mathcal{G}\mathcal{P}_{\mathbb{H}_\mathcal{M}[2,3]}(\Delta)] \leq \Gamma_{\gamma}[\mathcal{P}_{\mathbb{H}_\mathcal{M}[2,3]} \mathcal{G}^\dagger\mathbb{I}^\star\mathcal{G}\mathcal{P}_{\mathbb{H}_\mathcal{M}[2,3]}(\Delta)]$, we have:

\begin{eqnarray*}
\Gamma_{\gamma}(\mathcal{P}_{[2,3]}[\mathcal{G}^\dagger\mathbb{I}^{\star}\mathcal{F}(\Delta)]) &\leq& {2mr_2^u\psi^2}+ \Gamma_{\gamma}[\mathcal{P}_{\mathbb{H}_\mathcal{M}[2,3]} \mathcal{G}^\dagger\mathbb{I}^\star\mathcal{G}\mathcal{P}_{\mathbb{H}_\mathcal{M}[2,3]}(\Delta)]\\ &+&\Gamma_{\gamma}[\mathcal{P}_{\mathbb{H}_\mathcal{M}[2,3]^\perp} \mathcal{F}^\dagger\mathbb{I}^\star\mathcal{F}\mathcal{P}_{\mathbb{H}_\mathcal{M}[2,3]}(\Delta)] + \Phi_{\gamma}[\mathcal{F}^{\dagger}\mathbb{I}^\star\mathcal{F}C_{T_{\mathcal{M}}}] \\ &\leq&\frac{\lambda_n}{\kappa}\Big(20m(1+\frac{4}{\alpha})\psi^2+\kappa+13\Big) \\
&<& \frac{2\lambda_n}{\kappa}\Big(20m(1+\frac{4}{\alpha})\psi^2+\kappa+13\Big) 
\end{eqnarray*}

\subsubsection{From Tangent Space Constraints to the Original Problem}

The optimality conditions of \eqref{eqn:ConvexRelaxed21_N} suggest that there exist Lagrange multipliers $Q_{D_y} \in \mathcal{W}$,  $Q_{T_y} \in T(L_y^\mathcal{M})^\perp$, and $Q_{T_{yx}} \in T(\Theta_{yx}^\mathcal{M})^{\perp}$ such that
\begin{eqnarray*}
[\Sigma_n - {\tilde{\Theta}}^{-1}]_y + Q_{D_y} = 0; \hspace{.1in} [\Sigma_n - {\tilde{\Theta}}^{-1}]_y + Q_{T_y}  &\in& \lambda_n\partial\|\tilde{L}_y\|_\star\\[.01in]
[\Sigma_n - {\tilde{\Theta}}^{-1}]_{yx} + Q_{T_{yx}}  \in -\lambda_n\gamma\partial\|\tilde{\Theta}_{yx}\|_{\star};  \hspace{.1in} [\Sigma_n - {\tilde{\Theta}}^{-1}]_{x}  &=& 0
\end{eqnarray*}
Letting the SVD of $\tilde{L}_y$ and $\tilde{\Theta}_{yx}$ be given by $\tilde{L}_y = \bar{U}\bar{O}\bar{V}'$ and $\tilde{\Theta}_{yx} = \breve{U}\breve{O}{\breve{V}}'$ respectively, and $Z \triangleq  (0, \hspace{.1in} \lambda_n\bar{U}\bar{V}', \hspace{.1in}  -\lambda_n\gamma_{}{\breve{U}}{\breve{V}}',  \hspace{.1in} 0)$, we can restrict the optimality conditions to the space $\mathbb{H}_\mathcal{M}$ to obtain, $\mathcal{P}_{{\mathbb{H}}_\mathcal{M}}\mathcal{F}^{\dagger}(\Sigma_n - \tilde{\Theta}^{-1}) = Z$.
We proceed by proving that the variables $(\tilde{\Theta}, \tilde{D}_y, \tilde{L}_y)$ satisfy the optimality conditions of the original convex program \eqref{eqn:main}. That is:
\begin{center}
\begin{enumerate}
\item $\mathcal{P}_{\mathbb{H}_\mathcal{M}} \mathcal{F}^{\dagger}(\Sigma_n - \tilde{\Theta}^{-1}) = Z$
\item $\max \Big\{\|\mathcal{P}_{T_y'^\perp} (\Sigma_n - \tilde{\Theta}^{-1})_y\|_2, \frac{1}{\gamma}{ \|\mathcal{P}_{T_{yx}'^{\perp}} (\Sigma_n - \tilde{\Theta}^{-1})_{yx}\|_2}\Big\} < \lambda_n$
\end{enumerate}
\end{center}

{\noindent}It is clear that the first condition is satisfied since the pair $(\tilde{\Theta}, \tilde{S}_y, \tilde{L}_y)$ is optimum for \eqref{eqn:ConvexRelaxed21_N}. To prove that the second condition, we must prove that $\Gamma_{\gamma} [\mathcal{P}_{\mathbb{H}_{\mathcal{M}}^{\perp}[2,3]} \mathcal{G}^{\dagger}(\Sigma_n - \tilde{\Theta}^{-1})] < \lambda_n$. In particular, denoting $\Delta = (\tilde{D}_y - D_y^\star, \tilde{L}_y - L_y^\star, \tilde{\Theta}_{yx} - \Theta_{yx}^\star, \tilde{\Theta}_{x} - \Theta_x^\star)$ we show that:
\begin{eqnarray}
\Gamma_{\gamma}[\mathcal{P}_{\mathbb{H}_{\mathcal{M}}^{\perp}[2,3]} \mathcal{G}^{\dagger}\mathbb{I}^{\star}\mathcal{G} \mathcal{P}_{\mathbb{H}_\mathcal{M}[2,3]}(\Delta)]  &<& \lambda_n - \Phi_{\gamma}[\mathcal{P}_{\mathbb{H}_\mathcal{M}^{\perp}}\mathcal{F}^{\dagger} E_n] \\ \nonumber &-& \Phi_{\gamma}[\mathcal{P}_{\mathbb{H}_\mathcal{M}^{\perp}} \mathcal{F}^{\dagger} R_{\Sigma^\star}(\mathcal{F}(\Delta))] \\ \nonumber &-& \Phi_{\gamma}[\mathcal{P}_{\mathbb{H}_\mathcal{M}^{\perp}} \mathcal{F}^{\dagger} \mathbb{I}^{\star}\mathcal{F}C_{T_\mathcal{M}}]\\ \nonumber&-& \Gamma_{\gamma}[\mathcal{P}_{\mathbb{H}_\mathcal{M}[2,3]^{\perp}}\mathcal{G}^{\dagger}\mathbb{I}^\star\mathcal{F}(\Delta_1, 0, 0, \Delta_4)] \nonumber
 \label{eqn:result1}
\end{eqnarray}

Using the fact that $\Gamma_{\gamma}[\mathcal{P}_{{\mathbb{H}_\M[2,3]^\perp}}\mathcal{G}^\dagger(N)] \leq \Phi_{\gamma}[\mathcal{P}_{{\mathbb{H}}^\perp_\M}\mathcal{F}^\dagger(N)]$ for any matrix $N \in \Sp^{p+q}$, this would in turn imply that:
\begin{eqnarray}
\Gamma_{\gamma}[\mathcal{P}_{\mathbb{H}_{\mathcal{M}}[2,3]^\perp} \mathcal{G}^{\dagger}\mathbb{I}^{\star}\mathcal{G} \mathcal{P}_{\mathbb{H}_\mathcal{M}[2,3]}(\Delta)]  &<& \lambda_n - \Gamma_{\gamma}[\mathcal{P}_{\mathbb{H}_\mathcal{M}[2,3]^{\perp}}\mathcal{G}^{\dagger} E_n] \\ \nonumber &-& \Gamma_{\gamma}[\mathcal{P}_{\mathbb{H}_\mathcal{M}[2,3]^{\perp}} \mathcal{G}^{\dagger} R_{\Sigma^\star}(\mathcal{F}(\Delta))] \\ \nonumber&-& \Gamma_{\gamma}[\mathcal{P}_{\mathbb{H}_\mathcal{M}[2,3]^{\perp}} \mathcal{G}^{\dagger} \mathbb{I}^{\star}\mathcal{F}C_{T_\mathcal{M}}]\\ \nonumber &-& \Gamma_{\gamma}[\mathcal{P}_{\mathbb{H}_\mathcal{M}[2,3]^{\perp}}\mathcal{G}^{\dagger}\mathbb{I}^\star\mathcal{F}(\Delta_1, 0, 0, \Delta_4)]\nonumber
 \label{eqn:result2}
\end{eqnarray}

{\noindent}Indeed (26) implies that the second optimality condition is satisfied. So we focus on showing that (27) is satisfied. Using the first optimality condition, the fact that projecting into tangent spaces with respect to rank variety increase the spectral norm by at most a factor of two (i.e. $\Phi_{\gamma}[\mathcal{P}_{\mathbb{H}_\M}(\cdot)] \leq 2\Phi_{\gamma}(\cdot)$),the fact that $\Gamma_{\gamma}[\mathcal{G}^\dagger(\cdot)] \leq \Phi_{\gamma}[\mathcal{F}^\dagger(\cdot)]$, and that $\kappa = \beta(6+\frac{16\psi^2m}{\alpha})$, we have that:
\begin{eqnarray*}
\Gamma_{\gamma}[\mathcal{P}_{\mathbb{H}_\mathcal{M}[2,3]}\mathcal{G}^{\dagger}\mathbb{I}^\star\mathcal{G}\mathcal{}\mathcal{P}_{\mathbb{H}_\mathcal{M}[2,3]}(\Delta)]  &\leq& \lambda_n+ 2\Gamma_{\gamma}[\mathcal{G}^{\dagger}R_{\Sigma^\star}(\Delta)] + 2\Gamma_{\gamma}[\mathcal{G}^{\dagger}\mathbb{I}^{\star} \mathcal{F}C_{T_{M}}]\\ &+& 2\Gamma_{\gamma}[\mathcal{G}^{\dagger}E_{n}]+  \Gamma_{\gamma}[\mathcal{G}^{\dagger}\mathbb{I}^\star\mathcal{F}(\Delta_1, 0, 0, \Delta_4)] \\ &\leq&  \lambda_n+ 2\Phi_{\gamma}[\mathcal{F}^{\dagger}R_{\Sigma^\star}(\Delta)] + 2\Phi_{\gamma}[\mathcal{F}^{\dagger}\mathbb{I}^{\star} \mathcal{F}C_{T_{M}}]\\ &+& 2\Phi_{\gamma}[\mathcal{F}^{\dagger}E_{n}]+  \Phi_{\gamma}[\mathcal{F}^{\dagger}\mathbb{I}^\star\mathcal{F}(\Delta_1, 0, 0, \Delta_4)] \\
&\leq& \lambda_n^{} + \frac{\lambda_n}{\beta}
\end{eqnarray*}

{\noindent}Applying Fisher Information Assumption 2 in \eqref{eqn:SecondFisherCond} (main paper), we obtain:
\begin{eqnarray*}
\Gamma_{\gamma}[\mathcal{P}_{\mathbb{H}_\mathcal{M}[2,3]^{\perp}} \mathcal{G}^{\dagger}\mathbb{I}^{\star}\mathcal{G} \mathcal{P}_{\mathbb{H}_\mathcal{M}[2,3]}(\Delta)] &\leq& \frac{(\beta+1)\lambda_n}{\beta}\Big(1 - \frac{2}{\beta+1}\Big) = \lambda_n - \frac{\lambda_n}{\beta} \\ &<& \lambda_n - \frac{\lambda_n}{2\beta}\\
&\leq& \lambda_n - \Phi_{\gamma}[\mathcal{F}^{\dagger}R_{\Sigma}(\mathcal{F}(\Delta))] - \Phi_{\gamma} [\mathcal{F}^{\dagger}\mathbb{I}^{\star}\mathcal{F}C_{T_\mathcal{M}}] \\ &-& \Phi_{\gamma} [\mathcal{F}^{\dagger}E_{n}] - \Gamma_{\gamma}[\mathcal{G}^{\dagger}\mathbb{I}^\star\mathcal{F}(\Delta_1, 0, 0, \Delta_4)] \\
&\leq& \lambda_n - \Phi_{\gamma}[\mathcal{P}_{\mathbb{H}_\mathcal{M}^{\perp}}\mathcal{F}^{\dagger}R_{\Sigma^\star}(\mathcal{F}(\Delta))]\\ &-&  \Phi_{\gamma}[\mathcal{P}_{\mathbb{H}_\mathcal{M}^{\perp}}\mathcal{F}^{\dagger}\mathbb{I}^{\star} \mathcal{F}C_{T_\mathcal{M}}]\\ &-&  \Phi_{\gamma}[\mathcal{P}_{\mathbb{H}_\mathcal{M}^{\perp}}\mathcal{F}^{\dagger}E_n]\\ &-&  \Gamma_{\gamma}[\mathcal{P}_{\mathbb{H}_\mathcal{M}[2,3]^{\perp}}\mathcal{G}^{\dagger}\mathbb{I}^\star\mathcal{F}(\Delta_1, 0, 0, \Delta_4)]
\end{eqnarray*}
Here, we used the fact that $\|\mathcal{P}_{T^\perp}(.)\|_2 \leq \|.\|_2$ for a tangent space $T$ of the low-rank matrix variety.

\subsection{\textit{Proof of Proposition 4 (main paper)}}
We must study the rate of convergence of the sample covariance matrix to the population covariance matrix. The following result from \cite{Davidson} plays a key role in obtaining this result.
\begin{proposition}
\label{Proposition}
Given natural numbers $n,p$ with $p \leq n$, Let $\Gamma$ be a $p \times n$ matrix with i.i.d Gaussian entries that have zero-mean and variance $\frac{1}{n}$. Then the largest and smallest singular values $\sigma_1(\Gamma)$ and $\sigma_p(\Gamma)$ of $\Gamma$ are such that:
\begin{eqnarray*}
\max\Bigg\{\text{Prob}[\sigma_{1}(\Gamma) \leq 1 + \sqrt{\frac{p}{n}} + t], \text{Prob}[\sigma_{p}(\Gamma) \leq 1 - \sqrt{\frac{p}{n}} - t]\Bigg\}
\end{eqnarray*} 
\end{proposition}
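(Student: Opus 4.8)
The displayed inequality is the standard Davidson--Szarek bound for the extreme singular values of a Gaussian matrix; its intended conclusion is that the maximum on the right is at most $e^{-nt^2/2}$ for every $t \ge 0$ (the first event being $\sigma_1(\Gamma) \ge 1 + \sqrt{p/n} + t$ and the second $\sigma_p(\Gamma) \le 1 - \sqrt{p/n} - t$). The plan is to reduce it to Gaussian concentration of measure together with classical bounds on the means of the extreme singular values, and then later to feed it into the surrounding argument to control $\|\Sigma_n - \Sigma^\star\|_2$.

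First I would normalize: write $\Gamma = \tfrac{1}{\sqrt n} G$ with $G \in \R^{p \times n}$ having i.i.d.\ standard Gaussian entries, so that $\sigma_1(\Gamma) = \sigma_1(G)/\sqrt n$ and $\sigma_p(\Gamma) = \sigma_p(G)/\sqrt n$; it then suffices to control the upper tail of $\sigma_1(\Gamma)$ and the lower tail of $\sigma_p(\Gamma)$. Viewing these as functions of the $\R^{pn}$-valued standard Gaussian vector $\mathrm{vec}(G)$, Weyl's perturbation inequality $|\sigma_i(G) - \sigma_i(G')| \le \|G - G'\|_2 \le \|G - G'\|_F$ shows that each of $\sigma_1(\Gamma)$ and $\sigma_p(\Gamma)$ is a $\tfrac{1}{\sqrt n}$-Lipschitz function of $\mathrm{vec}(G)$. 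Borell's Gaussian concentration inequality then gives, for any $\tfrac{1}{\sqrt n}$-Lipschitz $f$ of a standard Gaussian vector and any $t \ge 0$, $\mathrm{Prob}[f \ge \mathbb{E} f + t] \le e^{-nt^2/2}$ and $\mathrm{Prob}[f \le \mathbb{E} f - t] \le e^{-nt^2/2}$. Applying the first bound to $f = \sigma_1(\Gamma)$ and the second to $f = \sigma_p(\Gamma)$ reduces the proposition to the mean estimates $\mathbb{E}\,\sigma_1(\Gamma) \le 1 + \sqrt{p/n}$ and $\mathbb{E}\,\sigma_p(\Gamma) \ge 1 - \sqrt{p/n}$.

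The mean bounds are the only genuinely non-routine step, and I would obtain them from Gordon's Gaussian minimax comparison inequality (a sharpening of Slepian's lemma) applied to the centered Gaussian process $X_{u,v} = u^T G v$ on $S^{p-1} \times S^{n-1}$, compared with the decoupled process $Y_{u,v} = u^T g + v^T h$ for independent standard Gaussians $g \in \R^p$, $h \in \R^n$. Since $\sigma_1(G) = \max_u \max_v X_{u,v}$ and, because $p \le n$, $\sigma_p(G) = \min_u \max_v X_{u,v}$, Gordon's theorem yields $\mathbb{E}\,\sigma_1(G) \le \mathbb{E}\|g\| + \mathbb{E}\|h\|$ and $\mathbb{E}\,\sigma_p(G) \ge \mathbb{E}\|h\| - \mathbb{E}\|g\|$; combining with $\mathbb{E}\|g\| \le \sqrt p$, $\mathbb{E}\|h\| \le \sqrt n$ (Jensen) and the matching near-sharp lower bounds for the Euclidean norm of a Gaussian vector gives $\mathbb{E}\,\sigma_1(G) \le \sqrt p + \sqrt n$ and $\mathbb{E}\,\sigma_p(G) \ge \sqrt n - \sqrt p$, so dividing by $\sqrt n$ produces the required bounds. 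Putting the pieces together bounds each of the two probabilities in the display by $e^{-nt^2/2}$, hence also their maximum. Finally I would record how this feeds the proof of Proposition~\ref{prop:Enbound}: applying the estimate to the whitened sample matrix $\smash{{\Sigma^\star}^{-1/2}}$ times the data matrix controls $\|\Sigma_n - \Sigma^\star\|_2$ with the stated probability, and then a short deterministic computation (bounding the blockwise spectral norms entering $\mathcal{F}^\dagger$) turns this into $\Phi_\gamma[\mathcal{F}^\dagger E_n] \le \lambda_n/\kappa$ under the hypotheses on $n$ and $\lambda_n$. I expect the Gordon-comparison mean bound to be the main obstacle; everything else is standard concentration machinery.
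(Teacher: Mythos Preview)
The paper does not supply a proof of this proposition at all: it simply quotes the result from Davidson--Szarek \cite{Davidson} and then applies it. So there is no ``paper's own proof'' to compare against; your proposal is correct and is in fact essentially the standard argument behind the cited result (Lipschitz concentration for the extreme singular values combined with Gordon's comparison to bound the means). You also correctly diagnose the defects in the stated display---the missing tail bound $e^{-nt^2/2}$ and the wrong inequality direction in the $\sigma_1$ event---and your closing paragraph accurately describes how the paper uses the result to derive Proposition~\ref{prop:Enbound}.
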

We now proceed with proving Proposition 4 (main paper). First, note that $\Phi_{\gamma}[\mathcal{F}^{\dagger}E_n] \leq m \|\Sigma_n - \Sigma^\star\|_2$. Using Theorem~\ref{Proposition} and the fact that $\frac{\lambda_n}{m\kappa} \leq 8\psi$ and $n \geq \frac{64\kappa^2(p+q)m^2\psi^2}{\lambda_n^2}$, the following bound holds: $\text{Pr}[ m\|\Sigma_n - \Sigma^\star\|_2 \geq \frac{\lambda_n}{\kappa}] \leq 2\text{exp}\Big\{-\frac{n\lambda_n^2}{128\kappa^2m^2\psi^2}\Big\}$. Thus, $\Phi_{\gamma}[\mathcal{F}^{\dagger}E_n] \leq \frac{\lambda_n}{\kappa}$ with probability greater than  $1 - 2\text{exp}\Big\{-\frac{n\lambda_n^2}{128\kappa^2m^2\psi^2}\Big\}$

\subsection{Consistency of the Convex Program \eqref{eqn:main2} (main paper)}

In this section, we prove the consistency of convex program \eqref{eqn:main2} for estimating a factor model. We first introduce some notation. We define the linear operator:
$\mathcal{\tilde{F}}: \Sp^p \times \Sp^p \rightarrow \Sp^{p}$ and its adjoint ${\tilde{\mathcal{F}}}^{\dagger}: \Sp^{p} \rightarrow \Sp^p \times \Sp^p$ as follows:
\begin{equation}
{\tilde{\mathcal{F}}}(M, K) \triangleq M - K, \qquad \tilde{\mathcal{F}}^{\dagger}(Q) \triangleq (Q,Q)
\label{eqn:OperatorDefs2}
\end{equation}

We consider a population composite factor model \eqref{eqn:composite} $y = \A^\star{x} + \B^\star_u \zeta_u + \epsilon$ underlying a pair of random vectors $(y,x) \in \R^p \times \R^q$, with $\rk(\A^\star) = k_x$, $\B_u^\star \in \R^{p \times k_u}$, and $\colspace({\A^\star}) \cap \colspace({\B^\star_u}) = \{0\}$. As the convex relaxation \eqref{eqn:main2} is solved in the precision matrix parametrization, the conditions for our theorems are more naturally stated in terms of the joint precision matrix $\Theta^\star \in \mathbb{S}^{p+q}, ~ \Theta^\star \succ 0$ of $(y,x)$.  The algebraic aspects of the parameters underlying the factor model translate to algebraic properties of submatrices of $\Theta^\star$.  In particular, the submatrix $\Theta^\star_{yx}$ has rank equal to $k_x$, and the submatrix $\Theta^\star_y$ is decomposable as $D_y^\star - L_y^\star$ with $D^\star_y$ being diagonal and $L^\star_y \succeq 0$ having rank equal to $k_u$.  Finally, the transversality of $\colspace({\A^\star})$ and $\colspace({\B^\star_u})$ translates to the fact that $\colspace(\Theta^\star_{yx}) \cap \colspace(L_y^\star) = \{0\}$ have a transverse intersection. We consider the factor model underlying the random vector $y \in \R^p$ that is induced upon marginalization of $x$. In particular, the precision matrix of $y$ is given by $\tilde{\Theta}^\star_y = D^\star_y - [L^\star_y + \Theta^\star_{yx} (\Theta^\star_{x})^{-1} \Theta^\star_{xy}]$. To learn an accurate factor model, we seek an estimate $(\hat{\tilde{D}}_y, \hat{\tilde{L}}_y)$ from the convex program \eqref{eqn:main2} (main paper) such that $\rk(\hat{\tilde{L}}_y = \rk(L_y^\star + \Theta_{yx}^\star{\Theta_{x}^\star}^{-1}\Theta_{xy}^\star)$, and the errors $\|\hat{\tilde{D}}_y - D^\star_y\|_2,\|\hat{\tilde{L}}_y - [L^\star_y + \Theta^\star_{yx} (\Theta^\star_{x})^{-1} \Theta^\star_{xy}]\|_2$ are small. \\

Following the same reasoning as the Fisher information conditions for consistency of the convex program \eqref{eqn:main}(main paper), A natural set of conditions on the population Fisher information at $\tilde{\Theta}_y^\star$ defined as $\mathbb{I}_y^\star = (\tilde{\Theta}_y^\star)^{-1} \otimes (\tilde{\Theta}_y^\star)^{-1}$ are given by:
\begin{eqnarray}
\mathrm{Assumption~4}&:& \inf_{\mathbb{H}' \in \tilde{U}{(\tilde{\omega}_y)}}\tilde{\chi}({\mathbb{H}}', {\tilde{\Phi}}) \geq \tilde{\alpha}, ~~~ \mathrm{for~some~} \tilde{\alpha} > 0 \label{eqn:FirstFisherCondy} \\[.1in]
\mathrm{Assumption~5}&:& \inf_{\mathbb{H}' \in \tilde{U}{(\tilde{\omega}_y)}}\tilde{\Xi}({\mathbb{H}}') > 0
 \label{eqn:FirstFisherCond3y} \\[.1in]
\mathrm{Assumption~6}&:& \sup_{\mathbb{H}' \in \tilde{U}{(\tilde{\omega}_{y})}}\tilde{\varphi}({\mathbb{H}}') \leq 1-\frac{2}{\tilde{\beta}+1} ~~~ \mathrm{for~some~} \tilde{\beta} \geq 2
\label{eqn:SecondFisherCondy},
\end{eqnarray}
where,
\begin{eqnarray*}
\tilde{\chi}({\mathbb{H}}, \|.\|_{\Upsilon}) &\triangleq& \min_{\substack{Z \in {\mathbb{H}}\\ \|Z\|_\Upsilon= 1}} \|\mathcal{P}_{\mathbb{H}} \tilde{\F}^{\dagger} \mathbb{I}^\star_y\tilde{\F} \mathcal{P}_{\mathbb{H}}(Z)\|_{\Upsilon}\\
\tilde{\Xi}({\mathbb{H}}) &\triangleq& \min_{\substack{Z \in {\mathbb{H}}[2]\\ \|Z\|_2= 1}} \|\mathcal{P}_{\mathbb{H}[2]} \mathbb{I}^\star_y \mathcal{P}_{\mathbb{H}[2]}(Z)\|_{2}\\
\tilde{\varphi}(\mathbb{H}) &\triangleq& \max_{\substack{Z \in \mathbb{H}[2]\\ \|Z\|_{2} = 1}} \|\mathcal{P}_{{{\mathbb{H}}^{\perp}[2]}} \mathbb{I}^{\star}_y\mathcal{P}_{{{\mathbb{H}}[2]}} (\mathcal{P}_{{{\mathbb{H}[2]}}} \mathbb{I}^{\star}_y\mathcal{P}_{{{\mathbb{H}[2]}}})^{-1}(Z)\|_{2} \\
\tilde{U}{(\tilde{\omega}_y)} &\triangleq& \Big\{\mathcal{W} \times T' ~|~ \rho({{T'}}, T({L_{y}^\star + \Theta_{yx}^\star(\Theta_{x}^\star)^{-1}\Theta_{xy}^\star})) \leq \tilde{\omega}_y\Big\}\\
\tilde{\Phi}(D, L) &\triangleq& \max\left\{\|D\|_2,\|L\|_{2}\right\}.
\end{eqnarray*}
Assumption 4 controls the gain of the Fisher information $\mathbb{I}^\star_y$ restricted to appropriate subspaces and Assumption 5 and 6 are in the spirit of irrepresentability conditions. As the variety of low-rank matrices is locally curved around $T(L_y^\star + \Theta_{yx}^\star(\Theta_{x}^\star)^{-1}\Theta_{xy}^\star)$, we control the Fisher information $\mathbb{I}^\star_y$ at nearby tangent spaces $T'$ where $\rho(T', T(L_y^\star + \Theta_{yx}^\star(\Theta_{x}^\star)^{-1}\Theta_{xy}^\star) \leq \tilde{\omega}_y$. We also note that measuring the gains of Fisher information $\mathbb{I}^\star_y$ with the norm $\tilde{\Phi}$ and $\|\cdot\|_{2}$ is natural as these are closely tied with dual norm of the regularizer $\text{trace}(\tilde{L}_y)$ in \eqref{eqn:main2}.

We present a theorem of consistency of the convex relaxation \eqref{eqn:main2} under Assumptions 4, 5 and 6. We let $\sigma$ denote the minimum nonzero singular value of $L_y^\star + \Theta_{yx}^\star(\Theta_{x}^\star)^{-1}\Theta_{xy}^\star$. The proof strategy is similar in spirit to the strategy for proving the consistency of the convex relaxation \eqref{eqn:main} and is left out for brevity. 

\begin{theorem}
\label{theorem:main}
Suppose that there exists $\tilde{\alpha} > 0$, $\tilde{\beta} \geq 2$, $\tilde{\omega}_y \in (0,1)$ so that the population Fisher information $\mathbb{I}^\star_y$ satisfies Assumptions 4, 5 and 6 in \eqref{eqn:FirstFisherCondy},\eqref{eqn:FirstFisherCond3y} and \eqref{eqn:SecondFisherCondy}. Suppose that the following conditions hold:
\begin{enumerate}
\item $n_{} \gtrsim \Big[\frac{\tilde{\beta}^2}{\tilde{\alpha}^2} \Big] (p) $
\item $\tilde{\lambda}_n \sim \frac{\tilde{\beta}}{\tilde{\alpha}}{}\sqrt{\frac{p}{n}}$
\item $\sigma \gtrsim \frac{\tilde{\beta}}{\tilde{\alpha}^5\tilde{\omega}_{y}}{}{}\tilde{\lambda}_n$ 
\end{enumerate}

Then with probability greater than $1-2\exp\Big\{-C\frac{\tilde{\alpha}}{\tilde{\beta}} n\tilde{\lambda}_n^2\Big\}$, the optimal solution $(\hat{\Theta},\hat{\tilde{D}}_y,\hat{\tilde{L}}_y)$ of \eqref{eqn:main2} with i.i.d. observations $\mathcal{D}_{n_{}} = \{y^{(i)}\}_{i = 1}^{n}$ satisfies the following properties:
\begin{enumerate}
\item rank($\hat{\tilde{L}}_y$) = rank(${L}_y^\star + \Theta_{yx}^\star(\Theta_{x}^\star)^{-1}\Theta_{xy}^\star$) \\[.005in]
\item $\|\hat{\tilde{D}}_y - D_y^\star\|_{2} \lesssim \frac{\tilde{\lambda}_n}{\tilde{\alpha}^2}$, $\|\hat{\tilde{L}}_y - L_y^\star - \Theta_{yx}^\star(\Theta_{x}^\star)^{-1}\Theta_{xy}^\star\|_{2} \lesssim \frac{\tilde{\lambda}_n}{\tilde{\alpha}^2}{}$
\end{enumerate}
\end{theorem}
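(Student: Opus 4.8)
The plan is to follow the primal--dual witness template used for Theorem~1 (see Section~\ref{section:proofs} and the appendix), specialized to the simpler parametrization in which the only unknowns are $(\tilde{D}_y,\tilde{L}_y)\in\Sp^p\times\Sp^p$ with $\tilde{\Theta}_y=\tilde{D}_y-\tilde{L}_y$. Write $\tilde{L}_y^\star\triangleq L_y^\star+\Theta_{yx}^\star(\Theta_x^\star)^{-1}\Theta_{xy}^\star$, so that the marginal precision matrix is $\tilde{\Theta}_y^\star=D_y^\star-\tilde{L}_y^\star$ with $\tilde{L}_y^\star\succeq 0$ of rank $k_x+k_u$ (using transversality of $\colspace(\Theta_{yx}^\star)$ and $\colspace(L_y^\star)$), and set $\tilde{\psi}\triangleq\|(\tilde{\Theta}_y^\star)^{-1}\|_2$ and $\Sigma_n^y\triangleq\tfrac1n\sum_i y^{(i)}{y^{(i)}}'$. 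First I would relax $\tilde{L}_y\succeq 0$ in \eqref{eqn:main2} to the nuclear norm $\|\tilde{L}_y\|_\star$ (which equals $\tr(\tilde{L}_y)$ on the PSD cone), so that standard convex analysis yields the optimality conditions: there is a dual variable $\Lambda\in\Sp^p$ with $\Lambda_{ii}=0$, $\Sigma_n^y-\hat{\tilde{\Theta}}_y^{-1}+\Lambda=0$, and $\Sigma_n^y-\hat{\tilde{\Theta}}_y^{-1}\in\tilde{\lambda}_n\,\partial\|\hat{\tilde{L}}_y\|_\star$. Since elements of $\partial\|\cdot\|_\star$ decompose with respect to the tangent space $T(\hat{\tilde{L}}_y)$, writing $\mathbb{H}=\mathcal{W}\times T(\hat{\tilde{L}}_y)$ (with $\mathcal{W}$ the fixed linear subspace of diagonal matrices) and $Z\triangleq(0,\tilde{\lambda}_n\bar{U}\bar{V}')$ for the SVD $\hat{\tilde{L}}_y=\bar{U}\bar{Q}\bar{V}'$, these conditions reduce to $\mathcal{P}_{\mathbb{H}}\tilde{\mathcal{F}}^\dagger(\Sigma_n^y-\hat{\tilde{\Theta}}_y^{-1})=Z$ together with the strict inequality $\|\mathcal{P}_{T(\hat{\tilde{L}}_y)^\perp}(\Sigma_n^y-\hat{\tilde{\Theta}}_y^{-1})\|_2<\tilde{\lambda}_n$.

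The deterministic core then reproduces Propositions~1--3 in this reduced setting. Expanding $\hat{\tilde{\Theta}}_y^{-1}=(\tilde{\Theta}_y^\star+E)^{-1}=(\tilde{\Theta}_y^\star)^{-1}+\mathbb{I}_y^\star(E)+R_{\tilde{\Sigma}^\star}(E)$ with $\mathbb{I}_y^\star=(\tilde{\Theta}_y^\star)^{-1}\otimes(\tilde{\Theta}_y^\star)^{-1}$, the analog of Proposition~1 gives $\tilde{\Phi}[\tilde{\mathcal{F}}^\dagger R_{\tilde{\Sigma}^\star}(\tilde{\mathcal{F}}(\Delta))]\lesssim\tilde{\psi}^3\,\tilde{\Phi}[\Delta]^2$ whenever $\tilde{\Phi}[\Delta]$ is small, where $\Delta=(\hat{\tilde{D}}_y-D_y^\star,\ \hat{\tilde{L}}_y-\tilde{L}_y^\star)$ and $\|\tilde{\mathcal{F}}(\Delta)\|_2\le 2\tilde{\Phi}[\Delta]$. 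Next I would introduce the variety-constrained program obtained by appending $\rk(\tilde{L}_y)\le\rk(\tilde{L}_y^\star)$ together with a neighborhood constraint in $T(\tilde{L}_y^\star)^\perp$, and form the fixed-point map $S$ on $\mathcal{W}\times T'$ (over $T'$ with $\rho(T',T(\tilde{L}_y^\star))\le\tilde{\omega}_y$) exactly as in the proof of Proposition~2. Assumption~4 ensures $\mathcal{P}_{\mathbb{H}'}\tilde{\mathcal{F}}^\dagger\mathbb{I}_y^\star\tilde{\mathcal{F}}\mathcal{P}_{\mathbb{H}'}$ is invertible with minimum gain $\ge\tilde{\alpha}$, so $S$ is well-defined; Brouwer's fixed-point theorem applied to a ball of radius $\sim\tilde{\lambda}_n/\tilde{\alpha}$ yields $\tilde{\Phi}[\Delta]\lesssim\tilde{\lambda}_n/\tilde{\alpha}^2$, provided the noise term $\tilde{\Phi}[\tilde{\mathcal{F}}^\dagger E_n]$ (with $E_n=\tilde{\Sigma}^\star-\Sigma_n^y$) and the curvature term $\tilde{\Phi}[\mathcal{P}_{T'^\perp}(\tilde{L}_y^\star)]$ are controlled.

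For algebraic consistency I would mirror the Corollary to Proposition~3: using $\sigma\gtrsim\tfrac{\tilde{\beta}}{\tilde{\alpha}^5\tilde{\omega}_y}\tilde{\lambda}_n$ together with the low-rank perturbation bounds (Propositions~2.1 and~2.2 of \cite{Chand2012}), the error estimate forces $\|\hat{\tilde{L}}_y-\tilde{L}_y^\star\|_2\le\tfrac18\sigma$, which makes $\hat{\tilde{L}}_y$ a smooth point of the rank-$(k_x+k_u)$ variety with the same inertia as $\tilde{L}_y^\star$; hence $\rk(\hat{\tilde{L}}_y)=\rk(\tilde{L}_y^\star)$ and $\hat{\tilde{L}}_y\succeq 0$, so the nuclear-norm relaxation was tight and the witness solves \eqref{eqn:main2} itself, and moreover $\rho(T(\hat{\tilde{L}}_y),T(\tilde{L}_y^\star))\le\tilde{\omega}_y$. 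Linearizing the rank constraint at $\hat{\tilde{L}}_y$ and applying $\mathcal{P}_{T(\hat{\tilde{L}}_y)^\perp}\tilde{\mathcal{F}}^\dagger$ to the identity $\mathcal{P}_{\mathbb{H}_\mathcal{M}}\tilde{\mathcal{F}}^\dagger(\Sigma_n^y-\hat{\tilde{\Theta}}_y^{-1})=Z$ (with $\mathbb{H}_\mathcal{M}=\mathcal{W}\times T(\hat{\tilde{L}}_y)$), one bounds $\mathbb{I}_y^\star(\tilde{\mathcal{F}}\Delta)$ restricted to $\mathbb{H}_\mathcal{M}$ by the error bound above and invokes Assumption~6 ($\tilde{\varphi}\le 1-\tfrac{2}{\tilde{\beta}+1}$) to transfer this estimate to $T(\hat{\tilde{L}}_y)^\perp$ with a slack strictly below $\tilde{\lambda}_n$, after subtracting the remainder and noise contributions. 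The probabilistic input is the analog of Proposition~4: by the Davidson--Szarek bound \cite{Davidson}, $n\gtrsim\tfrac{\tilde{\beta}^2}{\tilde{\alpha}^2}p$ and $\tilde{\lambda}_n\sim\tfrac{\tilde{\beta}}{\tilde{\alpha}}\sqrt{p/n}$ give $\tilde{\Phi}[\tilde{\mathcal{F}}^\dagger E_n]\le\tilde{\lambda}_n/\tilde{\kappa}$ for a suitable $\tilde{\kappa}=\tilde{\kappa}(\tilde{\alpha},\tilde{\beta},\tilde{\psi})$ with probability at least $1-2\exp(-c\,n\tilde{\lambda}_n^2)$; feeding this into the deterministic bounds closes the argument and delivers the claimed rates.

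I expect the main obstacle to be the algebraic-consistency and dual-feasibility step: one must propagate the curvature control of the low-rank variety near $\tilde{L}_y^\star$ through the entire chain of inequalities so that Assumption~6 --- an irrepresentability bound stated at the nominal tangent space $T(\tilde{L}_y^\star)$ --- still yields a \emph{strict} inequality at the perturbed tangent space $T(\hat{\tilde{L}}_y)$, while simultaneously keeping the Taylor remainder and the sampling error strictly subdominant relative to the $\tilde{\lambda}_n$ slack; this is precisely where the lower bounds on $n$, $\tilde{\lambda}_n$, and $\sigma$ must be balanced against each other. Everything else is a simplification of the Theorem~1 argument, with $m=\bar{m}=1$ (there is no $\gamma$ tradeoff here) and with the diagonal subspace $\mathcal{W}$ --- being fixed and linear --- requiring none of the incoherence or tangent-space-curvature bookkeeping that the cross-covariance block demanded there.
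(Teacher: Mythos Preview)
Your proposal is correct and matches the paper's approach exactly: the paper explicitly states that the proof strategy for this theorem ``is similar in spirit to the strategy for proving the consistency of the convex relaxation \eqref{eqn:main} and is left out for brevity,'' and you have carried out precisely this specialization of the Theorem~1 primal--dual witness argument to the $(\tilde D_y,\tilde L_y)$ parametrization with $m=\bar m=1$. One minor clarification: you characterize Assumption~6 as ``stated at the nominal tangent space $T(\tilde L_y^\star)$'' and worry about transferring it to the perturbed tangent space, but in fact Assumptions~4--6 are already stated uniformly over $\tilde U(\tilde\omega_y)$, i.e.\ over all tangent spaces within $\tilde\omega_y$ of the nominal one, so once you have shown $\rho(T(\hat{\tilde L}_y),T(\tilde L_y^\star))\le\tilde\omega_y$ the irrepresentability bound applies directly at $T(\hat{\tilde L}_y)$ without further transfer.
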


\end{document}